\documentclass[11pt,a4paper,reqno,oneside]{amsart}
\usepackage[utf8]{inputenc}
\usepackage[T1]{fontenc}
\usepackage[english]{babel}
\usepackage[a4paper]{geometry}
\usepackage{lmodern,textcomp,amsmath,amssymb,mathtools,braket,authblk,amsthm,bm,comment}
\usepackage[foot]{amsaddr}
\usepackage[shortlabels]{enumitem}

\usepackage[unicode=true,bookmarks=true,bookmarksopen=false,breaklinks=false,pdfborder={0 0 0},colorlinks=true]{hyperref}
\usepackage{xcolor}
\definecolor{cblue}{rgb}{0.16, 0.32, 0.75}
\definecolor{cred}{rgb}{0.7, 0.11, 0.11}
\hypersetup{%
	,linkcolor=cred
	,citecolor=cblue
	,urlcolor=black
}

\makeatletter
\def\@setemails{%
 \mbox{{\itshape ${}^*$Corresponding author}:\space}{\ttfamily\emails}.%
}
\makeatother

\newcommand{\hilb}{\mathcal{H}}
\newcommand{\e}{\mathrm{e}}
\newcommand{\ii}{\mathrm{i}}

\numberwithin{equation}{section}

\newtheorem{theorem}{Theorem}[section]
\newtheorem{corollary}[theorem]{Corollary}
\newtheorem{lemma}[theorem]{Lemma}
\newtheorem{proposition}[theorem]{Proposition}
\theoremstyle{definition}
\newtheorem{definition}[theorem]{Definition}
\newtheorem{hypothesis}[theorem]{Hypothesis}
\theoremstyle{remark}
\newtheorem{remark}[theorem]{Remark}
\newtheorem{example}[theorem]{Example}
\AtBeginEnvironment{example}{%
	\pushQED{\qed}%
}
\AtEndEnvironment{example}{\popQED\endexample}
\AtBeginEnvironment{remark}{%
	\pushQED{\qed}%
}
\AtEndEnvironment{remark}{\popQED\endexample}

\usepackage{orcidlink}

\newcommand\blfootnote[1]{%
  \begingroup
  \renewcommand\thefootnote{}\footnote{#1}%
  \addtocounter{footnote}{-1}%
  \endgroup
}

\DeclareMathOperator{\Dom}{Dom}

\renewcommand{\d}{\mathrm{d}}
\newcommand{\fock}{\mathcal{F}}
\newcommand{\hfrak}{\mathfrak{H}}
\newcommand{\sps}{\mathfrak{h}}

\renewcommand{\a}[1]{a\!\left(#1\right)}
\newcommand{\adag}[1]{a^\dag\!\left(#1\right)}
\newcommand{\dOmega}{\mathrm{d}\Gamma(\omega)}

\usepackage[
backend=bibtex,
style=numeric-comp,
giveninits=true,
maxbibnames=299,
natbib=true,
doi=false,
isbn=false,
url=false,
sorting=nyt
]
{biblatex}
\renewbibmacro{in:}{}
\DeclareFieldFormat[misc]{title}{``#1''\isdot}
\DeclareFieldFormat{journaltitle}{#1\isdot}
\DeclareFieldFormat[article,periodical]{volume}{\mkbibbold{#1}}
\DeclareFieldFormat{pages}{#1}
\AtEveryBibitem{%
	\clearfield{number}}

\allowdisplaybreaks

\title{Rotating-wave approximation for spin--boson models with structured fields}

\author{Aitor Balmaseda\textsuperscript{\,1}\hspace{2pt}\orcidlink{0000-0001-9624-9692}\hspace{1pt}}
\author{Davide Lonigro\textsuperscript{\,2}\hspace{2pt}\orcidlink{0000-0002-0792-8122}\hspace{1pt}}
\author{Juan Manuel P\'erez-Pardo\textsuperscript{\,1}\hspace{2pt}\orcidlink{0000-0003-2424-8943}\hspace{1pt}}
\address{\footnotesize \textsuperscript{1}Departamento de Matem\'aticas, Universidad Carlos III de Madrid, Avda.\ de la Universidad 30, 28911 Madrid, Spain}
\address{\footnotesize \textsuperscript{2}Department of Physics, Friedrich-Alexander-Universit\"at Erlangen-N\"urnberg, Staudtstra\ss e 7, 91058 Erlangen, Germany}

\begin{document}

\maketitle
\thispagestyle{empty}

\begin{abstract}
We derive state-dependent bounds on the difference between two quantum evolutions generated by unbounded Hamiltonians sharing a common form domain. The main technical tool is a second integration by parts, performed at the level of sesquilinear forms rather than at the operator level, which removes the need for a common invariant operator domain. The resulting estimate involves the norm of the time-integrated difference of the two generators, rather than the integral of its norm, and is therefore sensitive to the averaging effects produced by fast-oscillating terms.

As an application we prove a quantitative bound on the rotating-wave approximation for spin--boson models with a structured boson field, described by an arbitrary massive dispersion relation on a general measure space and by a suitable class of form factors. The proof involves a careful analysis of the high-frequency scaling. The bound holds on a dense subspace of states, is fully explicit, and all the constants entering it depend only on the parameters of the model and not on the frequency scale, so that the approximation becomes exact in the limit of large frequency.
\end{abstract}

\noindent\blfootnote{2020 \textit{Mathematics Subject Classification}. Primary 47A07, 81Q10; Secondary 47B25, 46N50, 81V80.}
\vspace{0.25cm}

\small \noindent\textbf{Keywords}: rotating-wave approximation, spin--boson model, sesquilinear forms, unbounded operators, state-dependent bounds, time-dependent Schrödinger equation.\normalsize

\section{Introduction}

The evolution of closed quantum systems is described by the unitary group generated by a self-adjoint operator---the Hamiltonian. In the general case these operators are typically unbounded and may explicitly depend on time. In most relevant situations, however, the corresponding unitary dynamics cannot be computed exactly, and one must resort to approximations. 
Having bounds that provide quantitative estimates of the deviation between the true and the approximated dynamics is therefore of major importance. Such bounds make it possible to determine the range of validity of the approximation in terms of the parameters of the problem. For instance, one cannot expect an approximation to be valid over arbitrarily large periods of time or different approximations might be incompatible with each other if the parameters of the problem are not chosen carefully, e.g. \cite{augier2019compatibility}.

The aim of this article is to address this problem rigorously and in a general setting. We will compare the evolutions generated by two (possibly time-dependent) Hamiltonians. 
More precisely, let $\{H_1(t)\}_t$ and $\{H_2(t)\}_t$, $t\in[0,T]$, be two families of densely defined and self-adjoint operators with domains $\Dom H_j(t)$, $j=1,2$. That is, for each $t\in[0,T]$ the operator $H_j(t)$ with domain $\Dom H_j(t)$ is a self-adjoint operator on the Hilbert space $\hilb$. 
These families of operators, called time-dependent Hamiltonians, appear often when modelling external interactions in quantum mechanical systems. This is the case, for instance, when the strength of the coupling between the interaction and the free dynamics varies with time or when considering convenient time-dependent transformations, like the so-called interaction picture, which is the basis for the \emph{rotating-wave approximation} (RWA).
We assume that both Hamiltonians generate unique unitary propagators $U_j(t,s)$, providing the unique solutions to the Schr\"odinger equation
\begin{equation}
    \ii\frac{\mathrm{d}}{\mathrm{d}t}U_j(t,s)\Psi=H_j(t)U_j(t,s)\Psi,\qquad \Psi\in\Dom H_j(s).
\end{equation}
 
To illustrate the main ideas of the proof of the main mathematical result, Theorem~\ref{thm:bound}, let us briefly consider the case where, for each $t$, $H_j(t)$ is a bounded operator and $\|H_j(t)\|<B$ with $B$ independent of $t$. In this case the Schr\"odinger equation holds in the norm topology:
\begin{equation}
    \ii\frac{\mathrm{d}}{\mathrm{d}t}U_j(t,s)=H_j(t)U_j(t,s),
\end{equation}
and all operators are defined on the whole Hilbert space $\hilb$. A first simple bound follows from the identities
\begin{align}\label{eq:identities}
    U_2(t,s)-U_1(t,s)&=U_1(t,s)\left[U_1^\dag(t,s)U_2(t,s)-1\right]\nonumber\\
    &=U_1(t,s)\int_s^t\frac{\mathrm{d}}{\mathrm{d}\tau}U_1^\dag(\tau,s)U_2(\tau,s)\;\mathrm{d}\tau\nonumber\\
    &=-\ii\, U_1(t,s)\int_s^tU_1^\dag(\tau,s)\left[H_2(\tau)-H_1(\tau)\right]U_2(\tau,s)\;\mathrm{d}\tau,
\end{align}
where in the last step we used the Schr\"odinger equation for both propagators. Taking norms and applying the triangle inequality yields
\begin{equation}\label{eq:bound1}
    \|U_2(t,s)-U_1(t,s)\|\leq\int_s^t\|H_2(\tau)-H_1(\tau)\|\;\mathrm{d}\tau.
\end{equation}
Thus the norm of the difference between the two evolutions is controlled by the time-integrated norm of the difference between the two Hamiltonians. 
Despite being loose at large times (the left-hand side is at most $2$), this bound correctly reflects the intuition that similar Hamiltonians yield similar evolutions. The converse, however, need not hold: Hamiltonians whose difference has large norm can still generate similar evolutions if those oscillate rapidly. In other words, this bound does not capture \textit{averaging effects}. This is, for instance, the case of the RWA in which the terms neglected by the approximation are fast oscillating terms. The article \cite{burgarth2022one} shows that a second integration by parts in Eq.~\eqref{eq:identities} can provide upper estimates that also bound this kind of fast oscillating terms. This yields
\begin{align}
    U_2(t,s)-U_1(t,s)=&-\ii S_{21}(t,s)U_2(t,s)\nonumber\\&-U_1(t,s)\int_s^tU_1^\dag(\tau,s)\left[H_1(\tau)S_{21}(\tau,s)-S_{21}(\tau,s)H_2(\tau)\right]U_2(\tau,s)\;\mathrm{d}\tau,
\end{align}
where $S_{21}(t,s)$ is the integrated difference of the two Hamiltonians:
\begin{equation}
    S_{21}(t,s)=\int_s^t\left[H_2(\tau)-H_1(\tau)\right]\;\mathrm{d}\tau;
\end{equation}
whence
\begin{equation}\label{eq:bound2}
    \|U_2(t,s)-U_1(t,s)\|\leq\|S_{21}(t,s)\|+\int_s^t\|S_{21}(\tau,s)\|\left(\|H_1(\tau)\|+\|H_2(\tau)\|\right)\;\mathrm{d}\tau.
\end{equation}
Here the error is controlled by the \textit{norm of the integral} of $H_2(\tau)-H_1(\tau)$ rather than the integral of its norm. This distinction is crucial when fast-oscillating terms are present, since the former can be small even when the latter is not. This technique was later generalized in \cite{dey2025floquet} to incorporate higher-order averaging effects, and was shown to reproduce the Floquet--Magnus expansion whenever the latter is well defined.

Both bounds above are formulated in the operator norm topology; as such, they have two immediate shortcomings. First, they are not applicable to unbounded Hamiltonians, excluding many physically relevant models. Second, even in the bounded case, they provide only a \textit{worst-case} bound over all initial states, since
    \begin{equation}
        \|U_2(t,s)-U_1(t,s)\|=\sup_{\|\Psi\|=1}\|U_2(t,s)\Psi-U_1(t,s)\Psi\|.
    \end{equation}
Overcoming these limitations involves nontrivial mathematical challenges. First, when $H_j(t)$ is unbounded, the existence of a unitary propagator $U_j(t,s)$ is not guaranteed even for smoothly time-dependent Hamiltonians~\cite{kato1953, Kisynski1964, Simon1971, reed1975ii,  balmaseda2021schrodinger}. Second, even when such propagators exist, establishing analogues of the bounds above requires additional domain assumptions, since the operator $H_2(t)-H_1(t)$ may be defined only on a non-dense or even trivial subspace.

\subsection{Results}

We tackle this problem via a form-based approach. The idea of capturing averaging effects via a second integration by parts originates in \cite{burgarth2022one}; it was then applied to derive RWA error bounds at the operator level in \cite{burgarth2024taming,richter2026dicke}, and extended to higher-order effective Hamiltonians for bounded and unbounded operators in \cite{dey2025floquet} and \cite{burgarth2026floquet}, respectively. What these works have in common is that they operate at the operator level, either on the full Hilbert space (bounded case) or on a carefully identified common invariant domain (unbounded case). The latter conditions are restrictive and cannot always be met in concrete applications.

The novelty of the present article lies in circumventing these domain difficulties by carrying the second integration by parts to the level of sesquilinear forms, closed and semibounded instances of which are uniquely associated with self-adjoint operators, cf.\ \cite{kato2013perturbation, Davies1995, reed1981functional}.
The sesquilinear form approach provides a convenient functional-analytical setting that allows to circumvent many technical difficulties that arise at the operator level, e.g., \cite{ibort2020representation, ibort2013numerical, lopez2017finite, ibort2015self}. 

Concretely, this builds on \cite{balmaseda2023sharper}, where a state-dependent bound generalizing Eq.~\eqref{eq:bound1} was derived for Hamiltonians with constant form domain. 
This framework has proven successful in tackling non-autonomous evolution problems and their applications to quantum control \cite{balmaseda2024global, balmaseda2023global, balmaseda2021controllability}.
One considers Hamiltonians $H_1(t),H_2(t)$ associated with sesquilinear forms $h_{1,t},h_{2,t}:\hilb^+\times\hilb^+\rightarrow\mathbb{C}$, where $\hilb^+\subset\hilb$ carries a finer topology with norm $\|\cdot\|_+\geq\|\cdot\|$; each $H_j(t)$ then extends to a bounded map, that we will keep denoting with the same symbol, ${H}_j(t)\in\mathcal{B}(\hilb^+,\hilb^-)$, with $\hilb^-$ the dual of $\hilb^+$. While \cite{balmaseda2023sharper} handles unbounded operators, its bound still involves the \textit{integral of the norm} of ${H}_2(\tau)-{H}_1(\tau)$ and so misses averaging effects. We show that the second integration by parts of \cite{burgarth2022one} can be performed at the form level, yielding a bound involving the \textit{norm of the integral} ${S}_{21}(t,s)=\int_s^t[{H}_2(\tau)-{H}_1(\tau)]\,\mathrm{d}\tau$ in an appropriate functional setting. 

As an application, we derive an RWA error bound for spin--boson models with a general boson field, briefly explained below and properly introduced in Section~\ref{sec:rwa}, where the explicit bounds are given. 

Generic Hamiltonians for the coupling between a spin system and a bosonic field are the following
\begin{eqnarray}
	H&=&H_{\rm free}+\,(\sigma_+ + \sigma_-)\otimes\left(\a{f}+\adag{f}\right),\\
	H_{\rm RWA}&=&H_{\rm free}+\left(\sigma_+\otimes\a{f}+\sigma_-\otimes\adag{f}\right),
\end{eqnarray}
where $\sigma_+,\sigma_- \in \mathbb{C}^{2\times2}$ are the spin raising and lowering matrices, with $\sigma_-=\sigma_+^\dag$, $H_{\rm free}$ is the Hamiltonian governing the non-interacting dynamics and $\a{f},\adag{f}$ are the annihilation and creation operators smeared against $f$. We refer to Section~\ref{sec:prelim}, cf.\ Eqs.~\eqref{eq:spin-boson1}--\eqref{eq:spin-boson5}, for an introduction and to \cite{bratteli1997,reed1975ii} for further details. 

The Hamiltonian $H_{\rm RWA}$ is obtained from $H$ by the \textit{rotating-wave approximation} (RWA): one discards the terms $\sigma_+\otimes\adag{f}$ and $\sigma_-\otimes\a{f}$, which do not conserve the total excitation number, retaining only the two number-conserving terms. This makes the model considerably more tractable. Despite the approximation being nearly as old as quantum optics~\cite{jaynes1963,allen1987}, a rigorous and \textit{quantitative} justification was obtained only recently, in \cite{burgarth2024taming}, for the monochromatic case in which $H,H_{\rm RWA}$ reduce to the so-called Rabi~\cite{rabi1936,rabi1937} and Jaynes--Cummings~\cite{jaynes1963} models.
The main contribution of this article is a theorem that also treats the non-monochromatic setting. We allow for a general dispersion relation $\omega(k)$, in which case the eliminated terms carry a continuum of frequencies. We have obtained an abstract bound, Theorem~\ref{thm:bound}, and have applied it to treat the general case, obtaining the explicit estimate of Theorem~\ref{thm:explicit}; its scope and limitations are discussed in Section~\ref{sec:discussion}.

\subsection{Context and related work}

The RWA has played a central role in quantum optics since the earliest days of the field~\cite{jaynes1963,allen1987}: it is routinely invoked to render otherwise intractable models accessible, the quintessential example being the reduction of the quantum Rabi model~\cite{rabi1936,rabi1937} to the Jaynes--Cummings model~\cite{jaynes1963}. The approximation is standardly justified by heuristic averaging arguments~\cite{allen1987}, and it is well understood at this informal level. A rigorous quantitative theory has emerged only recently.

The key step was taken in \cite{burgarth2024taming}, where the second integration-by-parts technique described above was applied, at the operator level and on a common invariant domain, to derive explicit error bounds for the RWA of the quantum Rabi model~\cite{braak2011} (monochromatic field), yielding bounds of order $O(\alpha^{-1})$ in the large-frequency parameter $\alpha$. Those bounds are themselves state-dependent, the Hamiltonians involved being unbounded. Related techniques were developed in \cite{burgarth2022one} to unify adiabatic, Zeno, and averaging limits in a single framework, and in \cite{heib2025rwa} for systems of coupled harmonic oscillators. The approach of \cite{burgarth2024taming} was extended to the multi-spin (Dicke) setting in \cite{richter2026dicke}; while the model involves more degrees of freedom, the field remains monochromatic, so the counter-rotating terms oscillate at a single frequency and the integration-by-parts strategy applies at the operator level on a common invariant domain. 

A parallel line of work has aimed at going beyond first order. In \cite{dey2025floquet} an iterated integration-by-parts technique was used to construct effective Hamiltonians of arbitrary order for periodically driven systems in the bounded setting, recovering the Floquet--Magnus expansion with explicit error bounds at each order. 
The latter was recently extended to unbounded Hamiltonians in \cite{burgarth2026floquet} at the operator level, where higher-order corrections to the RWA were obtained on some carefully chosen domains.

More broadly, quantitative control of effective dynamics for driven quantum systems is a topic of wide interest~\cite{blanes2009,fleming2010rwa}. Dynamical decoupling of spin--boson models, which shares the same mathematical structure (comparing evolutions under two Hamiltonians in the large-frequency limit), was treated rigorously in \cite{hahn2025dd}; that analysis also covers structured, although only discrete, boson fields, and likewise requires regularity conditions on the form factor of the same nature as those imposed here. Spin--boson models with singular form factors were studied in \cite{dammoller,lonigro2022generalized,lonigro2023selfadjoint,Lill2025}; that regime lies outside the scope of the present analysis, cf.\ Hypothesis~\ref{hyp:f}. The theory of form-linear Hamiltonians developed in \cite{balmaseda2023sharper} provides a tool that we combine with a direct energy estimate to derive a fully explicit version of the RWA bound. The present article gives, to our knowledge, the first rigorous quantitative RWA error bound for spin--boson models beyond the monochromatic case.

\subsection{Outline}

The paper is organized as follows. Section~\ref{sec:general} presents the abstract framework: after fixing notation and hypotheses, we derive the main form-level identity (Proposition~\ref{prop:equality}) and the resulting state-dependent bound (Theorem~\ref{thm:bound}). Section~\ref{sec:rwa} applies this to the spin--boson model: we introduce the model, carry out the interaction-picture transformation, compute the relative action explicitly, and in §\ref{sec:rwa_bound_subsec} derive the state-dependent RWA bound Eq.~\eqref{eq:rwa_bound} and a fully explicit bound in terms of the problem parameters, Theorem~\ref{thm:explicit}, in which all constants are given in closed form using direct energy estimates. Section~\ref{sec:discussion} discusses the scope of the result, the limitations of the method---in particular the rate obtained and the regularity required of the form factor---and the questions it leaves open.

\section{State-dependent bounds: definitions and main results}\label{sec:general}

Let $\hilb$ be a Hilbert space with scalar product $\braket{\cdot,\cdot}$ and associated norm $\|\cdot\|$, and $\hilb^+$, with scalar product $\braket{\cdot,\cdot}_+$ and associated norm $\|\cdot\|_+$, a Hilbert space which is densely and continuously embedded in $\hilb$ with $\|\cdot\|\leq\|\cdot\|_+$.
Given an interval $I\subset\mathbb{R}$, we consider for every $t\in I$ two time-dependent sesquilinear forms
\begin{equation}
    h_{j,t}:\hilb^+\times\hilb^+\rightarrow\mathbb{C},\qquad(\Psi,\Phi)\in\hilb^+\times\hilb^+\mapsto h_{j,t}(\Psi,\Phi)\in\mathbb{C},\qquad j=1,2\,,
\end{equation}
satisfying the following assumptions:

\begin{hypothesis}\label{hyp:1}
The forms $h_{j,t}$ are Hermitian, i.e.
\begin{equation}\label{eq:hermitian}
    h_{j,t}(\Phi,\Psi)=\overline{h_{j,t}(\Psi,\Phi)}\qquad\text{for all }\Psi,\Phi\in\hilb^+,\;t\in I,
\end{equation}
for all $\Psi,\Phi\in\hilb^+$ the map $t\mapsto h_{j,t}(\Psi,\Phi)$ is continuous on $I$, and there exist $K_j>0$, $j=1,2$, such that
\begin{equation}\label{eq:form}
    |h_{j,t}(\Psi,\Phi)|\leq K_j\|\Psi\|_+\|\Phi\|_+\qquad\text{for all }\Psi,\Phi\in\hilb^+,\;t\in I.
\end{equation}
\end{hypothesis}
\begin{remark}
By the Riesz representation theorem, Hypothesis~\ref{hyp:1} is equivalent to the following. Let $\hilb^-$ be the dual of $\hilb^+$, with $\hilb^+\subset\hilb\subset\hilb^-$ \cite{schmudgen2012,reed1975ii, Berezanskii1968}. Then for each $j=1,2$ and $t\in I$ there is a unique operator ${H}_j(t)\in\mathcal{B}(\hilb^+,\hilb^-)$ such that
\begin{equation}
    h_{j,t}(\Psi,\Phi)=\left(\Psi,{H}_j(t)\Phi\right)_{+,-},
\end{equation}
where $(\cdot,\cdot)_{+,-}\colon \hilb^+\times\hilb^-\to\mathbb{C}$ denotes the duality pairing between $\hilb^+$ and $\hilb^-$. The optimal constant $K_j$ in~\eqref{eq:form} equals the operator norm $\|{H}_j(t)\|_{+,-}$. Condition~\eqref{eq:hermitian} says that ${H}_j(t)$ is symmetric with respect to the duality pairing.
\end{remark}

\begin{hypothesis}\label{hyp:2}
    There exist two unitary propagators $U_j(t,s)$, $j=1,2$, $t,s\in I$ having the following properties:
    \begin{itemize}
        \item[(i)] $U_j(t,s)\hilb^+\subset\hilb^+$ and
    \begin{equation}\label{eq:weak_schro}
        \ii\frac{\mathrm{d}}{\mathrm{d}t}\braket{\Psi,U_j(t,s)\Phi}=h_{j,t}\left(\Psi,U_j(t,s)\Phi\right)\qquad\text{for all }\Psi,\Phi\in\hilb^+,\;t,s\in I,
    \end{equation}
    or equivalently
    \begin{equation}
    \ii\frac{\mathrm{d}}{\mathrm{d}t}U_j(t,s)={H}_j(t)U_j(t,s)
\end{equation}
holds in $\mathcal{B}(\hilb^+,\hilb^-)$;
\item[(ii)] there exists a subspace $\mathcal{D}\subset\hilb^+$, dense in $\hilb$, such that
\begin{itemize}
    \item[(a)] $U_j(t,s)\mathcal{D}\subset\mathcal{D}$ for all $j=1,2$ and $t,s\in I$;
    \item[(b)] ${H}_j(\tau)\mathcal{D}\subset\hilb^+$ for all $j=1,2$ and $\tau\in I$;
    \item[(c)] for every $\Psi\in\mathcal{D}$ and all $j,k=1,2$ the map
    \begin{equation}
        (\tau,t,s)\in I\times I\times I\longmapsto {H}_j(\tau)U_k(t,s)\Psi\in\hilb^+
    \end{equation}
    is continuous, and there is a constant $M_\Psi>0$, depending on $\Psi$ but not on the remaining parameters, such that
    \begin{equation}\label{eq:uniform_orbit}
        \left\|{H}_j(\tau)U_k(t,s)\Psi\right\|_+\leq M_\Psi\qquad\text{for all }j,k=1,2\text{ and }\tau,t,s\in I.
    \end{equation}
\end{itemize}
    \end{itemize} 
\end{hypothesis}
 In the left-hand side of the second equation in (i), $U_j(t,s)$ is interpreted as a map from $\hilb^+$ to $\hilb^-$, and the derivative is understood in the corresponding topology. Note that {(a)} and {(b)} together already give ${H}_j(\tau)U_k(t,s)\mathcal{D}\subset\hilb^+$, so that the quantity in~\eqref{eq:uniform_orbit} is well defined; the content of {(c)} is that it stays bounded along the orbit of $\Psi$ under both propagators, uniformly in all the time variables.

 \begin{remark}
     Note that we have not required $h_{j,t}$ to be closed or bounded from below, so the existence of a unitary propagator satisfying Hypothesis~\ref{hyp:2} is not automatic. This is intentional: we want to cover situations where $h_{j,t}$ is not bounded from below, yet a suitable transformation guarantees the existence of the propagator.
 \end{remark}

The two hypotheses above are abstract and make no reference to any particular form of the operators. 
Hypothesis~\ref{hyp:2}(i) is a minimal requirement that ensures the well-posedness of the dynamical problem, and is satisfied in wide generality. Hypothesis~\ref{hyp:2}(ii) is of a different nature: it requires the generators to map a subspace of $\hilb^+$ back \emph{into} $\hilb^+$, rather than into $\hilb^-$, and this cannot be inferred from the boundedness of the forms alone (see Remark~\ref{rem:ii_needed} below). Verifying it typically demands higher regularity, both of the perturbation and of its time-dependence, cf.\ \cite{Kisynski1964}. The following examples illustrate how Hypothesis~\ref{hyp:2}(i) arises in practice; Hypothesis~\ref{hyp:2}(ii) is verified for the model of interest in Section~\ref{sec:rwa}, cf.\ Proposition~\ref{prop:uniform_energy_bound}. 

\begin{example}\label{ex:bilinear}
    Let $H_0$ be a nonnegative self-adjoint operator on $\hilb$, with associated nonnegative form $h_0:\hilb^+\times\hilb^+\rightarrow\mathbb{C}$ where $\hilb^+$ is the form domain of $H_0$. Let $v$ be a sesquilinear form relatively bounded with respect to $h_0$, i.e., there exist $a,b\geq0$ such that
    \begin{equation}
        |v(\Phi,\Phi)|\leq a\,h_0(\Phi,\Phi)+b\|\Phi\|^2,
    \end{equation}
    and let $\varphi$ be a continuous, piecewise-$C^1$ real-valued function on a bounded interval $I\subset\mathbb{R}$, with $\kappa:=\max_{t\in I}|\varphi(t)|$. If $\kappa<1/a$, then by the KLMN theorem~\cite{reed1975ii,schmudgen2012} the form $h(t):=h_0+\varphi(t)v$ satisfies Hypothesis~\ref{hyp:1}; moreover, as shown in \cite{balmaseda2023sharper}, it satisfies Hypothesis~\ref{hyp:2}(i) as well. Hamiltonians of this type define the so-called form-bilinear control systems, studied in \cite{balmaseda2024global, balmaseda2023global}.
\end{example}

\begin{example}\label{ex:interaction}
    Let $h_0,v$ be as in the previous example, with $0<g<1/a$. By the KLMN theorem the time-independent form $h:=h_0+gv$ defines a self-adjoint operator $H$ bounded from below, and hence a time-homogeneous propagator $U(t,s):=\e^{-\ii (t-s)H}$. Setting $s=0$ for simplicity, the \textit{interaction picture} propagator is
    \begin{equation}
        U^{\rm I}(t,0):=\e^{\ii tH_0}\e^{-\ii tH}.
    \end{equation}
This propagator is the solution of the non-autonomous Schrödinger equation generated by the time-dependent Hamiltonian $H^{\rm I}(t):=g\e^{\ii tH_0}V\e^{-\ii tH_0}$, where $V$ is the symmetric operator on $\hilb$ associated with the form $v$. Rigorously, $U^{\rm I}(t,s)$ is the solution of the non-autonomous Schrödinger equation defined in  weak form by the form $h^{\rm I}_t:\hilb^+\times\hilb^+\rightarrow\mathbb{C}$ given by
\begin{equation}
    h^{\rm I}_t(\Psi,\Phi):=gv\left(\e^{-\ii tH_0}\Psi,\e^{-\ii tH_0}\Phi\right),
\end{equation}
which is well-defined since $\e^{-\ii tH_0}\hilb^+=\hilb^+$. One checks that
\begin{align}
    |h^{\rm I}_t(\Phi,\Phi)|&\leq ag\,h_0\left(\e^{-\ii tH_0}\Phi,\e^{-\ii tH_0}\Phi\right)+bg\,\|\e^{-\ii tH_0}\Phi\|^2\nonumber\\
    &=ag\,h_0(\Phi,\Phi)+bg\|\Phi\|^2\nonumber\\
    &\leq g\max\{a,b\}\|\Phi\|^2_+,
\end{align}
so Hypothesis~\ref{hyp:1} holds. For Hypothesis~\ref{hyp:2}, note that we cannot appeal to \cite{balmaseda2023sharper} since $v$ is not assumed bounded from below. Instead, the inclusion $U^{\rm I}(t,0)\hilb^+\subset\hilb^+$ follows from
\begin{equation}
    \e^{\ii tH_0}\hilb^+\subset\hilb^+,\qquad\e^{-\ii tH}\hilb^+\subset\hilb^+,
\end{equation}
and the Schr\"odinger equation in $\mathcal{B}(\hilb^+,\hilb^-)$ follows from the chain rule applied to both factors, so that Hypothesis~\ref{hyp:2}(i) holds. The natural candidate for the subspace $\mathcal{D}$ of Hypothesis~\ref{hyp:2}(ii) is
\begin{equation}
    \left\{\Phi\in\hilb^+:{H}^{\rm I}(t)\Phi\in\hilb^+\ \text{ for all }t\in I\right\}=\left\{\Phi\in\hilb^+:\e^{\ii tH_0}V\e^{-\ii tH_0}\Phi\in\hilb^+\ \text{ for all }t\in I\right\},
\end{equation}
but establishing that this set is dense and satisfies {(a)}--{(c)} requires additional information about $V$ that the present assumptions alone do not provide; see Remark~\ref{rem:ii_needed}.
\end{example}

The two examples above illustrate how time-dependence might appear in the generators. In the first example it is explicit in the generator; in the second example it is implicitly encoded in the interaction picture. Intermediate cases can be treated similarly.

\begin{remark}\label{rem:ii_needed}
Neither example automatically verifies Hypothesis~\ref{hyp:2}(ii), and in fact none of the assumptions made so far implies it. Indeed, let $H_0$ be any nonnegative self-adjoint operator whose form domain $\Dom\left(H_0^{1/2}\right)$, taken as $\hilb^+$, is a proper subspace of $\hilb$. Pick $\chi\in\hilb\setminus\hilb^+$ and take the bounded perturbation
\begin{equation}
    v(\Phi,\Psi):=\braket{\Phi,\chi}\braket{\chi,\Psi},\qquad\text{that is}\qquad V=\ket{\chi}\!\bra{\chi} .
\end{equation}
Then $|v(\Phi,\Phi)|=|\braket{\chi,\Phi}|^2\leq\|\chi\|^2\|\Phi\|^2$, so $v$ is relatively form bounded with respect to $h_0$ with relative bound $a=0$, and the hypotheses of both examples are met; in particular Hypotheses~\ref{hyp:1} and~\ref{hyp:2}(i) hold. On the other hand $V\Psi=\braket{\chi,\Psi}\chi$ belongs to $\hilb^+$ if and only if $\braket{\chi,\Psi}=0$, since $\chi\notin\hilb^+$. The set of such $\Psi$ is a closed hyperplane of $\hilb$, hence not dense, and consequently there is \emph{no} dense subspace $\mathcal{D}$ satisfying Hypothesis~\ref{hyp:2}(ii){(b)}.

What is needed, beyond relative form boundedness, is a quantitative control of the perturbation one level up in the scale. In the application of Section~\ref{sec:rwa} this is provided by two additional structural ingredients: relative \emph{operator} boundedness of the perturbation with respect to $H_0$ (Proposition~\ref{prop:rel_bound}), and a bound on the commutator $[H_0,V]$ as a map from $\hilb^+$ to $\hilb^-$ (Lemma~\ref{lem:commutator_estimate}). Together these yield $V\Dom((H_0+1)^{3/2})\subset\hilb^+$ with a quantitative estimate (Lemma~\ref{lem:potential_estimate}), which is what makes Hypothesis~\ref{hyp:2}(ii){(b)} and~{(c)} available there. We also note that the argument used in Section~\ref{sec:rwa} to obtain the uniform bound~\eqref{eq:uniform_orbit} along the orbit relies on the conservation of the energy, and is therefore specific to the case in which the generators of the original dynamics are time-independent, the time-dependence appearing only through the passage to the interaction picture, as in Example~\ref{ex:interaction}.
\end{remark}

\begin{definition}
    Let $h_{j,t}$, $j=1,2$, satisfy Hypotheses~\ref{hyp:1} and~\ref{hyp:2}. The \textit{relative action} is the time-dependent sesquilinear form $s_{21,t,s}:\hilb^+\times\hilb^+\rightarrow\mathbb{C}$ defined by
    \begin{equation}
        s_{21,t,s}(\Psi,\Phi):=\int_s^t\left[h_{2,\tau}(\Psi,\Phi)-h_{1,\tau}(\Psi,\Phi)\right]\;\mathrm{d}\tau.
    \end{equation}
\end{definition}
As with $h_{j,t}$, the relative action is uniquely associated with an operator ${S}_{21}(t,s)$ belonging to $\mathcal{B}(\hilb^+,\hilb^-)$. Indeed, Hypothesis~\ref{hyp:1} gives at once
\begin{equation}\label{eq:S21_trivial}
    \|{S}_{21}(t,s)\|_{+,-}\leq (K_1+K_2)\,|t-s| .
\end{equation}
Explicitly, ${S}_{21}(t,s)$ can be expressed as
\begin{equation}
    {S}_{21}(t,s)=\int_s^t\left[{H}_2(\tau)-{H}_1(\tau)\right]\;\mathrm{d}\tau,
\end{equation}
and satisfies
\begin{align}
    s_{21,t,s}(\Psi,\Phi)&=\left(\Psi,{S}_{21}(t,s)\Phi\right)_{+,-}\nonumber\\
    &=\int_s^t\left(\Psi,({H}_{2}(\tau)-{H}_1(\tau))\Phi\right)_{+,-}\,\mathrm{d}\tau.
\end{align}
By definition, the relative action satisfies the following properties:
\begin{lemma}\label{lem:action}
Let $\Psi,\Phi\in\hilb^+$. Then:
\begin{itemize}
    \item[(i)] $s_{21,t,s}(\Psi,\Phi)=-s_{12,t,s}(\Psi,\Phi)$;
    \item[(ii)] $s_{21,t,s}(\Phi,\Psi)=\overline{s_{21,t,s}(\Psi,\Phi)}$;
    \item[(iii)] $\dfrac{\mathrm{d}}{\mathrm{d}t}s_{21,t,s}(\Psi,\Phi)=h_{2,t}(\Psi,\Phi)-h_{1,t}(\Psi,\Phi)=\left(\Psi,\left[{H}_2(t)-{H}_1(t)\right]\Phi\right)_{+,-}$;
\item[(iv)] $|s_{21,t,s}(\Psi,\Phi)|\leq\|{S}_{21}(t,s)\|_{+,-}\|\Psi\|_+\|\Phi\|_+$.
\end{itemize}

\end{lemma}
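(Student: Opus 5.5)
All four properties follow directly from the definition of $s_{21,t,s}$ as a Riemann integral in $\tau$ of the continuous scalar function $\tau\mapsto h_{2,\tau}(\Psi,\Phi)-h_{1,\tau}(\Psi,\Phi)$. That this integrand is continuous, and hence integrable on the compact interval between $s$ and $t$, is guaranteed by Hypothesis~\ref{hyp:1}. The plan is to verify each item in turn, using only linearity of the integral, Hermiticity of the forms, the fundamental theorem of calculus, and the identification of $s_{21,t,s}$ with the operator ${S}_{21}(t,s)\in\mathcal{B}(\hilb^+,\hilb^-)$.

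For (i), I would note that $s_{12,t,s}$ is defined by the same integral with the roles of $h_1$ and $h_2$ exchanged. The integrand therefore changes sign, and linearity of the integral gives $s_{21,t,s}=-s_{12,t,s}$.

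For (ii), I would apply \eqref{eq:hermitian} pointwise in $\tau$ to both $h_{1,\tau}$ and $h_{2,\tau}$. This gives $h_{2,\tau}(\Phi,\Psi)-h_{1,\tau}(\Phi,\Psi)=\overline{h_{2,\tau}(\Psi,\Phi)-h_{1,\tau}(\Psi,\Phi)}$. Since $s,t$ are real and complex conjugation commutes with the Riemann integral of a continuous complex-valued function over a real interval, the claim follows.

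For (iii), the integrand is continuous in $\tau$, so the fundamental theorem of calculus applied in the upper limit $t$ gives the first equality. The second equality is just the Riesz representation $h_{j,t}(\Psi,\Phi)=(\Psi,{H}_j(t)\Phi)_{+,-}$ from the Remark following Hypothesis~\ref{hyp:1}, combined with linearity of the pairing.

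For (iv), I would write $s_{21,t,s}(\Psi,\Phi)=(\Psi,{S}_{21}(t,s)\Phi)_{+,-}$. The duality pairing satisfies $|(\Psi,\xi)_{+,-}|\le\|\Psi\|_+\|\xi\|_-$, and by definition of the operator norm $\|{S}_{21}(t,s)\Phi\|_-\le\|{S}_{21}(t,s)\|_{+,-}\|\Phi\|_+$. Chaining these two estimates gives the bound.

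The only point requiring care is that ${S}_{21}(t,s)$ is well defined as an element of $\mathcal{B}(\hilb^+,\hilb^-)$. This is settled as follows: $s_{21,t,s}$ is sesquilinear, and by \eqref{eq:form} it is bounded by $(K_1+K_2)|t-s|\,\|\Psi\|_+\|\Phi\|_+$, so Riesz representation supplies ${S}_{21}(t,s)$ together with the estimate \eqref{eq:S21_trivial}. None of these steps is a genuine obstacle; the lemma is essentially definitional.
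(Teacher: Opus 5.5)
Your proposal is correct and follows the paper's proof essentially step for step: (i) comes from the definition, (ii) from Hermiticity, (iii) from the fundamental theorem of calculus using the continuity in Hypothesis~\ref{hyp:1}, and (iv) from chaining the Cauchy--Schwarz inequality for the duality pairing with the operator-norm bound. Your extra remark on the well-definedness of ${S}_{21}(t,s)$ via Riesz representation matches what the paper records in \eqref{eq:S21_trivial} just before the lemma.
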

\begin{proof}
    The first property is immediate from the definition, and the second follows from the Hermiticity assumed in Hypothesis~\ref{hyp:1}. The third is a direct consequence of the fundamental theorem of calculus. The last bound follows from
    \begin{align}
        |s_{21,t,s}(\Psi,\Phi)|&=\left|\left(\Psi,{S}_{21}(t,s)\Phi\right)_{+,-}\right|\nonumber\\
        &\leq\|\Psi\|_+\left\|{S}_{21}(t,s)\Phi\right\|_-\nonumber\\
        &\leq\|\Psi\|_+\|{S}_{21}(t,s)\|_{+,-}\|\Phi\|_+,
    \end{align}
    where the first inequality above is the canonical extension of the Cauchy--Bunyakovsky--Schwarz inequality to the duality pairing, cf.\ \cite{Berezanskii1968}.
\end{proof}

The following elementary consequence of Hypothesis~\ref{hyp:2} will be used to justify the differentiations performed in the proof of Proposition~\ref{prop:equality}. It upgrades the differentiability of $\tau\mapsto U_j(\tau,s)\Psi$ from the topology of $\hilb^-$, in which it is postulated, to that of $\hilb^+$.

\begin{lemma}\label{lem:plus_differentiable}
    Let Hypotheses~\ref{hyp:1} and~\ref{hyp:2} hold and let $\Psi\in\mathcal{D}$. Then, for $j=1,2$, the map $\tau\mapsto U_j(\tau,s)\Psi$ is differentiable from $I$ to $\hilb^+$, and
    \begin{equation}\label{eq:plus_schro}
        \ii\frac{\mathrm{d}}{\mathrm{d}\tau}U_j(\tau,s)\Psi={H}_j(\tau)U_j(\tau,s)\Psi
    \end{equation}
    holds in $\hilb^+$.
\end{lemma}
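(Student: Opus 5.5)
The plan is to integrate the equation and upgrade the topology through a Bochner integral. Fix $\Psi\in\mathcal{D}$, $s\in I$ and $j\in\{1,2\}$, and set $G(\tau):=-\ii\,{H}_j(\tau)U_j(\tau,s)\Psi$. By Hypothesis~\ref{hyp:2}(ii)(a)--(b), $G(\tau)\in\hilb^+$ for every $\tau$. By (ii)(c), with $k=j$ and $t=\tau$ on the diagonal, the map $\tau\mapsto G(\tau)\in\hilb^+$ is continuous, being the restriction of a jointly continuous map, and it is bounded by $M_\Psi$. Hence $G$ is Riemann (or Bochner) integrable in $\hilb^+$. For $t\in I$ define $W(t):=\Psi+\int_s^t G(\tau)\,\mathrm{d}\tau\in\hilb^+$. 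Since $G$ is continuous, the fundamental theorem of calculus for vector-valued functions shows that $W$ is differentiable in $\hilb^+$ with $W'(t)=G(t)$.

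Next I would show that $W(t)=U_j(t,s)\Psi$. Take any $\Phi\in\hilb^+$. By \eqref{eq:weak_schro},
\begin{equation}
\frac{\mathrm{d}}{\mathrm{d}t}\braket{\Phi,U_j(t,s)\Psi}=-\ii\,h_{j,t}\left(\Phi,U_j(t,s)\Psi\right)=-\ii\left(\Phi,{H}_j(t)U_j(t,s)\Psi\right)_{+,-}=\braket{\Phi,G(t)},
\end{equation}
where the last equality holds because ${H}_j(t)U_j(t,s)\Psi\in\hilb^+\subset\hilb$, and on $\hilb$ the duality pairing coincides with the scalar product. The right-hand side is continuous in $t$. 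Integrating from $s$ to $t$ and using $U_j(s,s)=1$ gives
\begin{equation}
\braket{\Phi,U_j(t,s)\Psi}=\braket{\Phi,\Psi}+\int_s^t\braket{\Phi,G(\tau)}\,\mathrm{d}\tau=\braket{\Phi,W(t)}.
\end{equation}
Here the integral can be moved inside the inner product because $\Phi\mapsto\braket{\Phi,\cdot}$ is continuous on $\hilb^+$, since $\|\cdot\|\le\|\cdot\|_+$. As $\hilb^+$ is dense in $\hilb$, it follows that $U_j(t,s)\Psi=W(t)$.

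Combining the two steps, $\tau\mapsto U_j(\tau,s)\Psi$ is differentiable in $\hilb^+$ with derivative $G(\tau)$, which is exactly \eqref{eq:plus_schro}.

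The only delicate point is the continuity of $G$ in the $\hilb^+$ topology, which is needed to apply the fundamental theorem of calculus there. It is not a consequence of Hypothesis~\ref{hyp:1} or of (i); it is supplied directly by the joint continuity postulated in Hypothesis~\ref{hyp:2}(ii)(c), so this is where that hypothesis enters in an essential way.
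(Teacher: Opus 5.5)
Your proposal is correct and follows essentially the same route as the paper. Both arguments build the $\hilb^+$-valued integral $\Psi+\int_s^\tau(-\ii){H}_j(\sigma)U_j(\sigma,s)\Psi\,\mathrm{d}\sigma$, which is well defined by the continuity in Hypothesis~\ref{hyp:2}(ii)(c), and then identify it with $U_j(\tau,s)\Psi$. The only difference is in that identification step: you use the scalar form \eqref{eq:weak_schro} together with density of $\hilb^+$ in $\hilb$, whereas the paper integrates the equation in $\hilb^-$ and invokes injectivity of $\hilb^+\hookrightarrow\hilb^-$; your version has the small merit of needing only the weak formulation of Hypothesis~\ref{hyp:2}(i).
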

\begin{proof}
    Set $g(\tau):=U_j(\tau,s)\Psi$ and $w(\tau):=-\ii{H}_j(\tau)U_j(\tau,s)\Psi$. By Hypothesis~\ref{hyp:2}(ii){(a)}--{(b)}, $g$ and $w$ take values in $\hilb^+$; by {(c)}, $w$ is continuous from $I$ to $\hilb^+$, so that the Bochner integral
    \begin{equation}
        \tilde g(\tau):=\Psi+\int_s^\tau w(\sigma)\,\mathrm{d}\sigma
    \end{equation}
    is well defined in $\hilb^+$ and defines a map which is differentiable from $I$ to $\hilb^+$ with $\tilde g'=w$. On the other hand, by Hypothesis~\ref{hyp:2}(i) the map $g$ is differentiable from $I$ to $\hilb^-$ with $g'=w$, whence $g(\tau)=\Psi+\int_s^\tau w(\sigma)\,\mathrm{d}\sigma$ as an identity in $\hilb^-$. Since the two integrals coincide, the continuous embedding $\hilb^+\hookrightarrow\hilb^-$ being injective, we conclude $g=\tilde g$, and~\eqref{eq:plus_schro} follows.
\end{proof}

\begin{proposition}\label{prop:equality}
        Let $h_{j,t}$, $j=1,2$, be time-dependent sesquilinear forms satisfying Hypotheses~\ref{hyp:1} and~\ref{hyp:2}. For $\Phi,\Psi\in\mathcal{D}$, cf.\ Hypothesis~\ref{hyp:2}(ii), the following identity holds:
        \begin{align}
            \ii\braket{U_1(t,s)\Phi,\left[U_2(t,s)-U_1(t,s)\right]\Psi}=&s_{21,t,s}\left(U_1(t,s)\Phi,U_2(t,s)\Psi\right)\nonumber\\
            &+\ii\int_s^t\bigg(s_{21,\tau,s}\left(U_1(\tau,s)\Phi,{H}_2(\tau)U_2(\tau,s)\Psi\right)\nonumber\\&-s_{21,\tau,s}\left({H}_1(\tau)U_1(\tau,s)\Phi,U_2(\tau,s)\Psi\right)\bigg)\;\mathrm{d}\tau.
        \end{align}
\end{proposition}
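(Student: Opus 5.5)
The plan is to differentiate a suitably chosen scalar function of $\tau$ and integrate it from $s$ to $t$, reproducing the two integrations by parts of the bounded case at the level of forms. Fix $\Phi,\Psi\in\mathcal{D}$ and write $\Phi_\tau:=U_1(\tau,s)\Phi$ and $\Psi_\tau:=U_2(\tau,s)\Psi$. By Hypothesis~\ref{hyp:2}(ii)(a) both vectors lie in $\mathcal{D}\subset\hilb^+$. By Lemma~\ref{lem:plus_differentiable} they are differentiable in $\hilb^+$, with $\ii\dot\Phi_\tau={H}_1(\tau)\Phi_\tau$ and $\ii\dot\Psi_\tau={H}_2(\tau)\Psi_\tau$. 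The key auxiliary function is
\begin{equation*}
F(\tau):=s_{21,\tau,s}(\Phi_\tau,\Psi_\tau),
\end{equation*}
which vanishes at $\tau=s$.

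\emph{First step: the left-hand side.} By unitarity, $\ii\braket{\Phi_t,\Psi_t-\Phi_t}=\ii\braket{\Phi_t,\Psi_t}-\ii\braket{\Phi,\Psi}=\ii\int_s^t\frac{\d}{\d\tau}\braket{\Phi_\tau,\Psi_\tau}\,\d\tau$. Since both curves are $C^1$ in $\hilb^+\subset\hilb$, the product rule applies. Using Hermiticity to move ${H}_1$ across the pairing, one gets
\begin{equation*}
\ii\frac{\d}{\d\tau}\braket{\Phi_\tau,\Psi_\tau}=h_{2,\tau}(\Phi_\tau,\Psi_\tau)-h_{1,\tau}(\Phi_\tau,\Psi_\tau).
\end{equation*}
This is the form version of \eqref{eq:identities}.

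\emph{Second step: differentiate $F$.} I would show that $F$ is differentiable with three contributions. The first is $\partial_\tau s_{21,\tau,s}$ at fixed vectors, given by Lemma~\ref{lem:action}(iii), which yields exactly $h_{2,\tau}-h_{1,\tau}$ evaluated on $(\Phi_\tau,\Psi_\tau)$. The other two come from differentiating the arguments. They read $s_{21,\tau,s}(\dot\Phi_\tau,\Psi_\tau)+s_{21,\tau,s}(\Phi_\tau,\dot\Psi_\tau)$, with antilinearity in the first slot. Inserting the Schrödinger equations turns them into
\begin{equation*}
\ii\, s_{21,\tau,s}({H}_1\Phi_\tau,\Psi_\tau)-\ii\, s_{21,\tau,s}(\Phi_\tau,{H}_2\Psi_\tau),
\end{equation*}
which is meaningful because ${H}_j(\tau)\Phi_\tau,{H}_j(\tau)\Psi_\tau\in\hilb^+$ by Hypothesis~\ref{hyp:2}(ii)(b). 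Integrating $F'$ from $s$ to $t$ and using $F(s)=0$ together with the first step then gives the identity after rearrangement. I will need to check the sign conventions carefully against the statement.

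\emph{Main obstacle: justifying the product rule for $F$.} The function $F$ is a trilinear-type expression jointly depending on $\tau$ through the form and both arguments. I would split the difference quotient $F(\tau+\delta)-F(\tau)$ telescopically into three pieces. The piece where only the form varies is handled by Lemma~\ref{lem:action}(iii) and continuity. The piece where only the vectors vary is controlled by Lemma~\ref{lem:action}(iv) and the bound \eqref{eq:S21_trivial}, $\|{S}_{21}\|_{+,-}\le(K_1+K_2)|\tau-s|$, combined with differentiability in $\hilb^+$ from Lemma~\ref{lem:plus_differentiable}. For the cross terms I would use the uniform-in-time bound $\|s_{21,\tau+\delta,s}-s_{21,\tau,s}\|\le(K_1+K_2)|\delta|$ and the $\hilb^+$-continuity of the orbits.

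Finally, continuity of $F'$ is needed to apply the fundamental theorem of calculus. This follows from Hypothesis~\ref{hyp:1} (continuity in $\tau$) and Hypothesis~\ref{hyp:2}(ii)(c), which makes ${H}_j(\tau)U_k(\tau,s)\Psi$ continuous in $\hilb^+$. Norm-continuity of $\tau\mapsto{S}_{21}(\tau,s)$ in $\mathcal{B}(\hilb^+,\hilb^-)$ handles the remaining dependence, so the integrand is continuous and the integral identity holds.
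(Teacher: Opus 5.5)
Your proposal is correct and follows the paper's proof essentially step by step. You first integrate the $\hilb^+$-valued product rule for $\braket{U_1(\tau,s)\Phi,U_2(\tau,s)\Psi}$, using Hermiticity, to obtain \eqref{eq:lemma5}. You then differentiate $\tau\mapsto s_{21,\tau,s}(U_1(\tau,s)\Phi,U_2(\tau,s)\Psi)$ via Lemma~\ref{lem:action}(iii) and Lemma~\ref{lem:plus_differentiable} and substitute. Your telescoping justification of the three-factor product rule, via the Lipschitz bound $(K_1+K_2)|\delta|$ on $\tau\mapsto s_{21,\tau,s}$, spells out explicitly what the paper only states briefly. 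There is one notational slip: $\ii\braket{\Phi_t,\Psi_t-\Phi_t}$ in your first step should read $\ii\braket{\Phi_t,\Psi_t-U_1(t,s)\Psi}$, which is what your use of unitarity, $\braket{U_1(t,s)\Phi,U_1(t,s)\Psi}=\braket{\Phi,\Psi}$, actually needs.
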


\begin{proof}
    First notice that all terms in the integral are well-defined: by Hypothesis~\ref{hyp:2}(ii){(a)}--{(b)} one has $U_j(\tau,s)\mathcal{D}\subset\mathcal{D}$ and ${H}_j(\tau)\mathcal{D}\subset\hilb^+$, so that ${H}_1(\tau)U_1(\tau,s)\Phi\in\hilb^+$ and ${H}_2(\tau)U_2(\tau,s)\Psi\in\hilb^+$.

    The proof follows the strategy of \cite{burgarth2022one}, adapted to the form setting. We begin by establishing the auxiliary identity~\eqref{eq:lemma5}; here the argument departs from the operator-level one, for the reason explained after~\eqref{eq:lemma5}. By Lemma~\ref{lem:plus_differentiable} the maps $\tau\mapsto U_1(\tau,s)\Phi$ and $\tau\mapsto U_2(\tau,s)\Psi$ are differentiable from $I$ to $\hilb^+$, hence in particular from $I$ to $\hilb$, with derivatives $-\ii{H}_1(\tau)U_1(\tau,s)\Phi$ and $-\ii{H}_2(\tau)U_2(\tau,s)\Psi$, both taking values in $\hilb^+$. The ordinary product rule for the scalar product of two differentiable $\hilb$-valued maps therefore gives
    \begin{equation}\label{eq:scalar_derivative}
        \frac{\mathrm{d}}{\mathrm{d}\tau}\braket{U_1(\tau,s)\Phi,U_2(\tau,s)\Psi}
        =\ii\braket{{H}_1(\tau)U_1(\tau,s)\Phi,U_2(\tau,s)\Psi}
        -\ii\braket{U_1(\tau,s)\Phi,{H}_2(\tau)U_2(\tau,s)\Psi}.
    \end{equation}
    Note that all the vectors appearing in~\eqref{eq:scalar_derivative}, and in the identities below, belong to $\hilb^+\subset\hilb$, so that every pairing occurring in this proof is an ordinary scalar product of $\hilb$ and no duality pairing is needed. Using the Hermiticity assumed in Hypothesis~\ref{hyp:1},
    \begin{equation}
        \braket{{H}_1(\tau)U_1(\tau,s)\Phi,U_2(\tau,s)\Psi}
        =\overline{h_{1,\tau}\left(U_2(\tau,s)\Psi,U_1(\tau,s)\Phi\right)}
        =\braket{U_1(\tau,s)\Phi,{H}_1(\tau)U_2(\tau,s)\Psi},
    \end{equation}
    where ${H}_1(\tau)U_2(\tau,s)\Psi\in\hilb^+$ by Hypothesis~\ref{hyp:2}(ii){(a)}--{(b)}. Hence~\eqref{eq:scalar_derivative} becomes
    \begin{equation}\label{eq:scalar_derivative2}
        \frac{\mathrm{d}}{\mathrm{d}\tau}\braket{U_1(\tau,s)\Phi,U_2(\tau,s)\Psi}
        =-\ii\braket{U_1(\tau,s)\Phi,\left[{H}_2(\tau)-{H}_1(\tau)\right]U_2(\tau,s)\Psi}.
    \end{equation}
    The right-hand side is continuous in $\tau$, by Hypothesis~\ref{hyp:2}(ii){(c)} and by the continuity of $\tau\mapsto U_1(\tau,s)\Phi$ in $\hilb^+$. Integrating~\eqref{eq:scalar_derivative2} from $s$ to $t$, and using $U_j(s,s)=I$ together with the unitarity of $U_1(t,s)$ in the form $\braket{U_1(t,s)\Phi,U_1(t,s)\Psi}=\braket{\Phi,\Psi}$, we obtain
    \begin{align}\label{eq:lemma5}
        \ii\braket{U_1(t,s)\Phi,\left[U_2(t,s)-U_1(t,s)\right]\Psi}
        =\int_s^t\braket{U_1(\tau,s)\Phi,\left[{H}_2(\tau)-{H}_1(\tau)\right]U_2(\tau,s)\Psi}\;\mathrm{d}\tau.
    \end{align}
    We stress that this argument never differentiates a propagator with respect to its second time argument, never applies a propagator to a vector of $\hilb^-$, and in fact never leaves $\hilb^+$; only the $\hilb^+$-valued differentiability of Lemma~\ref{lem:plus_differentiable} and the Hermiticity of the forms are used. For comparison, the operator-level derivation, which proceeds through the identity
    $U_2(t,s)-U_1(t,s)=U_1(t,s)\left[U_1(s,t)U_2(t,s)-I\right]$ and differentiates $\tau\mapsto U_1(s,\tau)U_2(\tau,s)$, is not available here: it would require both of those operations, neither of which follows from Hypothesis~\ref{hyp:2}.

We next carry out the form-level counterpart of the second integration by parts of \cite{burgarth2022one}, namely the differentiation of the map

\begin{equation}
\tau\longmapsto s_{21,\tau,s}\left(U_1(\tau,s)\Phi,U_2(\tau,s)\Psi\right) = \left(U_1(\tau,s)\Phi,\;{S}_{21}(\tau,s)U_2(\tau,s)\Psi\right)_{+,-}.
\end{equation}

Three sources of $\tau$-dependence appear. The dependence through the form $s_{21,\tau,s}$ is handled by Lemma~\ref{lem:action}\textit{(iii)}, which applies by the continuity assumed in Hypothesis~\ref{hyp:1}. The two arguments are differentiable in the topology of $\hilb^+$ by Lemma~\ref{lem:plus_differentiable}; this is what is needed, rather than mere differentiability in $\hilb^-$, since the derivative of the first argument has to be paired with ${S}_{21}(\tau,s)U_2(\tau,s)\Psi\in\hilb^+$, and symmetrically for the second. That the latter vector does lie in $\hilb^+$ follows from Hypothesis~\ref{hyp:2}(ii){(c)}, which moreover bounds it uniformly for $\tau$ ranging in the interval $[s,t]$ over which the integration is performed: for $s\leq\tau\leq t$,
\begin{equation}
 \left\|{S}_{21}(\tau,s)U_j(\tau,s)\Psi\right\|_+ \leq \int_s^\tau\left\|\left[{H}_2(\tau')-{H}_1(\tau')\right]U_j(\tau,s)\Psi\right\|_+\d\tau'\leq 2M_\Psi(\tau-s)\leq 2M_\Psi(t-s).
\end{equation}
Applying the product rule to the three factors just discussed, we obtain 
    \begin{align}\label{eq:lemma6}
        \frac{\mathrm{d}}{\mathrm{d}\tau}s_{21,\tau,s}\left(U_1(\tau,s)\Phi,U_2(\tau,s)\Psi\right)=&\braket{U_1(\tau,s)\Phi,\left[{H}_2(\tau)-{H}_1(\tau)\right]U_2(\tau,s)\Psi}\nonumber\\
        &+\ii\Big[s_{21,\tau,s}\left({H}_1(\tau)U_1(\tau,s)\Phi,U_2(\tau,s)\Psi\right)\nonumber\\&-s_{21,\tau,s}\left(U_1(\tau,s)\Phi,{H}_2(\tau)U_2(\tau,s)\Psi\right)\Big].
    \end{align}
    Using Eq.~\eqref{eq:lemma6} in Eq.~\eqref{eq:lemma5} proves the statement.
\end{proof}

\begin{remark}
    The operator-level analogue of the equation in \cite{burgarth2022one},
    \begin{align}
    \left(U_2(t,s)-U_1(t,s)\right)\Psi=&-\ii S_{21}(t,s)U_2(t,s)\Psi\nonumber\\&-U_1(t,s)\int_s^tU_1^\dag(\tau,s)\left[{H}_1(\tau){S}_{21}(\tau,s)-{S}_{21}(\tau,s){H}_2(\tau)\right]U_2(\tau,s)\Psi\;\mathrm{d}\tau,
\end{align}
with $\Psi\in\mathcal{D}$, does \textit{not} hold in general. The term ${S}_{21}(\tau,s){H}_2(\tau)U_2(\tau,s)\Psi$ is well defined, since ${H}_2(\tau)U_2(\tau,s)\Psi\in\hilb^+$ by Hypothesis~\ref{hyp:2}(ii) and ${S}_{21}(\tau,s)$ maps $\hilb^+$ into $\hilb^-$. The expression ${H}_1(\tau){S}_{21}(\tau,s)U_2(\tau,s)\Psi$, on the other hand, need not be: the vector ${S}_{21}(\tau,s)U_2(\tau,s)\Psi$ lies in $\hilb^+$, but ${H}_1(\tau)$ is only defined on $\hilb^+$ as a map \emph{into} $\hilb^-$, so that a further application of ${S}_{21}$, or of ${H}_1$, is not available. Making sense of the expression would require the stronger property that ${H}_1(\tau')$ map ${S}_{21}(\tau,s)U_2(\tau,s)\Psi$ back into $\hilb^+$ for every intermediate time $s\leq\tau'\leq\tau$, and this does not follow from parts~{(a)}--{(c)} of Hypothesis~\ref{hyp:2}(ii), which control ${H}_j$ only on $\mathcal{D}$ and along the orbits of the propagators. It is precisely in order to bypass this difficulty that the identity of Proposition~\ref{prop:equality} is formulated at the level of the forms, where only the duality pairing between $\hilb^+$ and $\hilb^-$ is needed.
\end{remark}

\begin{corollary}\label{cor:equality}
    Let $\Phi,\Psi\in\mathcal{D}$. Then
    \begin{align}
            \ii\braket{\Phi,\left[U_2(t,s)-U_1(t,s)\right]\Psi}=&s_{21,t,s}\left(\Phi,U_2(t,s)\Psi\right)\nonumber\\
            &+\ii\int_s^t\Big[s_{21,\tau,s}\left(U_1(\tau,t)\Phi,{H}_2(\tau)U_2(\tau,s)\Psi\right)\nonumber\\&-s_{21,\tau,s}\left({H}_1(\tau)U_1(\tau,t)\Phi,U_2(\tau,s)\Psi\right)\Big]\;\mathrm{d}\tau.
        \end{align}
\end{corollary}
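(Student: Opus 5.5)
The plan is to derive the corollary from Proposition~\ref{prop:equality} by replacing $\Phi$ there with $\Phi':=U_1(s,t)\Phi=U_1(t,s)^{-1}\Phi$, which is the standard way of stripping off the outer propagator. Hypothesis~\ref{hyp:2}(ii)(a) gives $U_1(s,t)\mathcal{D}\subset\mathcal{D}$, so $\Phi'\in\mathcal{D}$ and the proposition applies to the pair $(\Phi',\Psi)$. I will rewrite each term using the propagator (Chapman--Kolmogorov) property $U_1(\tau,s)U_1(s,t)=U_1(\tau,t)$, which I take to be part of the meaning of ``unitary propagator'' in Hypothesis~\ref{hyp:2}.

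The rewriting goes term by term.
\begin{itemize}
\item On the left-hand side, $U_1(t,s)\Phi'=\Phi$, so it becomes $\ii\braket{\Phi,[U_2(t,s)-U_1(t,s)]\Psi}$.
\item In the boundary term, $s_{21,t,s}(U_1(t,s)\Phi',U_2(t,s)\Psi)=s_{21,t,s}(\Phi,U_2(t,s)\Psi)$.
\item Inside the integral, $U_1(\tau,s)\Phi'=U_1(\tau,t)\Phi$ for every $\tau\in[s,t]$. Hence ${H}_1(\tau)U_1(\tau,s)\Phi'={H}_1(\tau)U_1(\tau,t)\Phi$, and the two integrand terms take exactly the form stated.
\end{itemize}
Everything stays in $\hilb^+$. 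Indeed, $U_1(\tau,t)\Phi\in\mathcal{D}$ by (a), so ${H}_1(\tau)U_1(\tau,t)\Phi\in\hilb^+$ by (b), and the pairings are ordinary scalar products, exactly as in the proof of the proposition.

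The only point needing care, and the main obstacle, is whether the group law $U_1(\tau,s)U_1(s,t)=U_1(\tau,t)$ is available. If Hypothesis~\ref{hyp:2} does not include it explicitly, I would derive it from uniqueness on $\mathcal{D}$, with $\chi\in\mathcal{D}$ fixed. Both $\tau\mapsto U_1(\tau,s)U_1(s,t)\chi$ and $\tau\mapsto U_1(\tau,t)\chi$ are $\hilb^+$-differentiable by Lemma~\ref{lem:plus_differentiable}, solve the same equation, and agree at $\tau=s$ provided $U_1(s,t)$ is inverse to $U_1(t,s)$. Their difference $\delta(\tau)$ then satisfies
\begin{equation*}
\frac{\mathrm{d}}{\mathrm{d}\tau}\|\delta(\tau)\|^2=2\,\mathrm{Im}\,h_{1,\tau}(\delta(\tau),\delta(\tau))=0
\end{equation*}
by Hermiticity, so $\delta\equiv0$; density of $\mathcal{D}$ then finishes the argument. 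In practice the paper treats $U_j$ as a genuine propagator, so I would simply invoke that property and note the rewriting.
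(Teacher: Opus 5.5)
Your proposal is correct and is exactly the paper's proof: apply Proposition~\ref{prop:equality} with $\Phi$ replaced by $U_1(s,t)\Phi$, which lies in $\mathcal{D}$ by Hypothesis~\ref{hyp:2}(ii)(a), and rewrite using $U_1(\tau,s)U_1(s,t)=U_1(\tau,t)$, which the paper takes as part of the notion of a unitary propagator. Your fallback uniqueness argument also works. One small correction there: the two curves already agree at $\tau=s$ using only $U_1(s,s)=I$, so the proviso that $U_1(s,t)$ be inverse to $U_1(t,s)$ is unnecessary, and that inverse property in fact follows from the group law by taking $\tau=t$.
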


\begin{proof}
    Apply Proposition~\ref{prop:equality} with $\Phi$ replaced by $U_1(s,t)\Phi$, which belongs to $\mathcal{D}$ by Hypothesis~\ref{hyp:2}(ii){(a)}.
\end{proof}

\begin{theorem}\label{thm:bound}
    Let $h_{j,t}$, $j=1,2$, be time-dependent sesquilinear forms satisfying Hypotheses~\ref{hyp:1} and~\ref{hyp:2}. For $s\in I$ and $\Phi\in\mathcal{D}$, cf.\ Hypothesis~\ref{hyp:2}(ii),
     \begin{align}\label{eq:thebound}
          \frac{1}{2} \left\|\left[U_2(t,s)-U_1(t,s)\right]\Phi\right\|^2\leq& \;\|{S}_{21}(t,s)\|_{+,-}\|U_2(t,s)\Phi\|_+\|U_1(t,s)\Phi\|_+\nonumber\\
           &+\int_s^t\|{S}_{21}(\tau,s)\|_{+,-}\bigg(\|U_2(\tau,s)\Phi\|_+\|{H}_1(\tau)U_1(\tau,s)\Phi\|_+\nonumber\\
          &\qquad\qquad\qquad\qquad+\|U_1(\tau,s)\Phi\|_+\|{H}_2(\tau)U_2(\tau,s)\Phi\|_+\bigg)\;\mathrm{d}\tau.
        \end{align}
\end{theorem}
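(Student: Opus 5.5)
The plan is to reduce the squared norm to a real part of a scalar product and then apply Proposition~\ref{prop:equality} with $\Psi=\Phi$. By unitarity of both propagators,
\begin{equation*}
\left\|\left[U_2(t,s)-U_1(t,s)\right]\Phi\right\|^2 = 2\|\Phi\|^2-2\,\mathrm{Re}\braket{U_1(t,s)\Phi,U_2(t,s)\Phi} = -2\,\mathrm{Re}\braket{U_1(t,s)\Phi,\left[U_2(t,s)-U_1(t,s)\right]\Phi},
\end{equation*}
where the last step uses $\braket{U_1(t,s)\Phi,U_1(t,s)\Phi}=\|\Phi\|^2$. Hence
\begin{equation*}
\frac12\left\|\left[U_2(t,s)-U_1(t,s)\right]\Phi\right\|^2\le \left|\ii\braket{U_1(t,s)\Phi,\left[U_2(t,s)-U_1(t,s)\right]\Phi}\right|.
\end{equation*}
This is the key observation: the left-hand side of the identity in Proposition~\ref{prop:equality}, with the \emph{same} state in both slots, controls the squared error. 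That is why the theorem bounds $\frac12\|\cdot\|^2$ rather than the norm itself.

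Next I will insert the identity of Proposition~\ref{prop:equality} with $\Psi=\Phi\in\mathcal{D}$ and apply the triangle inequality, moving the absolute value inside the $\tau$-integral. Each of the three resulting terms is then estimated by Lemma~\ref{lem:action}(iv).

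The boundary term gives $\|{S}_{21}(t,s)\|_{+,-}\|U_1(t,s)\Phi\|_+\|U_2(t,s)\Phi\|_+$. The two integrand terms give
\begin{equation*}
\|{S}_{21}(\tau,s)\|_{+,-}\|U_1(\tau,s)\Phi\|_+\|{H}_2(\tau)U_2(\tau,s)\Phi\|_+
\end{equation*}
and
\begin{equation*}
\|{S}_{21}(\tau,s)\|_{+,-}\|{H}_1(\tau)U_1(\tau,s)\Phi\|_+\|U_2(\tau,s)\Phi\|_+.
\end{equation*}
Lemma~\ref{lem:action}(iv) is legitimate here because all arguments lie in $\hilb^+$, by Hypothesis~\ref{hyp:2}(ii)(a)--(b).

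The only delicate point is justifying that the integral of the absolute value is finite and that the integrand is measurable. I would handle this as follows:
\begin{itemize}
\item Hypothesis~\ref{hyp:2}(ii)(c) gives continuity of $\tau\mapsto{H}_j(\tau)U_j(\tau,s)\Phi$ in $\hilb^+$ and the uniform bound $M_\Phi$.
\item Lemma~\ref{lem:plus_differentiable} gives continuity of $\tau\mapsto U_j(\tau,s)\Phi$ in $\hilb^+$.
\item Hypothesis~\ref{hyp:1} gives $\|{S}_{21}(\tau,s)\|_{+,-}\le(K_1+K_2)|\tau-s|$, which is bounded on the interval.
\end{itemize}
The integrand is therefore bounded, and the scalar integrand of Proposition~\ref{prop:equality} is continuous, so the estimate is meaningful.

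If $t<s$, the same argument applies with the integral read as oriented. The statement implicitly takes $t\ge s$, and otherwise one replaces $\int_s^t$ by integration over $[t,s]$. I expect no real obstacle beyond recognizing the real-part trick in the first step.
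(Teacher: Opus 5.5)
Your proof is correct and follows essentially the paper's route: Proposition~\ref{prop:equality} with $\Psi=\Phi$, then Lemma~\ref{lem:action}(iv) applied term by term. The difference is in how the squared error is extracted. The paper applies the proposition a second time with the propagators interchanged and subtracts. Your identity $\frac12\|[U_2(t,s)-U_1(t,s)]\Phi\|^2=-\operatorname{Re}\braket{U_1(t,s)\Phi,[U_2(t,s)-U_1(t,s)]\Phi}$ makes that second application unnecessary, since by unitarity and Lemma~\ref{lem:action}(ii) the interchanged identity is just the complex conjugate of the first. Taking real parts instead of absolute values would reproduce the paper's exact identity.
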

\begin{proof}
    Applying Proposition~\ref{prop:equality} with $\Phi=\Psi\in\mathcal{D}$ gives
    \begin{align}
            \ii\braket{U_1(t,s)\Phi,\left[U_2(t,s)-U_1(t,s)\right]\Phi}=&s_{21,t,s}\left(U_1(t,s)\Phi,U_2(t,s)\Phi\right)\nonumber\\
            &+\ii\int_s^t\Big[s_{21,\tau,s}\left(U_1(\tau,s)\Phi,{H}_2(\tau)U_2(\tau,s)\Phi\right)\nonumber\\&-s_{21,\tau,s}\left({H}_1(\tau)U_1(\tau,s)\Phi,U_2(\tau,s)\Phi\right)\Big]\;\mathrm{d}\tau.
        \end{align}
        Applying Proposition~\ref{prop:equality} with the two propagators interchanged gives
        \begin{align}
            \ii\braket{U_2(t,s)\Phi,\left[U_1(t,s)-U_2(t,s)\right]\Phi}=&s_{12,t,s}\left(U_2(t,s)\Phi,U_1(t,s)\Phi\right)\nonumber\\
            &+\ii\int_s^t\Big[s_{12,\tau,s}\left(U_2(\tau,s)\Phi,{H}_1(\tau)U_1(\tau,s)\Phi\right)\nonumber\\&-s_{12,\tau,s}\left({H}_2(\tau)U_2(\tau,s)\Phi,U_1(\tau,s)\Phi\right)\Big]\;\mathrm{d}\tau,
        \end{align}
        and using $s_{12,\tau,s}=-s_{21,\tau,s}$,
        \begin{align}
            \ii\braket{U_2(t,s)\Phi,\left[U_2(t,s)-U_1(t,s)\right]\Phi}=&s_{21,t,s}\left(U_2(t,s)\Phi,U_1(t,s)\Phi\right)\nonumber\\
            &+\ii\int_s^t\Big[s_{21,\tau,s}\left(U_2(\tau,s)\Phi,{H}_1(\tau)U_1(\tau,s)\Phi\right)\nonumber\\&-s_{21,\tau,s}\left({H}_2(\tau)U_2(\tau,s)\Phi,U_1(\tau,s)\Phi\right)\Big]\;\mathrm{d}\tau.
        \end{align}
        Subtracting the first identity from the second and using $s_{21,t,s}(\Phi,\Psi)=\overline{s_{21,t,s}(\Psi,\Phi)}$, which gives
        \begin{align}
            s_{21,t,s}(\Psi,\Phi)-s_{21,t,s}(\Phi,\Psi)&=2\ii\operatorname{Im} s_{21,t,s}(\Psi,\Phi),\\
            s_{21,t,s}(\Psi,\Phi)+s_{21,t,s}(\Phi,\Psi)&=2\operatorname{Re} s_{21,t,s}(\Psi,\Phi),
        \end{align}
        yields the identity
        \begin{align}
          \frac{1}{2} \left\|\left[U_2(t,s)-U_1(t,s)\right]\Phi\right\|^2=& \operatorname{Im} s_{21,t,s}\left(U_2(t,s)\Phi,U_1(t,s)\Phi\right)\nonumber\\
           &+\int_s^t\operatorname{Re}s_{21,\tau,s}\left(U_2(\tau,s)\Phi,{H}_1(\tau)U_1(\tau,s)\Phi\right)\;\mathrm{d}\tau\nonumber\\
           &-\int_s^t\operatorname{Re}s_{21,\tau,s}\left(U_1(\tau,s)\Phi,{H}_2(\tau)U_2(\tau,s)\Phi\right)\;\mathrm{d}\tau,
        \end{align}
from which~\eqref{eq:thebound} follows upon estimating each term by Lemma~\ref{lem:action}\textit{(iv)}.
\end{proof}

\section{Application: the rotating-wave approximation}\label{sec:rwa}

\subsection{Spin--boson models}\label{sec:prelim}
Let $\sps=L^2_\mu(X)$ be the single-particle space of a boson field, i.e., the space of square-integrable functions on a measure space $(X,\Sigma,\mu)$ representing the field's momentum space. The associated Bose--Fock space $\fock(\sps)\equiv\fock$~\cite{bratteli1997,reed1975ii} consists of completely symmetric sequences $\psi=\{\psi^{(n)}\}_{n\in\mathbb{N}}$, with $\psi^{(n)}$ an $n$-boson wavefunction, satisfying
\begin{equation}\label{eq:spin-boson1}
	\sum_{n\in\mathbb{N}}\int\mathrm{d}^n\mu\;\left|\psi^{(n)}(k_1,\dots,k_n)\right|^2=:\|\psi\|^2_\fock<\infty.
\end{equation}

Given a measurable dispersion relation $\omega:X\to\mathbb{R}$ with $\omega\geq m>0$, the free boson energy is the \textit{second quantization} $\dOmega$ of $\omega$, a nonnegative self-adjoint operator on $\fock$, defined as
\begin{equation}\label{eq:fieldenergy}
	\dOmega=\int\omega(k)a^\dag_ka_k\;\mathrm{d}\mu,
\end{equation}
where $a_k,a_k^\dag$ are operator-valued distributions satisfying the bosonic canonical commutation relations, and $m$ is the \textit{mass} (lowest frequency) of the field. 
Consider a two-level system (spin) with an energy gap $\omega_0\geq0$ between its excited state and its ground state, the latter being set to zero, coupled to a boson field on the Bose--Fock space $\fock$. Let $\hfrak=\mathbb{C}^2\otimes\fock$ be the Hilbert space for a composite system of a two-level system and the bosonic field.

For a form factor $f:X\rightarrow\mathbb{C}$ satisfying
\begin{equation}\label{eq:minusone}
	\|f\|_{-1}^2:=\int\omega(k)^{-1}|f(k)|^2\,\mathrm{d}\mu=\|\omega^{-1/2}f\|^2<\infty,
\end{equation}
one defines
\begin{eqnarray}
	H&=&H_{\rm free}+\,(\sigma_+ + \sigma_-)\otimes\left(\a{f}+\adag{f}\right),\label{eq:hamiltonian}\\
	H_{\rm RWA}&=&H_{\rm free}+\left(\sigma_+\otimes\a{f}+\sigma_-\otimes\adag{f}\right),\label{eq:hamiltonian_rwa}
\end{eqnarray}
where $H_{\rm free}$ is the free Hamiltonian of the two-level system coupled to the bosonic field and is given by
\begin{equation}
	H_{\rm free}=S\otimes\mathrm{I}+\mathrm{I}\otimes\dOmega.
\end{equation} 
Here, $S\in\mathcal{B}(\mathbb{C}^2)$, $S=S^\dag\geq0$, is the spin Hamiltonian, which without loss of generality will be taken to be diagonal with the lowest eigenvalue set to zero, so that $S=\omega_0 K$ with $K=\begin{bmatrix} 1 & 0 \\ 0 & 0 \end{bmatrix}$. The operators $\a{f},\adag{f}$ are the annihilation and creation operators associated with the function $f$, formally given by
\begin{equation}
    \a{f}=\int \overline{f(k)}a_k\;\mathrm{d}\mu,\qquad \adag{f}=\int f(k)a^\dag_k\;\mathrm{d}\mu,
\end{equation}
see Proposition~\ref{prop:af} below. 
Finally, the matrices $\sigma_\pm$ are given by 
\begin{equation}\label{eq:sigmas}
\sigma_+= \begin{bmatrix} 0 & 1 \\ 0 & 0 \end{bmatrix},\qquad \sigma_-=\sigma_+^\dagger= \begin{bmatrix} 0 & 0 \\ 1 & 0 \end{bmatrix}.
\end{equation}

Since we are interested in the regime where both the field energy and the spin gap diverge, we introduce a dimensionless parameter $\alpha$ to study the limit of large frequency, $\alpha\to\infty$:
\begin{equation}
    H_{\rm free}=\begin{bmatrix}
        \alpha\omega_0+\Delta&0\\0&0
    \end{bmatrix}\otimes\mathrm{I}+\mathrm{I}\otimes\alpha\,\dOmega,
\end{equation}
with detuning $\Delta\geq0$, a parameter independent of $\alpha$ (the frequency rescaling). Note that $\alpha\dOmega=\mathrm{d}\Gamma(\alpha\omega)$. We split the free Hamiltonian as
\begin{equation}\label{eq:spin-boson5}
    H_{\rm free}=\alpha H_0+\Delta K\otimes\mathrm{I},\qquad H_0=\omega_0 K\otimes\mathrm{I}+\mathrm{I}\otimes\dOmega,
\end{equation}
with $K$ as above.

Under~\eqref{eq:minusone} alone the expressions~\eqref{eq:hamiltonian}--\eqref{eq:hamiltonian_rwa} are to be interpreted as sesquilinear forms, the smeared creation and annihilation operators being defined as maps from $\fock^+$ to $\fock^-$; see \cite{lonigro2022generalized} for details, and \cite{spohn1989,arai1997} for the mathematical study of models of this type.

The operators $\omega,\dOmega$ and $H_{0}$ are all self-adjoint and nonnegative on $\sps$, $\fock$, $\hfrak$ respectively, and therefore each defines a \textit{scale of Hilbert spaces}:
\begin{equation}
\sps^+\subset\sps\subset\sps^-,\qquad\fock^+\subset\fock\subset\fock^-,\qquad\hfrak^+\subset\hfrak\subset\hfrak^-,
\end{equation}
with $\sps^+$, $\fock^+$, $\hfrak^+$ defined as the form domains of $\omega,\dOmega, H_{0}$ endowed with the norms
\begin{eqnarray}
	\|f\|_{\sps^+}&:=&\|\omega^{1/2}f\|_\sps,\\ \|\psi\|_{\fock^+}&:=&\|(\dOmega + 1)^{1/2}\psi\|_\fock,\\
	\|\Psi\|_{+,\alpha}&:=&\|(\alpha H_{0} + \alpha)^{1/2}\Psi\|_\hfrak,\label{eq:+_norm}
\end{eqnarray}
and $\sps^-,\fock^-,\hfrak^-$ being their corresponding dual spaces; we write $\|\cdot\|_{\sps^-}$, $\|\cdot\|_{\fock^-}$, $\|\cdot\|_{-,\alpha}$ for the respective dual norms. We will use the notation $\|\cdot\|_{+}=\|\cdot\|_{+,1}$ for the $\alpha$-independent norm.

Condition~\eqref{eq:minusone} on the form factor, under which the model is well defined~\cite{lonigro2022generalized}, is all that is required for the definition of the model and for the elementary properties of the smeared creation and annihilation operators recalled in this subsection; the results from Proposition~\ref{prop:rel_bound} onwards require the generally stronger Hypothesis~\ref{hyp:f}, stated below. Under~\eqref{eq:minusone} the smeared operators extend to bounded maps between the Sobolev-type spaces $\fock^\pm$:

\begin{proposition}[\cite{lonigro2022generalized}, Props. 3.4--3.5]\label{prop:af}
	Let $\omega\geq m$ be a real-valued measurable function, and $f\in\sps^-$ (i.e.\ $\|f\|_{-1}<\infty$). Then:
	\begin{itemize}
		\item[(i)] the annihilation operator smeared against $f$ is a continuous operator $\a{f}\in\mathcal{B}(\fock^{+},\fock)$;
		\item[(ii)] its adjoint $\adag{f}:=\a{f}^*\in\mathcal{B}(\fock,\fock^{-})$ with respect to the duality pairing between $\fock^+$ and $\fock^{-}$ is a continuous operator whose action on $\psi\in\fock^{+}$ agrees with the creation operator.
	\end{itemize}
The operator norms satisfy
\begin{equation}
	\left\|\a{f}\right\|_{\mathcal{B}(\fock^{+},\fock)}\leq\|f\|_{-1},\qquad \left\|\adag{f}\right\|_{\mathcal{B}(\fock,\fock^{-})}\leq\|f\|_{-1}.
\end{equation}
\end{proposition}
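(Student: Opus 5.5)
Since $\omega\geq m>0$ pointwise, the vectors of $\fock$ can be decomposed into their $n$-particle components, and the statement reduces to the well-known $N_\omega$-estimates for smeared annihilation operators. I will first prove (i) on the finite-particle subspace $\fock_{\rm fin}\cap\fock^+$, which is a core for $(\dOmega+1)^{1/2}$, and then extend by continuity. On the $n$-particle sector,
\begin{equation}
(\a{f}\psi)^{(n-1)}(k_1,\dots,k_{n-1})=\sqrt{n}\int\overline{f(k)}\,\psi^{(n)}(k,k_1,\dots,k_{n-1})\,\mathrm{d}\mu(k).
\end{equation}
Inserting $1=\omega(k)^{-1/2}\omega(k)^{1/2}$ and applying Cauchy--Schwarz in $k$ gives
\begin{equation}
|(\a{f}\psi)^{(n-1)}|^2\leq n\,\|f\|_{-1}^2\int\omega(k)\,|\psi^{(n)}(k,k_1,\dots)|^2\,\mathrm{d}\mu(k).
\end{equation}
Integrating over $k_1,\dots,k_{n-1}$ and using the symmetry of $\psi^{(n)}$, the factor $n\int\omega(k)|\psi^{(n)}|^2$ becomes $\int\sum_i\omega(k_i)|\psi^{(n)}|^2=\|(\dOmega)^{1/2}\psi^{(n)}\|^2$. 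Summing over $n$ then yields
\begin{equation}
\|\a{f}\psi\|\leq\|f\|_{-1}\|(\dOmega)^{1/2}\psi\|\leq\|f\|_{-1}\|\psi\|_{\fock^+}.
\end{equation}
This bound lets $\a{f}$ extend uniquely to an operator in $\mathcal{B}(\fock^+,\fock)$ with norm at most $\|f\|_{-1}$. Since $f$ is only assumed to lie in $\sps^-$, the integral should first be handled for $f$ in a dense set such as $\sps$, and then extended by density in the $\|\cdot\|_{-1}$ norm using the same estimate; equivalently, the defining integral converges absolutely by the Cauchy--Schwarz step above.

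For (ii), I define $\adag{f}$ as the Banach adjoint with respect to the duality pairing, $(\phi,\adag{f}\chi)_{+,-}:=\overline{\braket{\chi,\a{f}\phi}}$ for $\phi\in\fock^+$ and $\chi\in\fock$. This is a bounded map $\fock\to\fock^-$, and its norm equals that of $\a{f}$, hence is at most $\|f\|_{-1}$. To identify it with the usual creation operator on $\psi\in\fock^+$, I will compute the pairing sectorwise, with the standard symmetrized formula $(\adag{f}\psi)^{(n)}=n^{-1/2}\sum_i f(k_i)\psi^{(n-1)}(\dots\hat k_i\dots)$, on finite-particle vectors. The two agree there by the usual adjointness computation, and they agree in general by continuity.

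The only delicate point is the density/core argument, specifically that finite-particle vectors with components in the form domain are dense in $\fock^+$. This holds because $\dOmega$ commutes with the particle-number projections.
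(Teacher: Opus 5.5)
Your argument is correct and is the standard proof of this estimate: a weighted Cauchy--Schwarz bound on each $n$-particle sector, symmetrisation to turn $n\int\omega(k)|\psi^{(n)}|^2$ into $\|\dOmega^{1/2}\psi^{(n)}\|^2$, and $\adag{f}$ defined as the adjoint with respect to the $\fock^+$--$\fock^-$ pairing (the paper itself gives no proof and imports the result from Props.~3.4--3.5 of \cite{lonigro2022generalized}). One step needs to be made explicit: for $f\notin L^2_\mu(X)$ the symmetrised creation formula defines an element of $\fock^-$ only through its pairing with $\fock^+$, and the same weighted Cauchy--Schwarz estimate bounds that pairing by $\|f\|_{-1}\|\phi\|_{\fock^+}\|\psi\|$ for every $\psi\in\fock$, which is the continuity your final ``agree by continuity'' step requires.
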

Consequently, using the inclusions $\fock^+\subset\fock\subset\fock^-$, one also has $\a{f},\adag{f}\in\mathcal{B}\left(\fock^+,\fock^-\right)$ with 
\begin{equation}\label{eq:normbounds}
	\left\|\a{f}\right\|_{\mathcal{B}(\fock^{+},\fock^-)}\leq\|f\|_{-1},\qquad \left\|\adag{f}\right\|_{\mathcal{B}(\fock^+,\fock^{-})}\leq\|f\|_{-1}.
\end{equation}
Interpreting $\a{f},\adag{f}$ as operators from $\fock^+$ to $\fock^-$, the expressions
\begin{eqnarray}\label{eq:h}
	H&=&\alpha H_0+\Delta K\otimes\mathrm{I}+(\sigma_++\sigma_-)\otimes\left(\a{f}+\adag{f}\right),\\\label{eq:hrwa}
	H_{\rm RWA}&=&\alpha H_0+\Delta K\otimes\mathrm{I}+\left(\sigma_+\otimes\a{f}+\sigma_-\otimes\adag{f}\right),
\end{eqnarray}
with $\sigma_\pm$ as in Eq.~\eqref{eq:sigmas}, define two operators from $\mathbb{C}^2\otimes\fock^+$ to $\mathbb{C}^2\otimes\fock^-$.

In what follows we will denote by $H_1$ and $V_1$ the Hamiltonian and interaction term, respectively, of the exact dynamics, and by $H_2$ and $V_2$ the corresponding operators for the rotating-wave approximation. The detuning factor $\Delta$, being $\alpha$-independent and determining a bounded perturbation of the Hamiltonians does not play any role in the forthcoming discussion. For simplicity we will consider $\Delta=0$ from now on. All the results are also true for $\Delta\neq0$ but the different constants would be modified by some factors that depend on this parameter. 

\begin{lemma}\label{lem:scale_norms}
Let $\|\cdot\|_{\pm, \alpha}$ be as in Eq.~\eqref{eq:+_norm}, recall that $\|\cdot\|_{\pm} = \|\cdot\|_{\pm, 1}$, and let $\|\cdot\|_{\mathbb{C}^2\otimes\fock^{\pm}}$ denote the canonical norms in the tensor product spaces. Then
  \begin{enumerate}
  \item $\|\cdot\|_{+, \alpha} = \sqrt{\alpha}\,\| \cdot \|_+$,
  \item $\|\cdot\|_{-, \alpha} = \frac{1}{\sqrt{\alpha}}\| \cdot \|_-$,
  \item $\|\cdot\|_{\mathbb{C}^2\otimes\fock^{+}} \leq \|\cdot\|_{+} \leq \sqrt{1+\omega_0}\,\|\cdot\|_{\mathbb{C}^2\otimes\fock^{+}}$,
  \item $\frac{1}{\sqrt{1+\omega_0}}\|\cdot\|_{\mathbb{C}^2\otimes\fock^{-}} \leq \|\cdot\|_{-} \leq \|\cdot\|_{\mathbb{C}^2\otimes\fock^{-}}$.
  \end{enumerate}
\end{lemma}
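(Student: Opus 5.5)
The proof is a direct computation based on functional calculus for the nonnegative self-adjoint operator $H_0$ on $\hfrak=\mathbb{C}^2\otimes\fock$. We treat the four items in order, first the $+$ norms and then their duals.

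For (1), note that $\alpha H_0+\alpha=\alpha(H_0+1)$. Its square root is therefore $\sqrt{\alpha}\,(H_0+1)^{1/2}$, with the same form domain. Consequently $\|\Psi\|_{+,\alpha}=\sqrt{\alpha}\,\|(H_0+1)^{1/2}\Psi\|=\sqrt{\alpha}\,\|\Psi\|_+$.

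For (2), we use that $\hfrak^-$ is the same space for every $\alpha$, and that its norm is the dual norm:
\begin{equation*}
\|\xi\|_{-,\alpha}=\sup_{\Psi\neq0}\frac{|(\Psi,\xi)_{+,-}|}{\|\Psi\|_{+,\alpha}}.
\end{equation*}
The pairing is the extension of the $\hfrak$ scalar product and so does not depend on $\alpha$. Substituting (1) in the denominator gives the factor $1/\sqrt{\alpha}$.

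For (3), we use the tensor structure. The canonical norm on $\mathbb{C}^2\otimes\fock^+$ is $\|(\mathrm{I}\otimes(\dOmega+1))^{1/2}\Psi\|$. The operator $H_0+1$ is $\omega_0K\otimes\mathrm{I}+\mathrm{I}\otimes(\dOmega+1)$, and both summands commute. Since $0\le \omega_0 K\le\omega_0\,\mathrm{I}$, we obtain the operator inequalities
\begin{equation*}
\mathrm{I}\otimes(\dOmega+1)\;\le\; H_0+1\;\le\;\omega_0+\mathrm{I}\otimes(\dOmega+1)\;\le\;(1+\omega_0)\,\mathrm{I}\otimes(\dOmega+1).
\end{equation*}
The last step uses $\dOmega+1\ge1$. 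These are inequalities of quadratic forms on a common form domain. The domains coincide because $\omega_0K\otimes\mathrm{I}$ is bounded, so the form domain of $H_0$ equals $\mathbb{C}^2\otimes\fock^+$. Taking expectation values of both sides on $\Psi$ and then square roots gives (3).

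For (4), we dualize (3). We show that if $\|\cdot\|_a\le\|\cdot\|_b\le c\|\cdot\|_a$ on $\hfrak^+$, then the dual norms satisfy $c^{-1}\|\cdot\|_{a^*}\le\|\cdot\|_{b^*}\le\|\cdot\|_{a^*}$. The key check is that the pairing between $\hfrak^+$ and $\hfrak^-$ and the tensor-product pairing on $\mathbb{C}^2\otimes\fock^\pm$ are the same extension of the $\hfrak$ inner product. Given that, the claim follows by taking suprema over the same set, with $c=\sqrt{1+\omega_0}$.

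The only point that needs care is identifying the spaces: the form domains, and hence the dual spaces and pairings, must coincide. This holds because the perturbation $\omega_0K\otimes\mathrm{I}$ is bounded. The rest is routine.
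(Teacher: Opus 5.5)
Your proof is correct and follows the same computation as the paper. Items (1)--(2) come from the scaling $\alpha H_0+\alpha=\alpha(H_0+1)$, item (3) from $0\le\omega_0K\le\omega_0$ together with $\dOmega+1\ge1$, and item (4) by dualizing (3).

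The one real difference is in (3). The paper checks the two inequalities only on elementary tensors $u\otimes\psi$ and asserts this suffices. You instead state them as quadratic-form inequalities $\mathrm{I}\otimes(\dOmega+1)\le H_0+1\le(1+\omega_0)\,\mathrm{I}\otimes(\dOmega+1)$ on the whole common form domain. That is the cleaner route, because a form inequality verified on product vectors does not in general extend to their linear combinations. Here the extension does hold, e.g.\ because $K$ is diagonal and both forms split over the two spin components, but the paper does not say so.

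You also make explicit the dualization and the identification of pairings that the paper leaves to ``standard arguments''.
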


\begin{proof}
The proofs of \textit{(1)} and \textit{(2)} are immediate from the definition. To prove \textit{(3)} it is enough to show it for an element of the form $\Psi = u\otimes\psi$. On the one hand we have
\begin{align}
\| u\otimes\psi\|^2_{\mathbb{C}^2\otimes\fock^{+}} = \|u\|^2_{\mathbb{C}^2}\|(\dOmega+1)^{1/2} \psi\|_\fock^2 
  &=  \|u\|^2_{\mathbb{C}^2}\langle \psi,(\dOmega+1)\psi\rangle_\fock \\
  & \leq \langle u, \omega_0 Ku \rangle \|\psi\|^2_\fock + \|u\|^2_{\mathbb{C}^2}\langle \psi,(\dOmega+1)\psi\rangle_\fock \\
  &= \|u\otimes\psi\|_+^2 .
\end{align}
On the other hand
\begin{align}
\|u\otimes\psi\|_+^2 
  &= \langle u, \omega_0 Ku \rangle \|\psi\|^2_\fock + \|u\|^2_{\mathbb{C}^2}\langle \psi,(\dOmega+1)\psi\rangle_\fock \\
  &\leq \omega_0\|u\|^2_{\mathbb{C}^2}\|\psi\|^2_\fock +\|u\|^2_{\mathbb{C}^2}\langle \psi,(\dOmega+1)\psi\rangle_\fock\\
  &\leq \left(1 + \omega_0\right)\|u\|^2_{\mathbb{C}^2}\langle \psi,(\dOmega+1)\psi\rangle_\fock\\
  &= \left(1 + \omega_0\right)\| u\otimes\psi\|^2_{\mathbb{C}^2\otimes\fock^{+}}.
\end{align}
The proof of \textit{(4)} follows by standard arguments in the scale of Hilbert spaces relating the norms of the dual spaces.
\end{proof}

The results of this subsection, starting with Proposition~\ref{prop:rel_bound}, require the following strengthening of~\eqref{eq:minusone}.

\begin{hypothesis}\label{hyp:f}
The form factor satisfies $f\in\sps^+$, that is,
\begin{equation}\label{eq:fplus}
    \|f\|_{+1}^2:=\int\omega(k)\,|f(k)|^2\,\mathrm{d}\mu<\infty .
\end{equation}
\end{hypothesis}

\begin{remark}\label{rem:fplus}
Since $\omega\geq m>0$, Hypothesis~\ref{hyp:f} implies the two weaker conditions that are also used below. Indeed, writing $\|f\|^2_{L^2}:=\int|f(k)|^2\,\mathrm{d}\mu$ and using $\omega^{-1}\leq m^{-2}\omega$ and $1\leq m^{-1}\omega$ pointwise,
\begin{equation}\label{eq:fchain}
    \|f\|_{-1}\leq \frac{1}{m}\|f\|_{+1},\qquad \|f\|_{L^2}\leq \frac{1}{\sqrt m}\|f\|_{+1}.
\end{equation}
In particular $f\in L^2_\mu(X)$ and $\|f\|_{-1}<\infty$. Note also that $\|\omega f\|_{-1}=\|f\|_{+1}$, which is precisely the quantity appearing in the commutator estimate of Lemma~\ref{lem:commutator_estimate}.
\end{remark}

We shall start by introducing some results that guarantee that the problem is well-posed and that will be of later use.

\begin{proposition}\label{prop:rel_bound}
The operators $V_j$, $j=1,2$, are relatively form bounded and relatively operator bounded with respect to $\alpha H_0$ and in either case there exists $\alpha_0$ such that the relative bound can be taken smaller than one for all $\alpha>\alpha_0$.
\end{proposition}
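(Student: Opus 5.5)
The plan is to reduce both claims to the standard $N_\omega$-type estimates for smeared creation and annihilation operators, and then use the scaling in $\alpha$ to make the relative bounds small. First we write $V_1=(\sigma_++\sigma_-)\otimes(\a{f}+\adag{f})$ and $V_2=\sigma_+\otimes\a{f}+\sigma_-\otimes\adag{f}$. Since $\|\sigma_\pm\|=1$, it suffices to bound $\mathrm{I}\otimes\a{f}$ and $\mathrm{I}\otimes\adag{f}$ relative to $\mathrm{I}\otimes\dOmega$. We then pass to $H_0$ using $\mathrm{I}\otimes\dOmega\leq H_0$, which holds because $\omega_0K\geq0$ and the two summands of $H_0$ commute.

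For relative form boundedness, Proposition~\ref{prop:af} gives $\|\a{f}\psi\|\leq\|f\|_{-1}\|(\dOmega+1)^{1/2}\psi\|$. We will sharpen this to the standard estimate $\|\a{f}\psi\|\leq\|f\|_{-1}\|\dOmega^{1/2}\psi\|$, obtained by Cauchy--Schwarz in $k$ on each $n$-particle sector. From it,
\begin{equation*}
|\braket{\Psi,V_j\Psi}|\leq 2\|f\|_{-1}\|\Psi\|\,\|(\mathrm{I}\otimes\dOmega)^{1/2}\Psi\|\leq \varepsilon\braket{\Psi,H_0\Psi}+\varepsilon^{-1}\|f\|_{-1}^2\|\Psi\|^2 .
\end{equation*}
Rewriting this relative to $\alpha H_0$ gives the relative form bound $\varepsilon/\alpha$ with constant $\|f\|_{-1}^2/\varepsilon$. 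Choosing, for instance, $\varepsilon=1$ yields a bound $1/\alpha<1$ for every $\alpha>1$. Since $\varepsilon$ is free, the relative form bound of $V_j$ with respect to $\alpha H_0$ is in fact zero, but the statement only needs it below one.

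For relative operator boundedness, we will use Hypothesis~\ref{hyp:f} through $\|f\|_{L^2}\leq m^{-1/2}\|f\|_{+1}$ (Remark~\ref{rem:fplus}). The annihilation part is controlled on $\Dom\dOmega$ by $\|\a{f}\psi\|\leq\|f\|_{-1}\|\dOmega^{1/2}\psi\|\leq\|f\|_{-1}(\delta\|\dOmega\psi\|+\tfrac{1}{4\delta}\|\psi\|)$. The creation part follows from the CCR: $\|\adag{f}\psi\|^2=\|\a{f}\psi\|^2+\|f\|_{L^2}^2\|\psi\|^2$, valid on a core such as finite-particle vectors in $\Dom\dOmega$ and extended by closure. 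Adding the pieces gives
\begin{equation*}
\|V_j\Psi\|\leq c\,\delta\,\|(\mathrm{I}\otimes\dOmega)\Psi\|+C_\delta\|\Psi\|,
\end{equation*}
with $c$ depending only on $\|f\|_{-1}$. We then bound $\|(\mathrm{I}\otimes\dOmega)\Psi\|\leq\|H_0\Psi\|$, which holds because both summands of $H_0$ are nonnegative and commuting, so $\|H_0\Psi\|^2\geq\|(\mathrm{I}\otimes\dOmega)\Psi\|^2$. This yields a relative operator bound $c\delta/\alpha$ with respect to $\alpha H_0$, which is below one once $\alpha>\alpha_0:=c\delta$, or for every $\alpha$ after shrinking $\delta$.

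The main obstacle is the rigorous domain handling in the operator bound for $\adag{f}$. Under~\eqref{eq:minusone} alone, $\adag{f}$ only maps into $\fock^-$. We need Hypothesis~\ref{hyp:f}, which gives $f\in L^2$, to ensure $\adag{f}\Dom\dOmega\subset\fock$, and we must justify the CCR identity on a core before extending by density. The remaining estimates are routine.
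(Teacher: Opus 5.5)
Your proof is correct and follows essentially the paper's route. The ingredients are the same: Nelson-type bounds for $\a{f}$, the CCR to control $\adag{f}$ with $\|f\|_{L^2}$ supplied by Hypothesis~\ref{hyp:f}, interpolation of $\dOmega^{1/2}$ between $\dOmega$ and the identity, and the $1/\alpha$ scaling. The differences are minor refinements: for the form bound the paper applies the $\mathcal{B}(\fock^+,\fock^-)$ estimate of Proposition~\ref{prop:af} directly, obtaining the explicit bound $2\|f\|_{-1}/\alpha$ relative to $\alpha H_0+\alpha$ that it reuses in Proposition~\ref{prop:sa_and_bound}, whereas your Young-inequality argument shows the form bound is in fact infinitesimal; and your inequality $\|(\mathrm{I}\otimes\dOmega)\Psi\|\leq\|H_0\Psi\|$ is slightly sharper than the paper's $\|H_0\Psi\|+\omega_0\|\Psi\|$.
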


\begin{proof}
  Notice that by Proposition~\ref{prop:af} $\left\| a^\sharp({f})\right\|_{\mathcal{B}(\fock^{+},\fock^-)} \leq \|f\|_{-1}$. We have
  \begin{align}
    |\langle \Phi, V_j \Phi \rangle | &\leq |\langle \Phi, \sigma^\sharp\otimes a({f}) \Phi \rangle | + |\langle \Phi, \sigma^\sharp\otimes a^\dagger({f}) \Phi \rangle |\\
      & \leq 2\|f\|_{-1}\|\Phi\|^2_{\mathbb{C}^2\otimes \fock^+}\\
      & \leq 2\|f\|_{-1}\|\Phi\|^2_{+}\\
      & = 2\|f\|_{-1}\langle \Phi, (H_0+1) \Phi \rangle = \frac{2\|f\|_{-1}}{\alpha} \langle \Phi, (\alpha H_0+\alpha) \Phi \rangle    
  \end{align}
  where we have used the equivalence of the tensor product norm and the norm $\|\cdot\|_+$ established in Lemma~\ref{lem:scale_norms}\textit{(3)}.
  Here $\sigma^\sharp$ denotes either $\sigma_++\sigma_-$ (for $j=1$) or $\sigma_\pm$ (for $j=2$); in all cases $\|\sigma^\sharp\|_{\mathcal{B}(\mathbb{C}^2)}=1$, which is why the interaction contributes the factor $2$ rather than the number of its summands.

  For the relative operator boundedness we use the standard Nelson-type estimates
  \begin{equation}\label{eq:nelson}
      \|\a{f}\psi\|_\fock\leq\|f\|_{-1}\|\dOmega^{1/2}\psi\|_\fock,\qquad \|\adag{f}\psi\|_\fock\leq\|f\|_{-1}\|\dOmega^{1/2}\psi\|_\fock+\|f\|_{L^2}\|\psi\|_\fock,
  \end{equation}
  which follow from Proposition~\ref{prop:af} together with the canonical commutation relations, and where $\|f\|_{L^2}$ is finite by Remark~\ref{rem:fplus}. Interpolating in~\eqref{eq:nelson},
  \begin{equation}
      \|\dOmega^{1/2}\psi\|^2_\fock=\langle\psi,\dOmega\psi\rangle_\fock\leq\varepsilon\|\dOmega\psi\|^2_\fock+\frac{1}{4\varepsilon}\|\psi\|_\fock^2 ,
  \end{equation}
  so that $\|\dOmega^{1/2}\psi\|_\fock\leq\sqrt\varepsilon\,\|\dOmega\psi\|_\fock+\frac{1}{2\sqrt\varepsilon}\|\psi\|_\fock$ for every $\varepsilon>0$. Since $H_0=\omega_0K\otimes\mathrm{I}+\mathrm{I}\otimes\dOmega$ with $\|\omega_0K\|_{\mathcal{B}(\mathbb{C}^2)}=\omega_0$, one has $\|\mathrm{I}\otimes\dOmega\,\Psi\|\leq\|H_0\Psi\|+\omega_0\|\Psi\|$, and combining the three displays we obtain, for every $\varepsilon>0$,
  
\begin{equation}
\|V_j\Psi\| \leq 2\sqrt\varepsilon \|f\|_{-1} \|H_0\Psi\| + \left(\frac{\|f\|_{-1}}{\sqrt\varepsilon} + \|f\|_{L^2} + 2\sqrt{\varepsilon}\,\omega_0\|f\|_{-1}\right)\|\Psi\| .
\end{equation}

Choosing, for instance, $\sqrt\varepsilon=\left(4\|f\|_{-1}+1\right)^{-1}$, so that $2\sqrt\varepsilon\|f\|_{-1}\leq\tfrac12$, we obtain constants $a\in(0,1)$ and $b>0$ depending only on $\omega_0$, $\|f\|_{-1}$ and $\|f\|_{L^2}$, and in particular independent of $\alpha$, such that  
\begin{equation}
\|V_j\Psi\| \leq \frac{a}{\alpha}\|\alpha H_0\Psi\| + b\|\Psi\|, 
\end{equation}
thus concluding the proof.
\end{proof}

\begin{proposition}\label{prop:sa_and_bound}
  Let $\alpha>2\|f\|_{-1}$. Then the operators $H_j$, $j=1,2$, are self-adjoint and bounded from below; more precisely
  \begin{equation}\label{eq:semibounded}
      H_j\geq -2\|f\|_{-1},\qquad\text{so that}\qquad H_j+\alpha\geq \alpha-2\|f\|_{-1}>0 .
  \end{equation}
\end{proposition}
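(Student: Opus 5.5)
The plan is to apply the KLMN theorem to the form of $H_j=\alpha H_0+V_j$ (recall $\Delta=0$), with the form $v_j$ of $V_j$ treated as a perturbation of the closed nonnegative form of $\alpha H_0$, whose domain is $\hilb^+$. The first inequality in the proof of Proposition~\ref{prop:rel_bound} is the key input:
\begin{equation*}
|\langle\Phi,V_j\Phi\rangle|\leq 2\|f\|_{-1}\langle\Phi,(H_0+1)\Phi\rangle=\frac{2\|f\|_{-1}}{\alpha}\langle\Phi,\alpha H_0\Phi\rangle+2\|f\|_{-1}\|\Phi\|^2,
\end{equation*}
valid for all $\Phi\in\hilb^+$. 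The relative form bound is therefore $a=2\|f\|_{-1}/\alpha$, and $a<1$ is exactly the assumption $\alpha>2\|f\|_{-1}$.

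By KLMN, the form $\alpha h_0+v_j$ on $\hilb^+$ is then closed and semibounded. Its associated self-adjoint operator agrees with the operator $H_j\in\mathcal{B}(\hilb^+,\hilb^-)$ from \eqref{eq:h}--\eqref{eq:hrwa}, restricted to $\{\Psi\in\hilb^+: H_j\Psi\in\hilb\}$. This identification is the standard content of KLMN, and it gives self-adjointness.

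For the lower bound, I take $\Phi\in\hilb^+$ and use the same estimate in the other direction:
\begin{equation*}
\langle\Phi,H_j\Phi\rangle\geq\alpha\langle\Phi,H_0\Phi\rangle-\frac{2\|f\|_{-1}}{\alpha}\alpha\langle\Phi,H_0\Phi\rangle-2\|f\|_{-1}\|\Phi\|^2 .
\end{equation*}
Since $\alpha-2\|f\|_{-1}>0$ and $H_0\geq0$, the first two terms together are nonnegative. This leaves $\langle\Phi,H_j\Phi\rangle\geq-2\|f\|_{-1}\|\Phi\|^2$, i.e.\ $H_j\geq-2\|f\|_{-1}$. Adding $\alpha$ yields $H_j+\alpha\geq\alpha-2\|f\|_{-1}>0$.

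Alternatively, the same conclusion follows from Kato--Rellich with the operator bound of Proposition~\ref{prop:rel_bound}. That route gives self-adjointness on $\Dom H_0$ once $\alpha>\alpha_0$, but I would use the form route because it yields the sharp threshold stated.

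I expect no real obstacle. The only point needing care is checking that the relative bound in Proposition~\ref{prop:rel_bound} holds with the norm $\|\cdot\|_+$, which by Lemma~\ref{lem:scale_norms}(3) dominates the tensor norm used in \eqref{eq:normbounds}. This is what lets the constant $2\|f\|_{-1}$ appear without an $\omega_0$ factor.
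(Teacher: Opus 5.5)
Your proposal is correct and follows essentially the same route as the paper. You apply KLMN using the relative form bound $2\|f\|_{-1}/\alpha<1$ inherited from Proposition~\ref{prop:rel_bound}, and then use $H_0\geq0$ on that same estimate to get $\langle\Phi,H_j\Phi\rangle\geq(\alpha-2\|f\|_{-1})\langle\Phi,H_0\Phi\rangle-2\|f\|_{-1}\|\Phi\|^2\geq-2\|f\|_{-1}\|\Phi\|^2$.
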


\begin{proof}
Self-adjointness is a direct application of the KLMN theorem, cf.\ \cite[Thm.~X.17]{reed1975ii} and \cite[Ch.~VI-\S3]{kato2013perturbation}, taking into account the relative form boundedness of the previous proposition with $H_j =\alpha H_0 + V_j$. For the lower bound, the same relative form bound gives, for $\Psi\in\hfrak^+$,
\begin{equation}
    |\langle\Psi,V_j\Psi\rangle|\leq\frac{2\|f\|_{-1}}{\alpha}\langle\Psi,(\alpha H_0+\alpha)\Psi\rangle = 2\|f\|_{-1}\left(\langle\Psi,H_0\Psi\rangle+\|\Psi\|^2\right),
\end{equation}
whence, using $H_0\geq0$ and $\alpha>2\|f\|_{-1}$,
\begin{equation}
    \langle\Psi,H_j\Psi\rangle\geq\left(\alpha-2\|f\|_{-1}\right)\langle\Psi,H_0\Psi\rangle-2\|f\|_{-1}\|\Psi\|^2\geq-2\|f\|_{-1}\|\Psi\|^2 . \qedhere
\end{equation}
\end{proof}

\begin{remark}
Note that $H_j$ need not be positive: the operator $H_0$ has $0$ in its spectrum, the vacuum being annihilated by $\dOmega$ and $K$ vanishing on the ground state of the spin, so the perturbation $V_j$ may push the bottom of the spectrum below zero. What is used repeatedly below is the weaker statement~\eqref{eq:semibounded}, which guarantees that $H_j+\alpha$ is strictly positive and hence that $\|\cdot\|_{\pm,H_j}$ are genuine norms.
\end{remark}

To obtain the desired results we need to consider the interaction picture. This is obtained by a time-dependent unitary transformation generated by the free operator $\alpha H_0$. If $\Psi_j(t) = U_j(t)\Phi$ is the solution with initial value $\Phi\in\hfrak^+$ at time $t=0$ of the Schrödinger equation determined by the operator $H_j$, then $\hat{\Psi}_j(t) = \e^{\ii t\alpha H_{0}}U_j(t)\Phi$ is called the solution in the interaction picture and satisfies the Schrödinger equation

\begin{equation}
 \ii\frac{\mathrm{d}}{\mathrm{d}t}\hat\Psi_j(t)={\hat H}_j(t)\hat\Psi_j(t),\qquad j=1,2,
\end{equation}
in the sense of $\hfrak^-$, where ${\hat H}_j(t)$ is the so-called interaction-picture Hamiltonian,

\begin{equation}
 {\hat H}_j(t) = \e^{\ii t\alpha H_{0}}\,V_j\, \e^{-\ii t\alpha H_{0}}.
\end{equation}

Accordingly we write
\begin{equation}\label{eq:hatU}
    \hat U_j(t):=\e^{\ii t\alpha H_{0}}U_j(t)
\end{equation}
for the interaction-picture propagators, so that $\hat\Psi_j(t)=\hat U_j(t)\Phi$; more generally, if the initial time is $s\neq0$, we set $\hat U_j(t,s):=\e^{\ii t\alpha H_{0}}U_j(t-s)\e^{-\ii s\alpha H_{0}}$, so that $\hat U_j(t)=\hat U_j(t,0)$. Since $\e^{\ii t\alpha H_{0}}$ is unitary on $\hfrak$ one has
\begin{equation}\label{eq:same_error}
    \left\|\left[\hat U_2(t)-\hat U_1(t)\right]\Phi\right\| = \left\|\left[U_2(t)-U_1(t)\right]\Phi\right\|,
\end{equation}
so that estimating the error in the interaction picture is equivalent to estimating it in the Schr\"odinger picture. Moreover $\e^{\ii t\alpha H_{0}}$ commutes with $H_0$, hence acts isometrically for the norms $\|\cdot\|_{\pm,\alpha}$ and $\|\cdot\|_\pm$. Notice that $\e^{\ii t\alpha H_{0}}H_j \e^{-\ii t\alpha H_{0}} = \e^{\ii t\alpha H_{0}}\alpha H_0 \e^{-\ii t\alpha H_{0}} + \hat H_j(t) = \alpha H_0+ \hat H_j(t)$.

The purpose of this transformation is that, in the rotating frame, the time average of the difference $\hat{H}_1(t)-\hat{H}_2(t)$ is small. To see this we compute $\hat{H}_1(t)$ and $\hat{H}_2(t)$ explicitly.

Since $H_0$ is a sum of commuting terms, its propagator factors as
\begin{equation}
    \e^{-\ii t\alpha H_{0}}=\e^{-\ii t\alpha \omega_0 K}\otimes\e^{-\ii t\alpha\dOmega},
\end{equation}
with spin component
\begin{equation}\label{eq:spin_component}
\e^{-\ii t\alpha \omega_0 K}=\begin{pmatrix} \e^{-\ii t\alpha\omega_0}&0\\0&1\end{pmatrix}.
\end{equation}
A direct matrix computation gives $\e^{\ii t\alpha \omega_0 K}K\e^{-\ii t\alpha \omega_0 K}=K$ and, since $K$ acts diagonally in the spin basis,
\begin{equation}
    \e^{\ii t\alpha \omega_0 K}\sigma_+\e^{-\ii t\alpha \omega_0 K}=\e^{+\ii t\alpha\omega_0}\sigma_+,\qquad \e^{\ii t\alpha \omega_0 K}\sigma_-\e^{-\ii t\alpha \omega_0 K}=\e^{-\ii t\alpha\omega_0}\sigma_-,
\end{equation}
in summary, $\e^{\ii t\alpha \omega_0 K}\sigma_\pm\e^{-\ii t\alpha \omega_0 K}=\e^{\pm\ii t\alpha\omega_0}\sigma_\pm$.

The field component is handled by the following lemma.
\begin{lemma}\label{lem:field_component}
Let $f\in\sps^-$. Then $\e^{\ii t\alpha\dOmega}\a{f}\e^{-\ii t\alpha\dOmega}=\a{f_t}$ and $\e^{\ii t\alpha\dOmega}\adag{f}\e^{-\ii t\alpha\dOmega}=\adag{f_t}$, where $f_t(k):=\e^{\ii\alpha\omega(k)t}f(k)$.
\end{lemma}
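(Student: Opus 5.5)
The plan is to prove the identity first on a dense, well-behaved set of vectors and then extend it by continuity in the $\fock^+\to\fock^-$ topology.

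\textbf{Step 1: $f\in\sps$ acting on finite-particle vectors.} Take $f\in\sps$ and let $\mathcal{F}_{\rm fin}$ be the finite-particle vectors whose components lie in the form domain of $\dOmega$. The unitary $\e^{-\ii t\alpha\dOmega}$ acts on the $n$-particle sector by multiplication by $\e^{-\ii t\alpha\sum_{i=1}^n\omega(k_i)}$. The annihilation operator acts as
\begin{equation*}
(\a{f}\psi)^{(n)}(k_1,\dots,k_n)=\sqrt{n+1}\int\overline{f(k)}\,\psi^{(n+1)}(k,k_1,\dots,k_n)\,\d\mu(k).
\end{equation*}
Conjugating, the phases $\e^{-\ii t\alpha\omega(k_i)}$ for $i=1,\dots,n$ cancel against the outer factor $\e^{\ii t\alpha\sum_{i=1}^n\omega(k_i)}$. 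The only surviving phase is $\e^{-\ii t\alpha\omega(k)}$ inside the integral, which combines with $\overline{f(k)}$ to give $\overline{f_t(k)}$. This proves the annihilation identity on $\mathcal{F}_{\rm fin}$.

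The creation identity follows in one of two ways. Either take adjoints, using that $\e^{\pm\ii t\alpha\dOmega}$ is unitary and preserves $\fock^\pm$ isometrically. Or repeat the same symmetrized-tensor computation directly, where the extra factor now produces $\e^{\ii t\alpha\omega(k)}f(k)=f_t(k)$.

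\textbf{Step 2: general $f\in\sps^-$.} Note first that $|f_t|=|f|$ pointwise, so $\|f_t\|_{-1}=\|f\|_{-1}$ and $f_t\in\sps^-$. Approximate $f$ by $f_n\in\sps$ with $\|f_n-f\|_{-1}\to0$, for instance by truncating $f$ to the sets where $|f|\le n$ and $\omega\le n$. By linearity of $f\mapsto\a{f}$ (antilinear) and Proposition~\ref{prop:af},
\begin{equation*}
\|\a{f_n}-\a{f}\|_{\mathcal{B}(\fock^+,\fock)}\le\|f_n-f\|_{-1},
\end{equation*}
and likewise for $(f_n)_t\to f_t$, since $\|(f_n)_t-f_t\|_{-1}=\|f_n-f\|_{-1}$. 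Because $\e^{\pm\ii t\alpha\dOmega}$ commutes with $\dOmega$, it is an isometry of $\fock^+$ and of $\fock$, and extends to an isometry of $\fock^-$. Hence both sides of the identity are bounded maps $\fock^+\to\fock$, depending continuously on $f$ in $\|\cdot\|_{-1}$.

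Next, density of $\mathcal{F}_{\rm fin}$ in $\fock^+$ extends the Step 1 identity from $\mathcal{F}_{\rm fin}$ to all of $\fock^+$ for each $f_n$. Passing to the limit $n\to\infty$ then gives the identity for $f$. For the creation operator, use the duality relation $\adag{f}=\a{f}^*$ together with the unitarity of the conjugating maps with respect to the $(\cdot,\cdot)_{+,-}$ pairing.

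\textbf{Main obstacle.} The key step is the density and continuity argument in the Sobolev scale, specifically justifying that $\e^{\ii t\alpha\dOmega}$ extends isometrically to $\fock^-$ and intertwines correctly with the duality pairing. Functional calculus for $\dOmega$ handles this: the conjugating operator commutes with $(\dOmega+1)^{1/2}$. Once that is in place, the rest is bookkeeping.
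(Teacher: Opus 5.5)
Your proof is correct and uses the same core computation as the paper. On each $n$-particle sector the phases from the two conjugating exponentials cancel on the external variables, leaving only $\e^{-\ii t\alpha\omega(\kappa)}$ on the integrated variable, which turns $\overline{f}$ into $\overline{f_t}$; the creation identity then follows by adjunction with respect to the $\fock^+$--$\fock^-$ pairing. The paper skips your approximation through $f_n\in\sps$ and finite-particle vectors, applying the sector formula for $\a{f}$ directly to $f\in\sps^-$ and $\Psi\in\fock^+$, where the integral converges by Cauchy--Schwarz with weights $\omega^{\mp1/2}$; your extra density layer is harmless and makes the extension to $\fock^-$ explicit, but it is not needed.
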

\begin{proof}
Since $\e^{\ii t\alpha\dOmega}$ acts as multiplication on each $n$-particle sector,
\begin{equation}
(\e^{\ii\alpha\dOmega t}\Psi)^{(n)}(k_1,\dots,k_n)=\exp\left(\ii t\sum_{j=1}^n\alpha\omega(k_j)\right)\Psi^{(n)}(k_1,\dots,k_n),
\end{equation}
and the annihilation operator acts as
\begin{equation}
    (\a{f}\Psi)^{(n)}(k_1,\dots,k_n)=\sqrt{n+1}\int\overline{f(\kappa)}\Psi^{(n+1)}(k_1,\dots,k_{n},\kappa)\;\mathrm{d}\mu(\kappa),
\end{equation}
a direct computation gives, for every $n\geq 0$,
\begin{align}
   &\left(\a{f}\e^{-\ii\alpha\dOmega t}\Psi\right)^{(n)}(k_1,\dots,k_n)\nonumber\\
   &\quad=\sqrt{n+1}\int\overline{f(\kappa)}\exp\!\left(-\ii t\alpha\Bigl(\sum_{j=1}^n\omega(k_j)+\omega(\kappa)\Bigr)\right)\Psi^{(n+1)}(k_1,\dots,k_{n},\kappa)\;\mathrm{d}\mu(\kappa)\nonumber\\
   &\quad=\sqrt{n+1}\int\overline{f(\kappa)\e^{\ii\alpha\omega(\kappa)t}}\exp\!\left(-\ii \alpha t\sum_{j=1}^n\omega(k_j)\right)\Psi^{(n+1)}(k_1,\dots,k_{n},\kappa)\;\mathrm{d}\mu(\kappa),
\end{align}
whence
\begin{align}
     \left(\e^{+\ii\alpha\dOmega t}\a{f}\e^{-\ii\alpha\dOmega t}\Psi\right)^{(n)}(k_1,\dots,k_n)&=\sqrt{n+1}\int\overline{f(\kappa)\e^{\ii\alpha\omega(\kappa)t}}\Psi^{(n+1)}(k_1,\dots,k_{n},\kappa)\;\mathrm{d}\mu(\kappa)\nonumber\\
     &=(\a{f_t}\Psi)^{(n)}(k_1,\dots,k_n).
\end{align}
The creation operator identity follows by adjunction.
\end{proof}

\begin{proposition}\label{prop:expressions}
    The maps $\hat{H}_j(t)$, $j=1,2$, are given by
    \begin{align}
        \hat{H}_1(t)&=\sigma_+\otimes\left(\a{f_t^-}+\adag{f_t^+}\right)+\sigma_-\otimes\left(\a{f^+_t}+\adag{f_t^-}\right)\\
        \hat{H}_2(t)&=\sigma_+\otimes\a{f_t^-}+\sigma_-\otimes\adag{f_t^-},
    \end{align}
    where
    \begin{equation}
        f^\pm_t(k)=\e^{\ii t\alpha(\omega(k)\pm\omega_0)}f(k).
    \end{equation}
\end{proposition}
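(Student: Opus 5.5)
The plan is a direct computation: factor the free propagator and conjugate the interactions $V_1$, $V_2$ term by term. Since $\e^{-\ii t\alpha H_0}=\e^{-\ii t\alpha\omega_0K}\otimes\e^{-\ii t\alpha\dOmega}$, for any elementary tensor $A\otimes B$ with $A\in\mathcal{B}(\mathbb{C}^2)$ and $B\in\mathcal{B}(\fock^+,\fock^-)$ we have
\begin{equation}
\e^{\ii t\alpha H_0}(A\otimes B)\e^{-\ii t\alpha H_0}=\left(\e^{\ii t\alpha\omega_0K}A\e^{-\ii t\alpha\omega_0K}\right)\otimes\left(\e^{\ii t\alpha\dOmega}B\e^{-\ii t\alpha\dOmega}\right).
\end{equation}
This identity is to be understood in $\mathcal{B}(\hfrak^+,\hfrak^-)$. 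It is legitimate because $\e^{\pm\ii t\alpha H_0}$ commutes with $H_0$ and therefore acts isometrically on $\hfrak^+$ and, by duality, on $\hfrak^-$. First I will check it on elementary tensors $u\otimes\psi$ with $\psi\in\fock^+$ through the duality pairing. Then I will extend it by linearity and continuity, using the density of the algebraic tensor product in $\hfrak^+=\mathbb{C}^2\otimes\fock^+$ together with the norm equivalence of Lemma~\ref{lem:scale_norms}(3).

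Next I expand the two interactions into four elementary terms:
\begin{equation}
V_1=\sigma_+\otimes\a{f}+\sigma_+\otimes\adag{f}+\sigma_-\otimes\a{f}+\sigma_-\otimes\adag{f},
\end{equation}
while $V_2$ keeps only the first and last of these. For the spin factors I use $\e^{\ii t\alpha\omega_0K}\sigma_\pm\e^{-\ii t\alpha\omega_0K}=\e^{\pm\ii t\alpha\omega_0}\sigma_\pm$. For the field factors I use Lemma~\ref{lem:field_component}, which turns $\a{f}$ into $\a{f_t}$ and $\adag{f}$ into $\adag{f_t}$, with $f_t=\e^{\ii\alpha\omega t}f$.

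The remaining step is to absorb the scalar phase into the smearing function. Here the key fact is that $f\mapsto\a{f}$ is antilinear while $f\mapsto\adag{f}$ is linear, both as maps from $\sps^-$ into the relevant operator spaces. This is immediate from the formula for $\a{f}$ in the proof of Lemma~\ref{lem:field_component} and from taking adjoints. The four cases then work out as follows:
\begin{itemize}
\item $\e^{\ii t\alpha\omega_0}\a{f_t}=\a{\e^{-\ii t\alpha\omega_0}f_t}=\a{f_t^-}$;
\item $\e^{\ii t\alpha\omega_0}\adag{f_t}=\adag{f_t^+}$;
\item $\e^{-\ii t\alpha\omega_0}\a{f_t}=\a{f_t^+}$;
\item $\e^{-\ii t\alpha\omega_0}\adag{f_t}=\adag{f_t^-}$.
\end{itemize}
The first two come with $\sigma_+$ and the last two with $\sigma_-$. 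Summing the four terms gives $\hat H_1(t)$, and keeping only the first and last gives $\hat H_2(t)$, exactly as stated. One also notes that $|f_t^\pm|=|f|$, so $f_t^\pm\in\sps^-$ with the same $\|\cdot\|_{-1}$ norm and all expressions make sense.

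The only delicate point is the first step. Conjugation of maps in $\mathcal{B}(\hfrak^+,\hfrak^-)$ by the free unitary group has to be justified, together with its factorization across the tensor product. This is why I handle it through the isometric action on the scale of spaces and a density argument, rather than manipulating unbounded operators directly. Everything after that step is algebraic.
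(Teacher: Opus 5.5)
Your proposal is correct and is essentially the paper's proof: factor $\e^{-\ii t\alpha H_0}$ across the tensor product, apply the spin phases $\e^{\pm\ii t\alpha\omega_0}$ and Lemma~\ref{lem:field_component}, and then absorb the phases into the smearing function using antilinearity of $a$ and linearity of $a^\dagger$, which is exactly what the paper's two displayed conjugation identities encode. Your extra justification of the conjugation in $\mathcal{B}(\hfrak^+,\hfrak^-)$ is a harmless refinement that the paper leaves implicit; the density step is in fact trivial, since $\mathbb{C}^2$ is finite-dimensional and so every element of $\mathbb{C}^2\otimes\fock^+$ is already a finite sum of elementary tensors.
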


\begin{proof}
    By definition,
    \begin{align}
        \hat{H}_1(t)&=\e^{\ii t\alpha H_{0}}\left[(\sigma_++\sigma_-)\otimes(\a{f}+\adag{f})\right]\e^{-\ii t\alpha H_{0}},\\
        \hat{H}_2(t)&=\e^{\ii t\alpha H_{0}}\left[\sigma_+\otimes\a{f}+\sigma_-\otimes\adag{f}\right]\e^{-\ii t\alpha H_{0}}.
    \end{align}
Using $\e^{-\ii t\alpha H_0}=\e^{-\ii t\alpha \omega_0 K}\otimes\e^{-\ii t\alpha \dOmega}$, together with the identities established above for the spin component and with Lemma~\ref{lem:field_component} for the field component,
\begin{align}
      \e^{\ii t\alpha H_0}\sigma_\pm\otimes\a{f}\e^{-\ii t\alpha H_0}&=\sigma_\pm\otimes\a{\e^{\mp\ii t\alpha\omega_0}f_t},\\
    \e^{\ii t\alpha H_0}\sigma_\pm\otimes\adag{f}\e^{-\ii t\alpha H_0}&=\sigma_\pm\otimes\adag{\e^{\pm\ii t\alpha\omega_0}f_t},
\end{align}
the claim follows.
\end{proof}

\begin{remark}\label{rem:intuition}
From Proposition~\ref{prop:expressions}, the difference between the two interaction-picture Hamiltonians is
\begin{equation}\label{eq:difference}
    \hat{H}_1(t)-\hat{H}_2(t)=\sigma_+\otimes\adag{f_t^+}+\sigma_-\otimes\a{f_t^+},
\end{equation}
which consists precisely of the so-called \textit{counter-rotating terms}, oscillating at frequency $\alpha(\omega(k)+\omega_0)$. Note in particular that the $\Delta K\otimes\mathrm{I}$ terms would have cancelled exactly had they been present. The relative action $\hat{S}_{21}(t)$ therefore accumulates only these counter-rotating terms, and its smallness is a direct consequence of their rapid oscillation: since $\omega(k)+\omega_0\geq m+\omega_0>0$ for all $k\in X$, the frequency $\alpha(\omega(k)+\omega_0)$ diverges as $\alpha\to+\infty$ uniformly in $k$, so the integral of~\eqref{eq:difference} over any time interval decays as $O(\alpha^{-1})$. By contrast, the co-rotating terms in $\hat{H}_2(t)$ oscillate at frequency $|\omega(k)-\omega_0|$, which can vanish for $k$ near resonance; these are precisely the terms retained by the RWA.
\end{remark}

\subsection{Bounding the error of the RWA}\label{sec:rwa_bound_subsec}

Before we can apply the theorem we will need some preparatory results. As in the previous sections $H_1, V_1, \hat{H}_1$ will denote, respectively, the Hamiltonian, the interaction term and the interaction-picture Hamiltonian of the exact dynamics; $H_2, V_2, \hat{H}_2$ will denote the corresponding operators in the RWA approximation. Besides the norms of Lemma~\ref{lem:scale_norms} we shall also need the norms induced by the interacting Hamiltonians themselves,
\begin{equation}\label{eq:Hj_norm}
    \|\Psi\|_{+,H_j}:=\sqrt{\langle \Psi, (H_j +\alpha) \Psi\rangle},
\end{equation}
with $\|\cdot\|_{-,H_j}$ the corresponding dual norm. The following lemma establishes the relevant equivalences.

\begin{lemma}\label{lem:equivalences}
Let $\|\cdot\|_{+,H_j}$ be as in Eq.~\eqref{eq:Hj_norm}. Then
  \begin{enumerate}
  \item There exists $\alpha_0$ such that, for all $\alpha>\alpha_0$, $j=1,2$ and $\Psi\in\hfrak^+$,
    
\begin{equation}
\frac{1}{2} \|\Psi\|_{\pm,H_j} \leq \|\Psi\|_{\pm, \alpha} \leq 2 \|\Psi\|_{\pm,H_j},
\end{equation}

  \item There exist $\alpha_0$ and $C>0$ such that, for all $\alpha>\alpha_0$, $j=1,2$ and $\Psi\in\Dom(H_0)\cap\Dom(H_j)$,
    
\begin{equation}
 \| \alpha H_0 \Psi\| \leq C \|(H_j+\alpha)\Psi\|.
\end{equation}

  \end{enumerate}
\end{lemma}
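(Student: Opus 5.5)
For part (1) I will work with the quadratic forms directly. Since $H_j+\alpha=\alpha H_0+\alpha+V_j$ in the sense of forms on $\hfrak^+$, the relative form bound from the proof of Proposition~\ref{prop:sa_and_bound} gives, for $\Psi\in\hfrak^+$,
\begin{equation}
    |\langle\Psi,V_j\Psi\rangle|\leq\frac{2\|f\|_{-1}}{\alpha}\|\Psi\|_{+,\alpha}^2 .
\end{equation}
It follows that
\begin{equation}
    \Bigl(1-\tfrac{2\|f\|_{-1}}{\alpha}\Bigr)\|\Psi\|_{+,\alpha}^2\leq\|\Psi\|_{+,H_j}^2\leq\Bigl(1+\tfrac{2\|f\|_{-1}}{\alpha}\Bigr)\|\Psi\|_{+,\alpha}^2 .
\end{equation}
I will choose $\alpha_0=\frac{8}{3}\|f\|_{-1}$, so that for $\alpha>\alpha_0$ we have $2\|f\|_{-1}/\alpha<3/4$. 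The factors then lie in $(1/4,7/4)\subset(1/4,4)$, and taking square roots gives the $+$ equivalence with constants $1/2$ and $2$.

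For the $-$ norms I will use duality. The dual norm is $\|\xi\|_{-,\star}=\sup_{\Psi\neq0}|(\Psi,\xi)_{+,-}|/\|\Psi\|_{+,\star}$, so equivalence of the $+$ norms with constants $(c,C)$ gives equivalence of the dual norms with the reciprocal constants. Because the constants $1/2$ and $2$ are symmetric under this inversion, the same statement holds verbatim for the $-$ norms.

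For part (2) I will use the relative operator bound of Proposition~\ref{prop:rel_bound}. On $\Dom(H_0)\cap\Dom(H_j)$ we have $H_j\Psi=\alpha H_0\Psi+V_j\Psi$, where $V_j\Psi\in\hfrak$ by the Nelson estimates~\eqref{eq:nelson}. Proposition~\ref{prop:rel_bound} gives $\|V_j\Psi\|\leq\frac{a}{\alpha}\|\alpha H_0\Psi\|+b\|\Psi\|$ with $a<1$ and $b$ independent of $\alpha$. Then
\begin{equation}
    \|\alpha H_0\Psi\|\leq\|H_j\Psi\|+\frac{a}{\alpha}\|\alpha H_0\Psi\|+b\|\Psi\|,
\end{equation}
and for $\alpha\geq1$ we have $a/\alpha\leq a<1$, which yields $(1-a)\|\alpha H_0\Psi\|\leq\|H_j\Psi\|+b\|\Psi\|$.

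It then remains to control $\|H_j\Psi\|+b\|\Psi\|$ by $\|(H_j+\alpha)\Psi\|$. Here I will use $H_j+\alpha\geq\alpha-2\|f\|_{-1}\geq\alpha/2$ for $\alpha\geq4\|f\|_{-1}$, from Proposition~\ref{prop:sa_and_bound}. This gives $\|\Psi\|\leq\frac{2}{\alpha}\|(H_j+\alpha)\Psi\|$, and hence
\begin{equation}
    \|H_j\Psi\|\leq\|(H_j+\alpha)\Psi\|+\alpha\|\Psi\|\leq3\|(H_j+\alpha)\Psi\|.
\end{equation}
Combining these, I obtain $C=(3+2b)/(1-a)$ for $\alpha\geq\max\{1,4\|f\|_{-1}\}$.

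The main obstacle is the domain bookkeeping in part (2): I need the identity $H_j\Psi=\alpha H_0\Psi+V_j\Psi$ to hold as vectors in $\hfrak$, not merely in $\hfrak^-$. This holds because $V_j$ is relatively operator bounded with bound less than one, so Kato--Rellich gives $\Dom(H_j)=\Dom(H_0)$ and the KLMN operator coincides with the operator sum. Everything else is elementary.
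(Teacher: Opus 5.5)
Your proof is correct and follows the paper's argument. Part (1) uses the relative form bound $|\langle\Psi,V_j\Psi\rangle|\le\frac{2\|f\|_{-1}}{\alpha}\|\Psi\|_{+,\alpha}^2$ followed by duality, and part (2) uses the relative operator bound of Proposition~\ref{prop:rel_bound} together with $\|(H_j+\alpha)\Psi\|\geq\frac{\alpha}{2}\|\Psi\|$ for $\alpha\geq4\|f\|_{-1}$. You are slightly more explicit than the paper: you give concrete values of $\alpha_0$ and $C$, spell out the reciprocal-constants argument for the dual norms, and address the domain identity $H_j\Psi=\alpha H_0\Psi+V_j\Psi$ in $\hfrak$, all of which the paper treats as standard.
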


\begin{proof}
To prove \textit{(1)} we use the relative form boundedness of $V_j$
\begin{align}
  |\langle\Psi, (\alpha H_0 +\alpha)\Psi \rangle| 
    &\leq  |\langle\Psi, (\alpha H_0+ \alpha + V_j) \Psi \rangle| +  |\langle\Psi, V_j \Psi \rangle|\\
    &\leq  |\langle\Psi, (H_j + \alpha )\Psi \rangle| + \frac{2\|f\|_{-1}}{\alpha} \langle \Psi, \alpha H_0 \Psi \rangle  +   \frac{2\|f\|_{-1}}{\alpha} \alpha\langle \Psi, \Psi \rangle. 
  \end{align}
From this inequality, assuming $\alpha$ large enough and using the relative form boundedness again, we obtain
\begin{equation}
 \left(1- \frac{2\|f\|_{-1}}{\alpha}\right)  |\langle\Psi, (\alpha H_0 +\alpha)\Psi \rangle|  \leq  |\langle\Psi, (H_j+\alpha) \Psi \rangle|  \leq  \left(1+ \frac{2\|f\|_{-1}}{\alpha}\right)|\langle\Psi, (\alpha H_0 +\alpha)\Psi \rangle|. 
\end{equation}  
The proof in the dual spaces follows by standard arguments in the scale of Hilbert spaces. 
To prove \textit{(2)} notice first that, by~\eqref{eq:semibounded}, $\|(H_j+\alpha)\Psi\|\geq (\alpha-2\|f\|_{-1})\|\Psi\|\geq\frac{\alpha}{2}\|\Psi\|$ whenever $\alpha\geq4\|f\|_{-1}$, and therefore $\|H_j\Psi\|\leq\|(H_j+\alpha)\Psi\|+\alpha\|\Psi\|\leq 3 \|(H_j+\alpha)\Psi\|$. Hence we have
\begin{equation}
\|\alpha H_0 \Psi\| \leq  \|H_j\Psi\| +  \|V_j \Psi\| \leq \|H_j\Psi\| + \frac{a}{\alpha}\|\alpha H_0 \Psi\| + b \|\Psi\|,
\end{equation}
from which it follows
\begin{equation}
\left(1-\frac{a}{\alpha}\right)\|\alpha H_0 \Psi\| \leq  \|H_j\Psi\| + b \|\Psi\| \leq 3\|(H_j+\alpha)\Psi\| + \frac{2b}{\alpha}\|(H_j+\alpha)\Psi\|.
\end{equation}
\end{proof}

What we shall do next is to prove that we can apply Theorem~\ref{thm:bound} to the interaction picture Hamiltonians. We will need several preparatory lemmas.

\begin{lemma}\label{lem:bound_potential}
Let $\Psi\in\hfrak^+$. For $j=1,2$, we have
\begin{equation}
\|V_j\Psi\|_- \leq 2 \|f\|_{-1}\|\Psi\|_+.
\end{equation}
\end{lemma}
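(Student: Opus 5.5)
We argue by duality: the $\hfrak^-$-norm of $V_j\Psi$ is the supremum of $|\langle\Phi,V_j\Psi\rangle|$ over $\Phi\in\hfrak^+$ with $\|\Phi\|_+\leq1$, the pairing being the duality pairing between $\hfrak^+$ and $\hfrak^-$. So it suffices to prove the sesquilinear estimate
\begin{equation}
|\langle\Phi,V_j\Psi\rangle|\leq 2\|f\|_{-1}\|\Phi\|_+\|\Psi\|_+,\qquad \Phi,\Psi\in\hfrak^+,
\end{equation}
which is the off-diagonal version of the bound already used in the proof of Proposition~\ref{prop:rel_bound}.

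To obtain it, we split $V_j$ into its summands. For $j=2$ these are $\sigma_+\otimes\a{f}$ and $\sigma_-\otimes\adag{f}$. For $j=1$ we group $V_1=\sigma^\sharp\otimes\a{f}+\sigma^\sharp\otimes\adag{f}$ with $\sigma^\sharp=\sigma_++\sigma_-$. In either case $\|\sigma^\sharp\|_{\mathcal{B}(\mathbb{C}^2)}=1$, since $\sigma_++\sigma_-$ is the Pauli matrix $\sigma_x$, which is unitary.

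For a single term $\sigma^\sharp\otimes a^\sharp(f)$, we use that $\sigma^\sharp\otimes\mathrm{I}$ acts boundedly with norm $1$ on both $\mathbb{C}^2\otimes\fock^{+}$ and $\mathbb{C}^2\otimes\fock^{-}$. By~\eqref{eq:normbounds}, $\mathrm{I}\otimes a^\sharp(f)$ maps $\mathbb{C}^2\otimes\fock^+$ to $\mathbb{C}^2\otimes\fock^-$ with norm at most $\|f\|_{-1}$; this tensor-product extension is immediate because the $\mathbb{C}^2$ factor is finite-dimensional, so the space is an orthogonal sum of two copies. Hence
\begin{equation}
\|\sigma^\sharp\otimes a^\sharp(f)\,\Psi\|_{\mathbb{C}^2\otimes\fock^-}\leq\|f\|_{-1}\|\Psi\|_{\mathbb{C}^2\otimes\fock^+}.
\end{equation}

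We then pass to the $H_0$-scale norms using Lemma~\ref{lem:scale_norms}. Item~(4) gives $\|\cdot\|_-\leq\|\cdot\|_{\mathbb{C}^2\otimes\fock^-}$, and item~(3) gives $\|\cdot\|_{\mathbb{C}^2\otimes\fock^+}\leq\|\cdot\|_+$. Summing the two summands with the triangle inequality yields the factor $2$.

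The only point requiring care is the use of item~(4). It follows from item~(3) by duality: the dual norm with respect to the larger $+$-norm is the smaller one. We would check that the identification of $\hfrak^-$ with $\mathbb{C}^2\otimes\fock^-$ is compatible, that is, that both are the same space because the $+$-norms are equivalent. Beyond that, no real obstacle is expected.
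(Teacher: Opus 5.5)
Your proof is correct and follows the paper's route. The paper also bounds each of the two summands $\sigma^\sharp\otimes a^\sharp(f)$ as a map $\mathbb{C}^2\otimes\fock^+\to\mathbb{C}^2\otimes\fock^-$ by $\|f\|_{-1}$ via Proposition~\ref{prop:af}, and then passes to the $H_0$-scale norms with Lemma~\ref{lem:scale_norms}(3)--(4). Your factorization $\sigma^\sharp\otimes a^\sharp(f)=(\sigma^\sharp\otimes\mathrm{I})(\mathrm{I}\otimes a^\sharp(f))$ on $\mathbb{C}^2\otimes\fock^\pm\cong\fock^\pm\oplus\fock^\pm$ is slightly cleaner than the paper's reduction to simple tensors $u\otimes\psi$, because a norm inequality proved only on simple tensors does not automatically extend to their linear combinations.
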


\begin{proof}
It is enough to prove it for elements of the form $\Psi=u\otimes \psi \in \mathbb{C}^2\otimes \fock^+$. Using Proposition~\ref{prop:af} and Lemma~\ref{lem:scale_norms} we get
\begin{align}
  \|V_j \Psi\|_- 
    &\leq \|V_j \Psi\|_{\mathbb{C}^2\otimes\fock^-} \\
    &\leq \|u\|_{\mathbb{C}^2}\|\a{f}\psi\|_{\fock^-} + \|u\|_{\mathbb{C}^2}\|a^\dagger(f)\psi\|_{\fock^-} \\
    &\leq 2 \|u\|_{\mathbb{C}^2}\|\psi\|_{\fock^+}\|f\|_{-1} \\
    &\leq 2 \|u\otimes\psi\|_{+}\|f\|_{-1}. \qedhere
\end{align}
\end{proof}

\begin{lemma}\label{lem:commutator_estimate}
  Let $\Psi\in\hfrak^+$ and assume Hypothesis~\ref{hyp:f}. The commutator $[H_0,V_j]$, $j=1,2$, satisfies the estimate  
\begin{equation}
\|[H_0,V_j]\Psi\|_-\leq 2\left(\omega_0\|f\|_{-1} + \|\omega f\|_{-1}\right)\|\Psi\|_+.
\end{equation}
\end{lemma}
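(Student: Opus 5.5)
The plan is to compute the commutator explicitly on simple tensors and then reduce to Proposition~\ref{prop:af}, exactly as in the proof of Lemma~\ref{lem:bound_potential}. Write $V_j$ as a sum of terms $\sigma^\sharp\otimes a^\sharp(f)$, with $\sigma^\sharp\in\{\sigma_+,\sigma_-\}$ and $a^\sharp\in\{a,a^\dagger\}$: four terms for $j=1$, two for $j=2$. Since $H_0=\omega_0K\otimes\mathrm{I}+\mathrm{I}\otimes\dOmega$, each term splits as
\begin{equation}
[H_0,\sigma^\sharp\otimes a^\sharp(f)]=\omega_0[K,\sigma^\sharp]\otimes a^\sharp(f)+\sigma^\sharp\otimes[\dOmega,a^\sharp(f)].
\end{equation}
The spin part is elementary: $[K,\sigma_\pm]=\pm\sigma_\pm$. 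For the field part I would use the standard identities $[\dOmega,\adag{f}]=\adag{\omega f}$ and $[\dOmega,\a{f}]=-\a{\omega f}$. These can be obtained by differentiating Lemma~\ref{lem:field_component} at $t=0$, or directly from the sector-wise formula for $\a{f}$ used there, followed by adjunction. Under Hypothesis~\ref{hyp:f} one has $\|\omega f\|_{-1}=\|f\|_{+1}<\infty$ (Remark~\ref{rem:fplus}), so $a^\sharp(\omega f)\in\mathcal{B}(\fock^+,\fock^-)$ with norm at most $\|\omega f\|_{-1}$ by~\eqref{eq:normbounds}.

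The main obstacle is making sense of $[H_0,V_j]$ as a map $\hfrak^+\to\hfrak^-$, since $H_0V_j\Psi$ is not a priori defined for $\Psi\in\hfrak^+$. I would define the commutator as a form on a core, for instance finite-particle vectors with components in $\Dom(\dOmega)$, or $\Dom(H_0)$. On such vectors the sector-wise computation is legitimate and gives the identity above. The resulting bound then extends the commutator by continuity to $\mathcal{B}(\hfrak^+,\hfrak^-)$, and this extension is what the statement means. The key point is that in the form $\langle H_0\Phi,V_j\Psi\rangle-\langle V_j\Phi, H_0\Psi\rangle$ the terms $\sum\omega(k_i)$ cancel, except for the single mode carried by $f$. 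That leftover mode is exactly what produces $\omega f$ in place of $f$.

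Finally I would estimate on simple tensors $\Psi=u\otimes\psi$ and pass to general $\Psi$ by density/linearity, as in Lemma~\ref{lem:bound_potential}. The key is to group terms so that the factor is $2$ rather than $4$. I would write the commutator as $\mathcal{A}\otimes(\omega_0 a^\sharp(f))+\mathcal{B}\otimes a^\sharp(\omega f)$, collecting together the terms of the same field-operator type whose spin matrices are $\pm\sigma_\pm$. Each $\sigma_\pm$ has norm $1$, and $\sigma_+,\sigma_-$ have orthogonal ranges, so combinations like $\sigma_+\otimes A+\sigma_-\otimes B$ have norm at most $\max(\|A\|,\|B\|)$. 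Alternatively, as in Proposition~\ref{prop:rel_bound}, I would just count one annihilation part and one creation part per field operator. This yields
\begin{equation}
\|[H_0,V_j]\Psi\|_{\mathbb{C}^2\otimes\fock^-}\leq 2\left(\omega_0\|f\|_{-1}+\|\omega f\|_{-1}\right)\|u\|\,\|\psi\|_{\fock^+}.
\end{equation}
Lemma~\ref{lem:scale_norms}(3)--(4) then converts the left side to $\|\cdot\|_-$ and the right side to $\|\cdot\|_+$.
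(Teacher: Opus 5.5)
Your proposal is essentially the paper's proof. It uses the same ingredients: the splitting $\omega_0[K,\sigma^\sharp]\otimes a^\sharp(f)+\sigma^\sharp\otimes[\dOmega,a^\sharp(f)]$, the identities $[\dOmega,\a{f}]=-\a{\omega f}$ and $[\dOmega,\adag{f}]=\adag{\omega f}$, Proposition~\ref{prop:af}, and Lemma~\ref{lem:scale_norms}. Your form-on-a-core definition of the commutator and the grouping $[H_0,V_1]=(\sigma_+-\sigma_-)\otimes\omega_0(\a{f}+\adag{f})+(\sigma_++\sigma_-)\otimes(\adag{\omega f}-\a{\omega f})$, with both spin matrices of norm $1$, only make explicit what the paper leaves implicit.

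One caution: a bound on simple tensors $u\otimes\psi$ does not pass to general $\Psi$ by linearity. State the estimate directly as an operator-norm bound, using $\|\sigma\otimes A\|=\|\sigma\|\,\|A\|$ from $\mathbb{C}^2\otimes\fock^+$ to $\mathbb{C}^2\otimes\fock^-$. Your $\max(\|A\|,\|B\|)$ claim also holds for general $\Psi$, but its justification needs that $\sigma_\pm$ have orthogonal initial spaces as well as orthogonal ranges.
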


\begin{proof}
  This is a straightforward calculation using the canonical commutation relations, cf.~\cite{bratteli1997,reed1975ii}, in the form
\begin{equation}
[\dOmega, \a{f}] = -a(\omega f), \qquad [\dOmega, a^\dagger(f)] = a^\dagger(\omega f).
\end{equation}
  Taking into account the definition of the interaction terms $V_j$ of Eqs.~\eqref{eq:h} and~\eqref{eq:hrwa} and Proposition~\ref{prop:af}
  \begin{align}
  \|[H_0,V_j]\Psi\|_- 
    &\leq \|[H_0,V_j]\Psi\|_{\mathbb{C}^2\otimes\fock^-}\\
    &\leq 2\omega_0\|f\|_{-1}\|\Psi\|_{\mathbb{C}^2\otimes\fock^+} + 2 \|\omega f\|_{-1}\|\Psi\|_{\mathbb{C}^2\otimes\fock^+}\\
    &\leq 2\left(\omega_0\|f\|_{-1} + \|\omega f\|_{-1}\right)\|\Psi\|_+. \qedhere
  \end{align}
\end{proof}

\begin{lemma}\label{lem:potential_estimate}
Let $\Psi\in\Dom\left((H_0+1)^{3/2}\right)$. Then there exists $C>0$ independent of $\alpha$ such that
\begin{equation}
\|(\alpha H_0 +\alpha)^{1/2} V_j \Psi\| \leq \frac{C}{\alpha} \|(\alpha H_0 + \alpha)^{3/2}\Psi\| .
\end{equation}
\end{lemma}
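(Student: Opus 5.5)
The factors of $\alpha$ cancel exactly, so the plan is to reduce the claim to an $\alpha$-independent estimate and then prove that estimate by a commutator argument. By Lemma~\ref{lem:scale_norms}, the left-hand side equals $\sqrt{\alpha}\,\|(H_0+1)^{1/2}V_j\Psi\|$. The right-hand side equals $\frac{C}{\alpha}\alpha^{3/2}\|(H_0+1)^{3/2}\Psi\|=C\sqrt\alpha\,\|(H_0+1)^{3/2}\Psi\|$. The statement is therefore equivalent to
\begin{equation}
\|V_j\Psi\|_+\leq C\,\|(H_0+1)\Psi\|_+ ,\qquad \Psi\in\Dom\left((H_0+1)^{3/2}\right),
\end{equation}
with $C$ depending only on $\omega_0$, $\|f\|_{-1}$, $\|f\|_{L^2}$ and $\|f\|_{+1}$.

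To prove this, I expand $x^2:=\|V_j\Psi\|_+^2=\langle V_j\Psi,(H_0+1)V_j\Psi\rangle$ and move $H_0+1$ through $V_j$:
\begin{equation}
\|V_j\Psi\|_+^2=\langle V_j\Psi,V_j(H_0+1)\Psi\rangle+\langle V_j\Psi,[H_0,V_j]\Psi\rangle .
\end{equation}
For the first term I set $\Phi:=(H_0+1)\Psi\in\hfrak^+$. The Nelson-type bounds~\eqref{eq:nelson}, together with $\|\sigma^\sharp\|=1$ and Lemma~\ref{lem:scale_norms}(3), give $\|V_j\chi\|\leq 2(\|f\|_{-1}+\|f\|_{L^2})\|\chi\|_+=:c_1\|\chi\|_+$ for all $\chi\in\hfrak^+$. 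Hence the first term is bounded by
\begin{equation}
\|V_j\Psi\|\,\|V_j\Phi\|\leq c_1^2\,\|\Psi\|_+\|\Phi\|_+\leq c_1^2\,\|(H_0+1)\Psi\|_+^2 ,
\end{equation}
where the last step uses $H_0+1\geq1$. For the second term I pair $\hfrak^+$ with $\hfrak^-$ and invoke Lemma~\ref{lem:commutator_estimate}. With $c_2:=2(\omega_0\|f\|_{-1}+\|\omega f\|_{-1})$, which is finite by Hypothesis~\ref{hyp:f} and Remark~\ref{rem:fplus}, the second term is bounded by $\|V_j\Psi\|_+\,c_2\|\Psi\|_+$. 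This yields the quadratic inequality $x^2\leq A^2+Bx$, where $A=c_1\|(H_0+1)\Psi\|_+$ and $B=c_2\|(H_0+1)\Psi\|_+$. It follows that $x\leq A+B$, so the reduced estimate holds with $C=c_1+c_2$.

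The main obstacle is justifying these manipulations. Writing $\langle V_j\Psi,(H_0+1)V_j\Psi\rangle$ and commuting $H_0$ through $V_j$ presupposes that $V_j\Psi\in\hfrak^+$, which is exactly what is to be proved. I would handle this as follows:
\begin{enumerate}
\item First prove the inequality on the dense subspace of finite-particle vectors with components in $\Dom(H_0^{3/2})$, where $V_j\Psi$ is again a finite-particle vector. If necessary, also cut off the form factor to $f\mathbf{1}_{\{\omega\leq N\}}$, so that every quantity is finite and the CCR identity $[\dOmega,\a{f}]=-\a{\omega f}$ holds literally. The constants depend on $f$ only through norms that are monotone under this cutoff.
\item Extend to general $\Psi\in\Dom\left((H_0+1)^{3/2}\right)$ by density in the graph norm of $(H_0+1)^{3/2}$. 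Here I would use that $V_j\Psi_n\to V_j\Psi$ in $\hfrak$ (by the relative operator bound of Proposition~\ref{prop:rel_bound}), together with closedness of $(H_0+1)^{1/2}$. A uniformly bounded $\|V_j\Psi_n\|_+$ then forces $V_j\Psi\in\hfrak^+$ with the same bound.
\item Remove the cutoff $N\to\infty$ in the same way.
\end{enumerate}
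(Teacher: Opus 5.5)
Your proof is correct, but it closes the estimate differently from the paper. The paper argues linearly at the level of vectors. It splits off $[(\alpha H_0+\alpha)^{1/2},V_j]\Psi$ and rewrites this commutator as $(\alpha H_0+\alpha)^{-1/2}V_j(\alpha H_0+\alpha)-V_j(\alpha H_0+\alpha)^{1/2}+(\alpha H_0+\alpha)^{-1/2}[\alpha H_0+\alpha,V_j]$. It then bounds the three pieces by Lemma~\ref{lem:bound_potential}, Proposition~\ref{prop:rel_bound} and Lemma~\ref{lem:commutator_estimate} respectively, tracking the powers of $\alpha$ term by term through Lemma~\ref{lem:scale_norms}.

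You instead remove $\alpha$ at the outset, which makes the exact cancellation of the scaling transparent. You then commute $H_0+1$ through $V_j$ inside the quadratic form $\|V_j\Psi\|_+^2$ and close by absorption. The price of absorption is that $x^2\le A^2+Bx$ says nothing unless $x<\infty$ is known in advance, which is why you need the approximation scheme. Your three steps handle this correctly. In doing so you address a point the paper passes over: its very first triangle inequality already presupposes $V_j\Psi\in\hfrak^+$.

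The two arguments are closer than they appear. Suppose you estimate your first term by $\|V_j\Psi\|_+\|V_j(H_0+1)\Psi\|_-$ via Lemma~\ref{lem:bound_potential}, rather than by the Nelson bound. You then get $x\le\|V_j(H_0+1)\Psi\|_-+\|[H_0,V_j]\Psi\|_-$, which is precisely the paper's first and third terms in $\alpha$-free form, and $\|f\|_{L^2}$ no longer enters the constant. Conversely, the paper's middle pair $\pm V_j(\alpha H_0+\alpha)^{1/2}\Psi$ cancels identically, so its appeal to relative operator boundedness is in fact dispensable.
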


\begin{proof}
First notice that $\|(\alpha H_0 +\alpha)^{1/2} V_j \Psi\| \leq  \| V_j(\alpha H_0 +\alpha)^{1/2} \Psi\| + \|[(\alpha H_0 +\alpha)^{1/2}, V_j] \Psi\|$. From the properties of the commutator,
\begin{equation}
\left[(\alpha H_0 +\alpha)^{1/2},V_j \right] = (\alpha H_0 +\alpha)^{-1/2}V_j(\alpha H_0 +\alpha) - V_j(\alpha H_0 +\alpha)^{1/2} + (\alpha H_0 +\alpha)^{-1/2}\left[\alpha H_0 + \alpha, V_j \right].
\end{equation}
Now we can bound the different terms.
First, using Lemma~\ref{lem:bound_potential}  we have
\begin{align}
  \|(\alpha H_0 +\alpha)^{-1/2}V_j(\alpha H_0 +\alpha)\Psi\| 
    &=\|V_j(\alpha H_0 +\alpha)\Psi\|_{-,\alpha} \\
    &=\frac{1}{\sqrt{\alpha}}\|V_j(\alpha H_0 +\alpha)\Psi\|_{-}\\
    &\leq \frac{2}{\sqrt{\alpha}}\|f\|_{-1}\|(\alpha H_0 +\alpha)\Psi\|_+\\
    &=\frac{2}{{\alpha}}\|f\|_{-1}\|(\alpha H_0 +\alpha)\Psi\|_{+,\alpha}\\
    &=\frac{2}{{\alpha}}\|f\|_{-1}\|(\alpha H_0 +\alpha)^{3/2}\Psi\|.
\end{align}
The next term can be bounded using the relative boundedness of the potential.
\begin{align}
  \|V_j(\alpha H_0 +\alpha)^{1/2}\Psi \| 
    &\leq \frac{a}{\alpha}\|\alpha H_0 (\alpha H_0 +\alpha)^{1/2} \Psi\| + b \|(\alpha H_0 +\alpha)^{1/2}\Psi\|\\
    &\leq  \frac{a}{\alpha}\|(\alpha H_0 +\alpha)^{3/2} \Psi\| + \frac{a}{\alpha}\|\alpha(\alpha H_0 +\alpha)^{1/2} \Psi\| + b \|(\alpha H_0 +\alpha)^{1/2}\Psi\|.
\end{align}
Noticing that $\alpha\|\Psi\|\leq\|(\alpha H_0 +\alpha)\Psi\|$ it follows that $\|\alpha(\alpha H_0 +\alpha)^{1/2} \Psi\|\leq \|(\alpha H_0 +\alpha)^{3/2} \Psi\|$, from which the latter two terms in the above inequality can be bounded to lead
\begin{equation}
\|V_j(\alpha H_0 +\alpha)^{1/2}\Psi \| \leq \frac{C'}{\alpha}\|(\alpha H_0 +\alpha)^{3/2}\Psi\|.
\end{equation}

The third term follows from the bound of the commutator in Lemma~\ref{lem:commutator_estimate}
\begin{align}
  \|(\alpha H_0 +\alpha)^{-1/2}\left[\alpha H_0 + \alpha, V_j \right]\Psi\|
    &= \|\left[\alpha H_0 + \alpha, V_j \right]\Psi\|_{-,\alpha}\\
    &= \frac{1}{\sqrt{\alpha}}\|\left[\alpha H_0 + \alpha, V_j \right]\Psi\|_{-}\\
    &=  {\sqrt{\alpha}}\|\left[H_0, V_j \right]\Psi\|_{-}\\
    &\leq C'' {\sqrt{\alpha}}\|\Psi\|_+\\
    &= C'' \|\Psi\|_{+,\alpha}\\
    &= C'' \|(\alpha H_0 + \alpha)^{1/2}\Psi\| \leq \frac{C''}{\alpha} \|(\alpha H_0 + \alpha)^{3/2}\Psi\|. \qedhere
\end{align}
\end{proof}

\begin{lemma}\label{lem:equivalent}
  For $j=1,2$ one has $\Dom\left((\alpha H_0 + \alpha)^{3/2}\right) = \Dom\left((H_j + \alpha)^{3/2}\right)$; moreover there exist $C$ and $\alpha_0$ such that, for all $\alpha>\alpha_0$ and all $\Psi$ in that common domain,
  \begin{align}
    \|(\alpha H_0 + \alpha)^{3/2}\Psi\| &\leq C \|(H_j + \alpha)^{3/2}\Psi\|\\
    \|(H_j + \alpha)^{3/2}\Psi\| &\leq C \|(\alpha H_0 + \alpha)^{3/2}\Psi\| .
  \end{align}
\end{lemma}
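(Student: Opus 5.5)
We prove the equivalence of the $3/2$-power norms by writing them as the product of a $1$-power norm and a $1/2$-power norm, both of which are already under control. Throughout we write $A:=\alpha H_0+\alpha$ and $B_j:=H_j+\alpha$. Both operators are strictly positive for $\alpha>4\|f\|_{-1}$, by Proposition~\ref{prop:sa_and_bound}.

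The first step is the operator-level equivalence $\|A\Psi\|\simeq\|B_j\Psi\|$ together with $\Dom A=\Dom B_j$. Proposition~\ref{prop:rel_bound} says that $V_j$ is $\alpha H_0$-bounded with relative bound $a/\alpha<1$. The Kato--Rellich theorem then gives $\Dom H_j=\Dom H_0$. It also gives $\|B_j\Psi\|\leq (1+a/\alpha)\|A\Psi\|+b\|\Psi\|\leq C\|A\Psi\|$, using $\alpha\|\Psi\|\leq\|A\Psi\|$. The reverse inequality $\|A\Psi\|\leq C\|B_j\Psi\|$ is Lemma~\ref{lem:equivalences}\textit{(2)} combined with $\alpha\|\Psi\|\leq 2\|B_j\Psi\|$. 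The form-level equivalence $\|A^{1/2}\Psi\|\simeq\|B_j^{1/2}\Psi\|$ is Lemma~\ref{lem:equivalences}\textit{(1)}.

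The second step identifies the domains. We have $\Psi\in\Dom A^{3/2}$ if and only if $\Psi\in\Dom A$ and $A\Psi\in\Dom A^{1/2}=\hfrak^+$, and analogously for $B_j$. The forward estimate runs as follows, for $\Psi\in\Dom A^{3/2}$:
\begin{equation}
\|B_j^{3/2}\Psi\|=\|B_j^{1/2}B_j\Psi\|\leq 2\|A^{1/2}B_j\Psi\|\leq 2\|A^{1/2}A\Psi\|+2\|A^{1/2}(V_j-\alpha)\Psi\|+2\alpha\|A^{1/2}\Psi\|.
\end{equation}
Here we used $B_j=A+V_j$ on $\Dom A$, and the middle term uses $B_j-A=V_j$, which we rewrite only in the form above if convenient. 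The term $\|A^{1/2}V_j\Psi\|$ is bounded by $\frac{C}{\alpha}\|A^{3/2}\Psi\|$ by Lemma~\ref{lem:potential_estimate}. This bound simultaneously shows $B_j\Psi\in\hfrak^+$, hence $\Psi\in\Dom B_j^{3/2}$. The remaining terms satisfy $\alpha\|A^{1/2}\Psi\|\leq\|A^{3/2}\Psi\|$ by the spectral theorem. So we get $\|B_j^{3/2}\Psi\|\leq C\|A^{3/2}\Psi\|$ and $\Dom A^{3/2}\subset\Dom B_j^{3/2}$.

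The main obstacle is the reverse inclusion and estimate. Lemma~\ref{lem:potential_estimate} is formulated for $\Psi\in\Dom A^{3/2}$, which we do not know a priori when we only have $\Psi\in\Dom B_j^{3/2}$. We handle this with a resolvent argument. Given $\Psi\in\Dom B_j^{3/2}$, set $\Phi=B_j\Psi\in\hfrak^+$ and $\Psi=B_j^{-1}\Phi$. We use the second resolvent identity in the form $A\Psi=\Phi-V_j\Psi$, valid since $\Dom B_j=\Dom A$. We then need $V_j\Psi\in\hfrak^+$ when $\Psi\in\Dom A$ only.

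To get this, we approximate $\Psi$ by $\Psi_n=(1+A/n)^{-1}\Psi\in\Dom A^{3/2}$. We then apply the estimate in the improved form $\|A^{1/2}V_j\Psi_n\|\leq \frac{C}{\alpha}\|A^{1/2}A\Psi_n\|$, where $A\Psi_n=\Phi_n-V_j\Psi_n$ with $\Phi_n$ the corresponding approximants of $\Phi$. A Neumann-type absorption is possible because the factor $C/\alpha$ is small for $\alpha>\alpha_0$. It gives $\|A^{1/2}A\Psi_n\|\leq\|A^{1/2}\Phi_n\|+\frac{C}{\alpha}\|A^{1/2}A\Psi_n\|$. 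Here we must control the commutator of $(1+A/n)^{-1}$ with $V_j$ uniformly in $n$, using Lemma~\ref{lem:commutator_estimate}. Absorbing the last term and letting $n\to\infty$, using lower semicontinuity of the closed operator $A^{3/2}$, yields $\Psi\in\Dom A^{3/2}$ and $\|A^{3/2}\Psi\|\leq C\|A^{1/2}B_j\Psi\|\leq 2C\|B_j^{3/2}\Psi\|$ by Lemma~\ref{lem:equivalences}\textit{(1)}. This completes both inequalities with constants independent of $\alpha$.
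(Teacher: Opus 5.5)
Your proof is correct. The forward half ($\Dom A^{3/2}\subset\Dom B_j^{3/2}$ and $\|B_j^{3/2}\Psi\|\le C\|A^{3/2}\Psi\|$) is exactly the paper's proof of the second inequality: Lemma~\ref{lem:equivalences}\textit{(1)} followed by Lemma~\ref{lem:potential_estimate}. Your splitting $V_j=(V_j-\alpha)+\alpha$ there is superfluous.

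The reverse half takes a genuinely different route. The paper does not reuse Lemma~\ref{lem:potential_estimate}. It reruns its commutator argument in the $H_j$-scale: it writes $\|A^{3/2}\Psi\|\le 2\|B_j^{3/2}\Psi\|+2\|B_j^{1/2}V_j\Psi\|$ and bounds the last term by $\frac{C}{\alpha}\|B_j^{3/2}\Psi\|$ through $\|V_jB_j\Psi\|_{-,H_j}$, $\|[\alpha H_0,V_j]\Psi\|_{-,H_j}$ and $\|V_jB_j^{1/2}\Psi\|$. That is shorter and symmetric. As written, however, it is an a priori estimate. Its first step already uses $A\Psi\in\hfrak^+$, and $B_j^{1/2}V_j\Psi$ presupposes $V_j\Psi\in\hfrak^+$; this is precisely the inclusion $\Dom B_j^{3/2}\subset\Dom A^{3/2}$ being proved.

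You instead stay in the $A$-scale, where Lemma~\ref{lem:potential_estimate} is already available, and absorb its small factor $C/\alpha$. The regularization $\Psi_n=(1+A/n)^{-1}\Psi\in\Dom A^2$ guarantees that the absorbed quantity is finite. Your route therefore genuinely establishes the domain inclusion. The price is a commutator bound that must hold uniformly in $n$.

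That bound does hold, but you should define $\Phi_n:=B_j\Psi_n$ rather than $(1+A/n)^{-1}\Phi$. With the latter choice, $A\Psi_n=\Phi_n-V_j\Psi_n$ fails, and the discrepancy is exactly the commutator. With $\Phi_n:=B_j\Psi_n$, expanding $V_j\Psi=V_j(1+A/n)\Psi_n$ in $\hfrak^-$ gives
\[
\Phi_n=(1+A/n)^{-1}\Phi+\tfrac1n(1+A/n)^{-1}[A,V_j]\Psi_n .
\]
Two facts then control the second term:
\begin{itemize}
\item $\|\tfrac An(1+A/n)^{-1}\|\le1$;
\item Lemmas~\ref{lem:scale_norms} and~\ref{lem:commutator_estimate} give $\|[A,V_j]\chi\|_{-,\alpha}\le 2(\omega_0\|f\|_{-1}+\|f\|_{+1})\|\chi\|_{+,\alpha}$, with no $\alpha$-dependence.
\end{itemize}
Together these yield $\|A^{1/2}\Phi_n\|\le\|A^{1/2}\Phi\|+2(\omega_0\|f\|_{-1}+\|f\|_{+1})\|A^{1/2}\Psi\|$, uniformly in $n$.

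Your final display drops this extra term, but it is harmless: $\|A^{1/2}\Psi\|\le 2\|B_j^{1/2}\Psi\|\le\frac4\alpha\|B_j^{3/2}\Psi\|$.

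Two smaller corrections. The relation $A\Psi=\Phi-V_j\Psi$ is just $B_j=A+V_j$ on $\Dom A$, not the second resolvent identity. Also, you never need $V_j\Dom A\subset\hfrak^+$. You only need $V_j\Psi\in\hfrak^+$ for $\Psi\in\Dom B_j^{3/2}$, which the regularization delivers.
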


\begin{proof}
  Let $\Psi\in\Dom\left((H_j + \alpha)^{3/2}\right)$. Using \textit{(1)} of Lemma~\ref{lem:equivalences},
  \begin{align}
    \|(\alpha H_0 + \alpha)^{3/2}\Psi\| 
      &\leq 2 \|(H_j+\alpha)^{1/2}(\alpha H_0 + \alpha)\Psi\|\\
      &\leq 2 \|(H_j+\alpha)^{3/2}\Psi\| + 2\|(H_j+\alpha)^{1/2}V_j\Psi\|
  \end{align}

  To bound the second term we will use a similar argument as in the previous lemma using the properties of the commutator, from which  
\begin{equation}
\|(H_j+\alpha)^{1/2}V_j\Psi\| \leq \|V_j(H_j+\alpha)\Psi\|_{-,H_j} + \|[\alpha H_0,V_j]\Psi\|_{-,H_j} + 2\|V_j (H_j+\alpha)^{1/2}\Psi\|.
\end{equation}

The bounds on these three terms are obtained in a similar way to the corresponding bounds in the proof of Lemma~\ref{lem:potential_estimate}. We shall only provide the steps that differ in the corresponding proofs.
\begin{align}
\|V_j(H_j+\alpha)\Psi\|_{-,H_j}
  &\leq \frac{2}{\sqrt{\alpha}}\|V_j(H_j+\alpha)\Psi\|_-\\
  &\leq \frac{C'}{{\alpha}}\|(H_j+\alpha)\Psi\|_{+,\alpha}\\
  &\leq \frac{C''}{{\alpha}}\|(H_j+\alpha)\Psi\|_{+,H_j}\\
  &=\frac{C''}{{\alpha}}\|(H_j+\alpha)^{3/2}\Psi\|.
\end{align}
For the second term we have, using the equivalences of the norms and Lemma~\ref{lem:commutator_estimate},
\begin{align}
  \|[\alpha H_0,V_j]\Psi\|_{-,H_j}
    &\leq C\sqrt{\alpha}\|[H_0,V_j]\Psi\|_-\\
    &\leq C'\sqrt{\alpha}\|\Psi\|_+\\
    &\leq C''\|\Psi\|_{+,H_j} \leq \frac{C'''}{\alpha}\|(H_j+\alpha)^{3/2}\Psi\|.
\end{align}
The bound on the third term follows from the relative boundedness of the potential.
\begin{align}
  \|V_j (H_j+\alpha)^{1/2}\Psi\|
    &\leq \frac{a}{\alpha}\|\alpha H_0(H_j+\alpha)^{1/2}\Psi \| + b\|(H_j+\alpha)^{1/2}\Psi\|\\
    &\leq \frac{a}{\alpha}\|(H_j+\alpha)^{3/2}\Psi \| + \frac{a}{\alpha}\|V_j(H_j+\alpha)^{1/2}\Psi \| + (a+b)\|(H_j+\alpha)^{1/2}\Psi\|
\end{align}
For $\alpha>a$ we have
\begin{equation}
\|V_j (H_j+\alpha)^{1/2}\Psi\| \leq \frac{C}{\alpha}\|(H_j+\alpha)^{3/2}\Psi \|.
\end{equation}
  Combining the previous bounds proves the first inequality in the statement of the lemma.

  For the second inequality let $\Psi\in\Dom\left((\alpha H_0 + \alpha)^{3/2}\right)$. Using the equivalence of the norms we have  
\begin{equation}
\|(H_j+\alpha)^{3/2}\Psi\| \leq 2 \|(\alpha H_0+\alpha)^{3/2}\Psi\| + 2 \|(\alpha H_0+\alpha)^{1/2}V_j\Psi\|. 
\end{equation}
  Using Lemma~\ref{lem:potential_estimate} in the second term provides the result. 
\end{proof}

\begin{proposition}\label{prop:energy_bounds}
  Let $H_j$, $j=1,2$, be the self-adjoint, semibounded operators of Proposition~\ref{prop:sa_and_bound}, and let $\alpha>2\|f\|_{-1}$. Let $U_j(t)= \e^{-\ii tH_j}$ be, respectively, the strongly continuous one-parameter unitary groups defined by them. Then $U_j(t)\Dom\left((H_j+\alpha)^{r}\right)=\Dom\left((H_j+\alpha)^{r}\right)$ for every $r\geq0$, and
\begin{equation}\label{eq:conservation}
\left\|(H_j+\alpha)^{r}U_j(t)\Psi\right\| = \left\|(H_j+\alpha)^{r}\Psi\right\|\qquad\text{for all }\Psi\in\Dom\left((H_j+\alpha)^{r}\right) .
\end{equation}
In particular, for $r=1/2$ and $r=3/2$,
\begin{align}
\|U_j(t)\Psi\|_{+,H_j} &= \|\Psi\|_{+,H_j} &&\text{for all }\Psi\in\hfrak^+=\Dom\left((H_j+\alpha)^{1/2}\right),\\
\left\|(H_j+\alpha)^{3/2}U_j(t)\Psi\right\| &= \left\|(H_j+\alpha)^{3/2}\Psi\right\| &&\text{for all }\Psi\in\Dom\left((H_j+\alpha)^{3/2}\right) .
\end{align}
\end{proposition}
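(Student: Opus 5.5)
The proof is an application of the functional calculus for the self-adjoint operator $H_j$, which is available by Proposition~\ref{prop:sa_and_bound}. Since $\alpha>2\|f\|_{-1}$, the operator $H_j+\alpha$ is strictly positive, with $H_j+\alpha\geq\alpha-2\|f\|_{-1}>0$. For every $r\geq0$ the power $(H_j+\alpha)^r$ is therefore a well-defined nonnegative self-adjoint operator, given by the Borel function $g_r(\lambda)=(\lambda+\alpha)^r$ on $\sigma(H_j)\subset[-2\|f\|_{-1},\infty)$. Likewise $U_j(t)=\e^{-\ii tH_j}$ is given by the bounded Borel function $u_t(\lambda)=\e^{-\ii t\lambda}$.

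The key step uses the spectral theorem in multiplication-operator form, or equivalently the spectral measure $\mu_\Psi$ of $H_j$ associated with $\Psi$. By definition,
\begin{equation}
\Psi\in\Dom\left((H_j+\alpha)^r\right)\iff\int|\lambda+\alpha|^{2r}\,\d\mu_\Psi(\lambda)<\infty .
\end{equation}
The spectral measure of $U_j(t)\Psi$ is $\d\mu_{U_j(t)\Psi}=|u_t|^2\d\mu_\Psi=\d\mu_\Psi$, because $|\e^{-\ii t\lambda}|=1$. Two consequences follow. First, $U_j(t)\Psi$ lies in $\Dom((H_j+\alpha)^r)$ whenever $\Psi$ does. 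Second,
\begin{equation}
\left\|(H_j+\alpha)^rU_j(t)\Psi\right\|^2=\int|\lambda+\alpha|^{2r}\d\mu_\Psi=\left\|(H_j+\alpha)^r\Psi\right\|^2 .
\end{equation}
Both facts also follow from the functional-calculus identity $g_r(H_j)u_t(H_j)\supseteq u_t(H_j)g_r(H_j)$, where equality of domains holds because $u_t$ is bounded and has a bounded inverse. The reverse inclusion of domains is obtained by applying the same argument to $U_j(-t)=U_j(t)^{-1}$, which gives $U_j(t)\Dom((H_j+\alpha)^r)=\Dom((H_j+\alpha)^r)$.

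The particular cases then follow directly. For $r=3/2$ this is just \eqref{eq:conservation}. For $r=1/2$ we need $\|\Psi\|_{+,H_j}^2=\langle\Psi,(H_j+\alpha)\Psi\rangle=\|(H_j+\alpha)^{1/2}\Psi\|^2$, which holds by the definition \eqref{eq:Hj_norm} read as the closed form of $H_j+\alpha$. We also need the identification $\Dom((H_j+\alpha)^{1/2})=\hfrak^+$, which holds because the KLMN construction in Proposition~\ref{prop:sa_and_bound} produces $H_j$ as the operator associated with a closed form on the form domain $\hfrak^+$ of $\alpha H_0$; Lemma~\ref{lem:equivalences}(1) confirms that the two norms are equivalent.

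There is no real obstacle in this argument. The only point requiring care is the identification of the form domain of $H_j$ with $\hfrak^+$, so that $\|\cdot\|_{+,H_j}$ is defined on all of $\hfrak^+$; this follows from the KLMN theorem rather than from any computation.
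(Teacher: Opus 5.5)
Your proposal is correct and takes essentially the paper's approach: by the spectral theorem, $U_j(t)$ commutes with the Borel function $(H_j+\alpha)^r$ of $H_j$ and preserves its domain, which immediately gives the conservation law. Your spectral-measure computation, the reverse inclusion via $U_j(-t)$, and the KLMN identification of $\Dom((H_j+\alpha)^{1/2})$ with $\hfrak^+$ spell out details that the paper leaves implicit.
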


\begin{proof}
  By the spectral theorem $U_j(t)$ is unitary and commutes with every Borel function of $H_j$, in particular with $(H_j+\alpha)^{r}$, and preserves its domain; note that $H_j+\alpha>0$ by~\eqref{eq:semibounded}, so that these powers are well defined. Hence, for $\Psi\in\Dom\left((H_j+\alpha)^{r}\right)$,
\begin{equation}
\left\|(H_j+\alpha)^{r}U_j(t)\Psi\right\| = \left\|U_j(t)(H_j+\alpha)^{r}\Psi\right\| = \left\|(H_j+\alpha)^{r}\Psi\right\| ,
\end{equation}
which is~\eqref{eq:conservation}; the two displayed instances follow by the definition~\eqref{eq:Hj_norm} of $\|\cdot\|_{+,H_j}$. 
\end{proof}

Finally, we check that the solutions in the interaction picture inherit the regularity of $\Psi_j(t)$. The operator $\e^{\ii t\alpha H_0}$ relating them commutes with $H_0$, hence with $(\alpha H_0+\alpha)^{1/2}$; consequently it restricts to a strongly continuous one-parameter unitary group on $\hfrak^+$ and extends to one on $\hfrak^-$, acting isometrically for both norms $\|\cdot\|_{\pm,\alpha}$. 
Since $\alpha H_0\in\mathcal{B}(\hfrak^+,\hfrak^-)$, for every $\Phi\in\hfrak^+$ the map $t\mapsto \e^{\ii t\alpha H_0}\Phi$ is differentiable in $\hfrak^-$, with derivative $\ii\alpha H_0\e^{\ii t\alpha H_0}\Phi$. Combining this with the product rule and with the Schr\"odinger equation for $\Psi_j(t)$, which holds in $\hfrak^-$, shows that $\hat{\Psi}_j(t)$ satisfies Hypothesis~\ref{hyp:2}(i).

We can now identify the subspace $\mathcal{D}$ postulated in Hypothesis~\ref{hyp:2}(ii). By Lemma~\ref{lem:equivalent} the domains of $(\alpha H_0+\alpha)^{3/2}$ and of $(H_j+\alpha)^{3/2}$ coincide, for $j=1,2$; moreover $\alpha$ is a positive multiplicative constant, so that this common domain agrees with that of $(H_0+1)^{3/2}$ and is independent both of $j$ and of $\alpha$. We therefore set
\begin{equation}\label{eq:Ddef}
    \mathcal{D}:=\Dom\left((H_0+1)^{3/2}\right)=\Dom\left((\alpha H_0 + \alpha)^{3/2}\right)=\Dom\left((H_j + \alpha)^{3/2}\right),\qquad j=1,2 .
\end{equation}
\emph{This is the subspace $\mathcal{D}$ of Hypothesis~\ref{hyp:2}(ii)}: Proposition~\ref{prop:uniform_energy_bound} below shows that it satisfies parts {(a)}--{(c)} of that hypothesis, and it is dense in $\hfrak$, being the domain of a power of a self-adjoint operator. Since we also have $\Dom((H_j + \alpha)^{1/2}) = \Dom((\alpha H_0 + \alpha)^{1/2})=\hfrak^+$ and the one-parameter unitary groups preserve $\hfrak^+$, we obtain 
\begin{equation}
H_jU_k(t)\Psi \in \Dom\left((H_j+\alpha)^{1/2}\right)=\hfrak^+;\quad k=1,2;\; j=1,2.
\end{equation}

\begin{proposition}\label{prop:uniform_energy_bound}
  The subspace $\mathcal{D}$ of~\eqref{eq:Ddef} and the interaction-picture Hamiltonians $\hat{H}_j(t)$, $j=1,2$, satisfy Hypothesis~\ref{hyp:2}(ii). In particular, for every $\Psi\in\mathcal{D}$,
  \begin{equation}\label{eq:MPsi}
      \|\hat{H}_k(\tau)\hat{U}_j(t,s)\Psi\|_+\leq M_\Psi:=C\,\|( H_0+1)^{3/2}\Psi\|\qquad\text{for all }j,k=1,2\text{ and all }\tau,t,s,
  \end{equation}
  with $C>0$ independent of $\Psi$, of $\tau,t,s$ and of $\alpha$.
\end{proposition}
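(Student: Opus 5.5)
The plan is to verify parts (a), (b) and (c) of Hypothesis~\ref{hyp:2}(ii) in turn. The key step is to transport everything back to the Schr\"odinger picture, where the energy is conserved. Recall that
\[
\hat U_j(t,s)=\e^{\ii t\alpha H_0}U_j(t-s)\e^{-\ii s\alpha H_0}
\qquad\text{and}\qquad
\hat H_k(\tau)=\e^{\ii\tau\alpha H_0}V_k\e^{-\ii\tau\alpha H_0},
\]
and that $\e^{\ii t\alpha H_0}$ commutes with every function of $H_0$. It is therefore an isometry for $\|\cdot\|_+$ and for $\|(H_0+1)^{3/2}\cdot\|$, and it preserves $\mathcal{D}$ and $\hfrak^+$.

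\textbf{Part (a).} The factors $\e^{\pm\ii t\alpha H_0}$ preserve $\mathcal{D}=\Dom((H_0+1)^{3/2})$. By~\eqref{eq:Ddef}, $\mathcal{D}=\Dom((H_j+\alpha)^{3/2})$, which is preserved by $U_j(t-s)$ according to Proposition~\ref{prop:energy_bounds}. Hence $\hat U_j(t,s)\mathcal{D}\subset\mathcal{D}$.

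\textbf{Part (b).} For $\Phi\in\mathcal{D}$ we have $\e^{-\ii\tau\alpha H_0}\Phi\in\mathcal{D}$. Lemma~\ref{lem:potential_estimate} then gives $V_k\e^{-\ii\tau\alpha H_0}\Phi\in\Dom((\alpha H_0+\alpha)^{1/2})=\hfrak^+$, and conjugating with $\e^{\ii\tau\alpha H_0}$ keeps the vector in $\hfrak^+$.

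\textbf{Uniform bound~\eqref{eq:MPsi}.} Write $\Xi:=\hat U_j(t,s)\Psi$. Using the isometries and Lemma~\ref{lem:scale_norms}(1), and then Lemma~\ref{lem:potential_estimate},
\[
\|\hat H_k(\tau)\Xi\|_+=\alpha^{-1/2}\|(\alpha H_0+\alpha)^{1/2}V_k\e^{-\ii\tau\alpha H_0}\Xi\|\le C\alpha^{-3/2}\|(\alpha H_0+\alpha)^{3/2}\Xi\|.
\]
The factor $\e^{\ii t\alpha H_0}$ in $\Xi$ is then removed by commutation. Next, Lemma~\ref{lem:equivalent} converts to $\|(H_j+\alpha)^{3/2}U_j(t-s)\e^{-\ii s\alpha H_0}\Psi\|$, energy conservation (Proposition~\ref{prop:energy_bounds}) removes $U_j(t-s)$, and Lemma~\ref{lem:equivalent} converts back. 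The result is
\[
\|\hat H_k(\tau)\Xi\|_+\le C'\alpha^{-3/2}\|(\alpha H_0+\alpha)^{3/2}\Psi\|=C'\|(H_0+1)^{3/2}\Psi\|,
\]
which is exactly~\eqref{eq:MPsi} with an $\alpha$-independent constant. The constants of Lemmas~\ref{lem:potential_estimate} and~\ref{lem:equivalent} do not depend on $\alpha$ once $\alpha>\alpha_0$, which is the regime assumed.

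\textbf{Continuity in (c).} This is the main obstacle, since the above gives only boundedness. I would show that the map
\[
(\tau,t,s)\mapsto(\alpha H_0+\alpha)^{1/2}\hat H_k(\tau)\hat U_j(t,s)\Psi
\]
is continuous in $\hfrak$, using two facts.
\begin{itemize}
\item[(i)] $(t,s)\mapsto\hat U_j(t,s)\Psi$ is continuous in the graph norm of $(H_0+1)^{3/2}$. The groups $\e^{\ii t\alpha H_0}$ and $U_j$ are strongly continuous on $\Dom((H_0+1)^{3/2})$, resp.\ on $\Dom((H_j+\alpha)^{3/2})$, with their graph norms, since they commute with the corresponding powers. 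These graph norms are equivalent by Lemma~\ref{lem:equivalent}. Products of uniformly bounded, strongly continuous families are jointly strongly continuous.
\item[(ii)] By Lemma~\ref{lem:potential_estimate}, the operator $\Phi\mapsto V_k\Phi$ is bounded from $\mathcal{D}$, with the $(H_0+1)^{3/2}$ graph norm, to $\hfrak^+$. Conjugation by $\e^{\ii\tau\alpha H_0}$ is strongly continuous on both spaces.
\end{itemize}
A standard $\varepsilon/3$ argument combining (i) and (ii), with the uniform bounds above, yields joint continuity into $\hfrak^+$. This completes (c) and the proof.
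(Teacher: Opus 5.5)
Your proposal is correct and follows the paper's proof essentially step by step. It uses the same factorization $\hat U_j(t,s)=\e^{\ii t\alpha H_0}U_j(t-s)\e^{-\ii s\alpha H_0}$ for (a), and the same chain Lemma~\ref{lem:potential_estimate}, Lemma~\ref{lem:equivalent}, Proposition~\ref{prop:energy_bounds}, Lemma~\ref{lem:equivalent} for the uniform bound. It also reduces continuity in the same way: strong continuity of the groups in the graph norm of $(H_0+1)^{3/2}$, plus boundedness of $V_k$ from $\mathcal{D}$ to $\hfrak^+$. The only cosmetic difference is that you check (b) directly from Lemma~\ref{lem:potential_estimate}, whereas the paper points back to the discussion preceding the proposition.
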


\begin{proof}
  Property {(a)} holds because $\hat U_j(t,s)=\e^{\ii t\alpha H_0}U_j(t-s)\e^{-\ii s\alpha H_0}$, and all three factors preserve $\mathcal{D}$: the group $\e^{\ii\sigma\alpha H_0}$ commutes with $H_0$, hence preserves $\Dom((H_0+1)^{3/2})$, while $U_j(\sigma)$ commutes with $H_j$ and therefore preserves $\Dom((H_j+\alpha)^{3/2})$, which coincides with $\mathcal{D}$ by Lemma~\ref{lem:equivalent}. Property {(b)} was established in the discussion preceding this proposition. Density of $\mathcal{D}$ in $\hfrak$ is immediate, $\mathcal{D}$ being the domain of a power of a self-adjoint operator. It remains to prove~\eqref{eq:MPsi}, which is property {(c)}. Set $\chi:=\e^{-\ii s\alpha H_0}\Psi$, so that $\|(\alpha H_0+\alpha)^{3/2}\chi\|=\|(\alpha H_0+\alpha)^{3/2}\Psi\|$, and observe that
  \begin{equation}
      \hat{H}_k(\tau)\hat{U}_j(t,s)\Psi=\e^{\ii\tau\alpha H_0}V_k\,\e^{\ii(t-\tau)\alpha H_0}U_j(t-s)\chi ,
  \end{equation}
  the two free exponentials being unitary and commuting with every power of $H_0+1$. Hence
  \begin{align}
    \|\hat{H}_k(\tau)\hat{U}_j(t,s)\Psi\|_+ 
      &= \|(H_0+1)^{1/2}V_k\,\e^{\ii(t-\tau)\alpha H_0}U_j(t-s)\chi\| \\
      &= \frac{1}{\sqrt{\alpha}}\|(\alpha H_0+\alpha)^{1/2}V_k\,\e^{\ii(t-\tau)\alpha H_0}U_j(t-s)\chi\| \\
      \text{By Lemma~\ref{lem:potential_estimate}\quad} &\leq \frac{C}{{\alpha^{3/2}}}\|(\alpha H_0+\alpha)^{3/2}\e^{\ii(t-\tau)\alpha H_0}U_j(t-s)\chi\| \\
      &= \frac{C}{{\alpha^{3/2}}}\|(\alpha H_0+\alpha)^{3/2}{U}_j(t-s)\chi\| \\
      \text{By Lemma~\ref{lem:equivalent}\quad} &\leq \frac{C'}{{\alpha^{3/2}}}\|(H_j+\alpha)^{3/2}{U}_j(t-s)\chi\|\\
      \text{By Proposition~\ref{prop:energy_bounds}\quad} &\leq \frac{C''}{{\alpha^{3/2}}}\|(H_j+\alpha)^{3/2}\chi\|\\
      \text{By Lemma~\ref{lem:equivalent}\quad} &\leq \frac{C'''}{{\alpha^{3/2}}}\|(\alpha H_0+\alpha)^{3/2}\chi\|\\
      &= C'''\|( H_0+1)^{3/2}\Psi\| .
  \end{align}
Since $C'''$ does not depend on $\tau$, $t$, $s$, $j$, $k$ or $\alpha$, this is~\eqref{eq:MPsi}.

  It remains to check the continuity required by {(c)}. By Lemma~\ref{lem:potential_estimate} the operator $V_k(H_0+1)^{-3/2}$ is bounded from $\hfrak$ to $\hfrak^+$, and $\e^{\ii\tau\alpha H_0}$ acts isometrically on $\hfrak^+$; in view of the identity displayed above it therefore suffices that
  \begin{equation}
      (\tau,t,s)\longmapsto (H_0+1)^{3/2}\,\e^{\ii(t-\tau)\alpha H_0}U_j(t-s)\e^{-\ii s\alpha H_0}\Psi\in\hfrak
  \end{equation}
  be continuous. The two exponentials commute with $(H_0+1)^{3/2}$ and are strongly continuous, while
  \begin{equation}
      (H_0+1)^{3/2}U_j(\sigma)=\left[(H_0+1)^{3/2}(H_j+\alpha)^{-3/2}\right]U_j(\sigma)\,(H_j+\alpha)^{3/2}
  \end{equation}
  on $\mathcal{D}$, the bracket being bounded by Lemma~\ref{lem:equivalent} and $U_j$ strongly continuous. Continuity follows.
\end{proof}

We now apply Theorem~\ref{thm:bound}. By Remark~\ref{rem:intuition}, $\hat{H}_1(t)-\hat{H}_2(t)$ consists entirely of counter-rotating terms oscillating at frequency $\alpha(\omega(k)+\omega_0)$; the relative action therefore accumulates only these. By Proposition~\ref{prop:expressions},
\begin{align}
    \hat{S}_{21}(t)&=-\int_0^t\left[\sigma_+\otimes\adag{f_s^+}+\sigma_-\otimes\a{f_s^+}\right]\,\mathrm{d}s\nonumber\\
    &=-\left[\sigma_+\otimes\adag{F_t^+}+\sigma_-\otimes\a{F_t^+}\right],
\end{align}
the integral converging in $\mathcal{B}(\hfrak^+,\hfrak^-)$, where
\begin{align}
    F_t^+(k)&=\int_0^t f^+_s(k)\;\mathrm{d}s\nonumber\\
    &=f(k)\int_0^t \e^{\ii s\alpha(\omega(k)+\omega_0)}\;\mathrm{d}s\nonumber\\
    &=\frac{\e^{\ii\alpha t(\omega(k)+\omega_0)}-1}{\ii\alpha(\omega(k)+\omega_0)}f(k).
\end{align}
By Proposition~\ref{prop:af}, the operator norms of $\a{F^+_t},\adag{F^+_t}$ as maps from $\fock^+$ to $\fock^-$ satisfy
\begin{align}
    \left\|a^{\sharp}(F_t^+)\right\|_{\mathcal{B}(\fock^+,\fock^-)}^2&\leq\left\|F^+_t\right\|_{-1}^2\nonumber\\
    &=\int\frac{|F^+_t(k)|^2}{\omega(k)}\,\mathrm{d}\mu\nonumber\\
    &\leq\frac{4}{\alpha^2}\int\frac{|f(k)|^2}{\omega(k)\left[\omega(k)+\omega_0\right]^2}\,\mathrm{d}\mu\nonumber\\
    &\leq\frac{4}{\alpha^2(m+\omega_0)^2}\|f\|^2_{-1},
\end{align}
where $a^{\sharp}$ stands for either $a{}$ or $a^\dagger{}$.
Together with Lemma~\ref{lem:scale_norms}, which gives $\|\Psi\|_{+}\geq\|\Psi\|_{\mathbb{C}^2\otimes\fock^+}$ and 
\begin{equation}
\|\hat{S}_{21}(t)\Psi\|_{-}\leq\|\hat{S}_{21}(t)\Psi\|_{\mathbb{C}^2\otimes\fock^-},
\end{equation}
the $\mathcal{B}(\hfrak^+,\hfrak^-)$ norm of $\hat{S}_{21}(t)$ satisfies
\begin{align}
    \|\hat S_{21}(t)\|_{\mathcal{B}(\hfrak^+,\hfrak^-)}&=\sup_{0\neq\Psi\in\hfrak^+}\frac{\|\hat{S}_{21}(t)\Psi\|_{-}}{\|\Psi\|_{+}}\nonumber\\
    &\leq\sup_{0\neq\Psi\in\hfrak^+}\frac{\|\hat{S}_{21}(t)\Psi\|_{\mathbb{C}^2\otimes\fock^-}}{\|\Psi\|_{\mathbb{C}^2\otimes\fock^+}}\nonumber\\
    &=\left\|\hat{S}_{21}(t)\right\|_{\mathcal{B}(\mathbb{C}^2\otimes\fock^+,\mathbb{C}^2\otimes\fock^-)}\nonumber\\
    &=\left\|\sigma_+\otimes\adag{F_t^+}+\sigma_-\otimes\a{F_t^+}\right\|_{\mathcal{B}(\mathbb{C}^2\otimes\fock^+,\mathbb{C}^2\otimes\fock^-)}\nonumber\\
    &\leq\frac{4}{\alpha(m+\omega_0)}\|f\|_{-1}.\label{eq:S21bound}
\end{align}
Applying Theorem~\ref{thm:bound} to the interaction-picture forms now gives, for every $\Phi$ in the subspace $\mathcal{D}$ of~\eqref{eq:Ddef},
\begin{align}\label{eq:rwa_bound}
          \frac{1}{2} \left\|\left[\hat U_2(t)-\hat U_1(t)\right]\Phi\right\|^2\leq& \;\frac{4}{\alpha(m+\omega_0)}\|f\|_{-1}\bigg(\|\hat U_2(t)\Phi\|_+\|\hat U_1(t)\Phi\|_+\nonumber\\
           &+\int_0^t\bigg(\|\hat U_2(\tau)\Phi\|_+\|\hat{H}_1(\tau)\hat U_1(\tau)\Phi\|_+\nonumber\\
          &\qquad\qquad\qquad\qquad+\|\hat U_1(\tau)\Phi\|_+\|\hat{H}_2(\tau)\hat U_2(\tau)\Phi\|_+\bigg)\;\mathrm{d}\tau\biggr),
        \end{align}
whose left-hand side coincides with the Schr\"odinger-picture error by~\eqref{eq:same_error}.

It remains to make the state-dependent factors in~\eqref{eq:rwa_bound} explicit. This is the content of the following lemma, whose point is that the resulting constant does not depend on $\alpha$.

\begin{lemma}\label{lem:propagator_plus}
    There is $\alpha_0>0$ such that, for all $\alpha>\alpha_0$, all $t,s\in\mathbb{R}$, $j=1,2$, and all $\Phi\in\hfrak^+$,
    \begin{equation}\label{eq:propagator_plus}
        \left\|\hat U_j(t,s)\Phi\right\|_+\leq 4\,\|\Phi\|_+ .
    \end{equation}
\end{lemma}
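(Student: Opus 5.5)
The plan is to reduce the estimate to energy conservation for $U_j$ together with the norm equivalence of Lemma~\ref{lem:equivalences}\textit{(1)}. Recall $\hat U_j(t,s)=\e^{\ii t\alpha H_0}U_j(t-s)\e^{-\ii s\alpha H_0}$. The two free exponentials commute with $H_0$, so they act isometrically on $\hfrak^+$ for $\|\cdot\|_+$ (and equally for $\|\cdot\|_{+,\alpha}=\sqrt\alpha\|\cdot\|_+$). Hence it suffices to show
\begin{equation*}
\|U_j(\sigma)\chi\|_+\leq 4\|\chi\|_+ \quad\text{for all } \chi\in\hfrak^+,\ \sigma\in\mathbb{R},
\end{equation*}
applied with $\chi=\e^{-\ii s\alpha H_0}\Phi$ and $\sigma=t-s$. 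Since both sides scale by $\sqrt\alpha$ under Lemma~\ref{lem:scale_norms}\textit{(1)}, this is equivalent to the same inequality for $\|\cdot\|_{+,\alpha}$.

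Choose $\alpha_0$ as in Lemma~\ref{lem:equivalences}\textit{(1)} and assume also $\alpha_0\geq 2\|f\|_{-1}$, so that Proposition~\ref{prop:sa_and_bound} applies and $H_j+\alpha>0$. Then three steps give the claim:
\begin{equation*}
\|U_j(\sigma)\chi\|_{+,\alpha}\leq 2\|U_j(\sigma)\chi\|_{+,H_j}=2\|\chi\|_{+,H_j}\leq 4\|\chi\|_{+,\alpha}.
\end{equation*}
The first inequality is Lemma~\ref{lem:equivalences}\textit{(1)}. The equality is Proposition~\ref{prop:energy_bounds} with $r=1/2$, using $\hfrak^+=\Dom((H_j+\alpha)^{1/2})$ by KLMN. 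The last inequality is Lemma~\ref{lem:equivalences}\textit{(1)} again. Dividing by $\sqrt\alpha$ yields \eqref{eq:propagator_plus}.

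The only delicate point is the constant in Lemma~\ref{lem:equivalences}\textit{(1)}. Its proof gives $(1\mp 2\|f\|_{-1}/\alpha)$ factors between the squared norms. The factor $2$ on the unsquared norms therefore holds once $1-2\|f\|_{-1}/\alpha\geq 1/4$ and $1+2\|f\|_{-1}/\alpha\leq 4$, i.e.\ for $\alpha\geq \tfrac83\|f\|_{-1}$. So $\alpha_0$ depends only on $\|f\|_{-1}$, and the constant $4$ is uniform in $t,s,j$ and $\alpha$.
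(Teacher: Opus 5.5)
Your proposal is correct and follows the paper's proof essentially verbatim. You strip off the free exponentials as $\|\cdot\|_{+,\alpha}$-isometries, sandwich the conservation law of Proposition~\ref{prop:energy_bounds} between two applications of Lemma~\ref{lem:equivalences}\textit{(1)}, and cancel the $\sqrt\alpha$; your explicit threshold $\alpha\geq\tfrac83\|f\|_{-1}$, which already implies $\alpha>2\|f\|_{-1}$ when $f\neq0$, correctly makes precise the $\alpha_0$ that the paper leaves implicit.
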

\begin{proof}
    Set $\chi:=\e^{-\ii s\alpha H_0}\Phi$. Both free exponentials in $\hat U_j(t,s)=\e^{\ii t\alpha H_0}U_j(t-s)\e^{-\ii s\alpha H_0}$ are $\|\cdot\|_{+,\alpha}$-isometries, so that $\|\hat U_j(t,s)\Phi\|_{+,\alpha}=\|U_j(t-s)\chi\|_{+,\alpha}$ and $\|\chi\|_{+,\alpha}=\|\Phi\|_{+,\alpha}$. Two applications of Lemma~\ref{lem:equivalences}\textit{(1)}, one at each end, together with the conservation law of Proposition~\ref{prop:energy_bounds}, give
    \begin{equation}
        \|U_j(t-s)\chi\|_{+,\alpha}\leq 2\|U_j(t-s)\chi\|_{+,H_j}=2\|\chi\|_{+,H_j}\leq 4\|\chi\|_{+,\alpha}=4\|\Phi\|_{+,\alpha}.
    \end{equation}
    Dividing by $\sqrt\alpha$ and using Lemma~\ref{lem:scale_norms}\textit{(1)} yields~\eqref{eq:propagator_plus}; note that the factor $\sqrt\alpha$ cancels between the two sides.
\end{proof}

Combining~\eqref{eq:rwa_bound} with Lemma~\ref{lem:propagator_plus} and Proposition~\ref{prop:uniform_energy_bound} provides the final result.

\begin{theorem}\label{thm:explicit}
    Let $U(t)$ and $U_{\mathrm{RWA}}(t)$ be, respectively, the unitary groups generated by the Hamiltonian of Eq.~\eqref{eq:h} and by its rotating-wave approximation, Eq.~\eqref{eq:hrwa}, and let Hypothesis~\ref{hyp:f} hold. There exist $\alpha_0>0$ and $C>0$, the latter independent of $\alpha$, of $t$ and of $\Phi$, such that for all $\alpha>\alpha_0$, all $t\geq 0$ and all $\Phi\in\Dom((H_0 + 1)^{3/2})$,
    \begin{equation}\label{eq:explicit_bound}
        \left\|\left[U_{\mathrm{RWA}}(t)-U(t)\right]\Phi\right\|^2\leq \frac{C\,\|f\|_{-1}}{\alpha(m+\omega_0)}\left(\|\Phi\|_+^2 + t\, \|\Phi\|_+\left\|(H_0+1)\Phi\right\|_{+}\right).
    \end{equation}
\end{theorem}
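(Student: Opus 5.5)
The plan is to start from the state-dependent bound \eqref{eq:rwa_bound} and estimate each state-dependent factor on its right-hand side by an $\alpha$-independent multiple of either $\|\Phi\|_+$ or $\|(H_0+1)\Phi\|_+$. We set $U_1=U$ and $U_2=U_{\rm RWA}$. By \eqref{eq:same_error}, the left-hand side of \eqref{eq:rwa_bound} equals $\tfrac12\|[U_{\rm RWA}(t)-U(t)]\Phi\|^2$. So it suffices to bound the right-hand side of \eqref{eq:rwa_bound} and multiply by $2$.

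First I will make sure that \eqref{eq:rwa_bound} really follows from Theorem~\ref{thm:bound}. This requires Hypotheses~\ref{hyp:1} and~\ref{hyp:2} for the interaction-picture forms $\hat h_{j,t}(\Psi,\Phi)=\langle \e^{-\ii t\alpha H_0}\Psi,V_j\e^{-\ii t\alpha H_0}\Phi\rangle$ on $I=[0,t]$.
\begin{itemize}
\item Hermiticity of these forms is clear.
\item The uniform bound \eqref{eq:form} holds with $K_j=2\|f\|_{-1}$. This follows from Lemma~\ref{lem:bound_potential} together with the fact that $\e^{\ii t\alpha H_0}$ is an isometry for $\|\cdot\|_\pm$.
\item Continuity in $t$ follows from the strong continuity of the free group on $\hfrak^+$ and on $\hfrak^-$.
\item Hypothesis~\ref{hyp:2}(i) was checked in the discussion preceding \eqref{eq:Ddef}.
\item Hypothesis~\ref{hyp:2}(ii) is Proposition~\ref{prop:uniform_energy_bound}, with $\mathcal D=\Dom((H_0+1)^{3/2})$.
\end{itemize}
The norm of the relative action is then controlled by \eqref{eq:S21bound}, uniformly in the time argument, so it can be pulled out of the integral as in \eqref{eq:rwa_bound}.

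Next I will bound the two kinds of factors.
\begin{itemize}
\item For the propagator factors, Lemma~\ref{lem:propagator_plus} with $s=0$ gives $\|\hat U_j(\tau)\Phi\|_+\le 4\|\Phi\|_+$ for all $\tau$ and all $\alpha>\alpha_0$. Hence the boundary term is at most $16\|\Phi\|_+^2$.
\item For the factors containing generators, Proposition~\ref{prop:uniform_energy_bound} gives $\|\hat H_k(\tau)\hat U_k(\tau)\Phi\|_+\le C\|(H_0+1)^{3/2}\Phi\|$. Since $H_0+1$ commutes with its own powers, $\|(H_0+1)^{3/2}\Phi\|=\|(H_0+1)^{1/2}(H_0+1)\Phi\|=\|(H_0+1)\Phi\|_+$.
\end{itemize}
Consequently the integrand is at most $2\cdot 4C\|\Phi\|_+\|(H_0+1)\Phi\|_+$, and the integral over $[0,t]$ is at most $8Ct\,\|\Phi\|_+\|(H_0+1)\Phi\|_+$. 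Inserting these estimates into \eqref{eq:rwa_bound}, multiplying by $2$, and absorbing the numerical factors into a single constant gives \eqref{eq:explicit_bound}. For $\alpha_0$ I take the maximum of the thresholds required in Lemmas~\ref{lem:equivalences}, \ref{lem:equivalent} and~\ref{lem:propagator_plus} and in Propositions~\ref{prop:sa_and_bound} and~\ref{prop:uniform_energy_bound}.

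The main point requiring care is that all constants are independent of $\alpha$ and $t$. For the factor $\|S_{21}\|_{+,-}$ this is already explicit in \eqref{eq:S21bound}. For the constant $M_\Phi$, the relevant statement is Proposition~\ref{prop:uniform_energy_bound}, whose proof shows that the factors $\alpha^{-3/2}$ and $\alpha^{3/2}$ cancel. I would re-examine that chain of estimates to confirm that the constants coming from Lemmas~\ref{lem:potential_estimate} and~\ref{lem:equivalent} are uniform for $\alpha>\alpha_0$. The other thing to confirm is that the time-independence of these constants makes the growth in $t$ exactly linear.
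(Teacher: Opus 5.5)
Your proposal is correct and follows the paper's proof essentially verbatim. Starting from \eqref{eq:rwa_bound}, you bound the boundary term by $16\|\Phi\|_+^2$ via Lemma~\ref{lem:propagator_plus}, bound each product in the integrand by $4C_0\|\Phi\|_+\|(H_0+1)\Phi\|_+$ via Lemma~\ref{lem:propagator_plus} and Proposition~\ref{prop:uniform_energy_bound}, and multiply by $2$, which reproduces the paper's constant $C=8\max\{16,8C_0\}$; your preliminary check of Hypotheses~\ref{hyp:1}--\ref{hyp:2} for the interaction-picture forms only makes explicit what the paper verifies in the text preceding \eqref{eq:rwa_bound}.
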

\begin{proof}
    By~\eqref{eq:same_error} the left-hand side of~\eqref{eq:explicit_bound} equals $\|[\hat U_2(t)-\hat U_1(t)]\Phi\|^2$, so it suffices to estimate the right-hand side of~\eqref{eq:rwa_bound}. By Lemma~\ref{lem:propagator_plus} the boundary term is bounded by $16\|\Phi\|_+^2$. For the integral term, Lemma~\ref{lem:propagator_plus} and Proposition~\ref{prop:uniform_energy_bound} bound each of the two products in the integrand by $4C_0\|\Phi\|_+\|(H_0+1)^{3/2}\Phi\|$, with $C_0$ the constant of~\eqref{eq:MPsi}, uniformly in $\tau\in[0,t]$; integrating gives $8C_0\,t\,\|\Phi\|_+\|(H_0+1)^{3/2}\Phi\|$. Since $\|(H_0+1)^{3/2}\Phi\|=\|(H_0+1)\Phi\|_+$, multiplying~\eqref{eq:rwa_bound} by $2$ yields~\eqref{eq:explicit_bound} with $C=8\max\{16,8C_0\}$. None of the constants involved depends on $\alpha$, $t$ or $\Phi$.
\end{proof}

In particular, for any fixed $t$ and $\Phi$, the right-hand side vanishes as $\alpha\to\infty$, confirming that the RWA becomes exact in the limit of large frequency.

Theorem~\ref{thm:explicit} estimates the error in the norm of $\hfrak$, at the price of one half power of $\alpha$: the left-hand side of~\eqref{eq:explicit_bound} is quadratic in the error while the right-hand side is linear in the relative action. Corollary~\ref{cor:equality}, by contrast, is linear in the error on both sides, and the same ingredients then yield the full power $\alpha^{-1}$. The price is paid elsewhere, namely in the topology in which the error is measured. Indeed, the last term of Corollary~\ref{cor:equality} carries the test vector in the position $\hat H_1(\tau)\hat U_1(\tau,t)\Psi$, and Proposition~\ref{prop:uniform_energy_bound} controls that quantity through $\|(H_0+1)^{3/2}\Psi\|$ rather than through $\|\Psi\|_+$; the test vectors must therefore be normalised in $\mathcal{D}$ rather than in $\hfrak^+$. Writing
\begin{equation}\label{eq:Dnorm}
    \|\Psi\|_{\mathcal{D}}:=\left\|(H_0+1)^{3/2}\Psi\right\|,\qquad\Psi\in\mathcal{D},
\end{equation}
and using that $(H_0+1)^{3/2}$ maps $\mathcal{D}$ onto $\hfrak$, the norm of a vector $\chi\in\hfrak$ as a functional on $\left(\mathcal{D},\|\cdot\|_{\mathcal{D}}\right)$ is
\begin{equation}\label{eq:Ddual}
    \sup\left\{\left|\braket{\Psi,\chi}\right|\;:\;\Psi\in\mathcal{D},\;\|\Psi\|_{\mathcal{D}}\leq1\right\}=\left\|(H_0+1)^{-3/2}\chi\right\| ,
\end{equation}
which, since $H_0\geq0$, is weaker than $\|\chi\|_-=\|(H_0+1)^{-1/2}\chi\|$. It is in this topology that the rate $\alpha^{-1}$ is available.

\begin{corollary}\label{cor:minus_bound}
    Under the hypotheses of Theorem~\ref{thm:explicit} there exists $C'>0$, independent of $\alpha$, of $t$ and of $\Phi$, such that
    \begin{equation}\label{eq:minus_bound}
        \left\|(H_0+1)^{-3/2}\left[U_{\mathrm{RWA}}(t)-U(t)\right]\Phi\right\|\leq \frac{C'\,\|f\|_{-1}}{\alpha(m+\omega_0)}\left(\|\Phi\|_+ + t\,\left\|(H_0+1)\Phi\right\|_{+}\right).
    \end{equation}
\end{corollary}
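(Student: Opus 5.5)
The plan is to use Corollary~\ref{cor:equality} in the interaction picture, where the error appears linearly, and then take a supremum over test vectors normalised in $\mathcal{D}$. First, for $\chi\in\hfrak$ the duality formula~\eqref{eq:Ddual} gives $\|(H_0+1)^{-3/2}\chi\|=\sup|\braket{\Psi,\chi}|$ over $\Psi\in\mathcal{D}$ with $\|\Psi\|_{\mathcal{D}}\leq1$. Since $\mathcal{D}$ is dense, restricting to $\Psi\in\mathcal{D}$ loses nothing.

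Next, I must pass between the two pictures. Write $[U_{\rm RWA}(t)-U(t)]\Phi=\e^{-\ii t\alpha H_0}[\hat U_2(t)-\hat U_1(t)]\Phi$. The factor $\e^{-\ii t\alpha H_0}$ commutes with $(H_0+1)^{-3/2}$ and is unitary, so the left-hand side of~\eqref{eq:minus_bound} equals $\|(H_0+1)^{-3/2}[\hat U_2(t)-\hat U_1(t)]\Phi\|$. Moreover $\e^{\ii t\alpha H_0}$ preserves $\|\cdot\|_{\mathcal{D}}$, so the supremum can be taken in the interaction picture.

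For fixed $\Psi\in\mathcal{D}$, apply Corollary~\ref{cor:equality} with $s=0$ to the pair $\Psi,\Phi$ and estimate each term with Lemma~\ref{lem:action}\textit{(iv)}, using $\|\hat S_{21}(\tau)\|_{+,-}\leq 4\|f\|_{-1}/(\alpha(m+\omega_0))$ uniformly in $\tau$ by~\eqref{eq:S21bound}. The three terms are handled as follows.
\begin{itemize}
\item[(1)] The boundary term is bounded by $\|\Psi\|_+\|\hat U_2(t)\Phi\|_+\leq 4\|\Psi\|_+\|\Phi\|_+$, by Lemma~\ref{lem:propagator_plus}. Then $\|\Psi\|_+=\|(H_0+1)^{1/2}\Psi\|\leq\|\Psi\|_{\mathcal{D}}$.
\item[(2)] The term containing $\hat H_2(\tau)\hat U_2(\tau)\Phi$ is bounded by $\|\hat U_1(\tau,t)\Psi\|_+\,M_\Phi\leq 4\|\Psi\|_{\mathcal{D}}\,C_0\|(H_0+1)\Phi\|_+$. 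This uses Lemma~\ref{lem:propagator_plus}, Proposition~\ref{prop:uniform_energy_bound}, and the identity $\|(H_0+1)^{3/2}\Phi\|=\|(H_0+1)\Phi\|_+$.
\item[(3)] The term containing $\hat H_1(\tau)\hat U_1(\tau,t)\Psi$ is bounded by $C_0\|\Psi\|_{\mathcal{D}}\cdot 4\|\Phi\|_+$, using Proposition~\ref{prop:uniform_energy_bound} applied to $\Psi$ and Lemma~\ref{lem:propagator_plus} for $\Phi$. Since $\|\Phi\|_+\leq\|(H_0+1)\Phi\|_+$, this is again of the form $t\|(H_0+1)\Phi\|_+$ after integration.
\end{itemize}
Integrating over $[0,t]$ and taking the supremum over $\|\Psi\|_{\mathcal{D}}\leq1$ gives~\eqref{eq:minus_bound} with $C'=4\max\{4,8C_0\}$.

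The main obstacle is term (3). Here the propagator has first argument $\tau$ and second argument $t$, and $\tau\mapsto \hat U_1(\tau,t)\Psi$ must satisfy the hypotheses used in Proposition~\ref{prop:equality}. The answer is that Proposition~\ref{prop:uniform_energy_bound} and Lemma~\ref{lem:propagator_plus} hold for all $t,s$, with constants independent of the time arguments and of $\alpha$, and the corollary was derived from Proposition~\ref{prop:equality} via Hypothesis~\ref{hyp:2}(ii)(a). I also need $\alpha>\alpha_0$ so that Lemmas~\ref{lem:equivalences} and~\ref{lem:propagator_plus} apply. Finally, I will check that the test-vector bounds never require more than $\|\Psi\|_{\mathcal{D}}$, which is exactly what the dual norm~\eqref{eq:Ddual} permits.
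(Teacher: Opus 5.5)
Your proposal is correct and follows the paper's proof essentially step for step. You apply Corollary~\ref{cor:equality} to the interaction-picture forms with $s=0$ and bound the three terms through Lemma~\ref{lem:action}\textit{(iv)}, \eqref{eq:S21bound}, Lemma~\ref{lem:propagator_plus} and Proposition~\ref{prop:uniform_energy_bound} (the latter applied to the test vector in the last term), then conclude with the duality formula~\eqref{eq:Ddual} and the unitarity of $\e^{\ii t\alpha H_0}$; even your constant $4\max\{4,8C_0\}$ coincides with the paper's $16\max\{1,2C_0\}$.
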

\begin{proof}
    Let $\Psi\in\mathcal{D}$ with $\|\Psi\|_{\mathcal{D}}\leq1$; since $H_0\geq0$ one also has $\|\Psi\|_+=\|(H_0+1)^{1/2}\Psi\|\leq\|\Psi\|_{\mathcal{D}}\leq1$. Apply Corollary~\ref{cor:equality} to the interaction-picture forms with $s=0$, and estimate the three resulting terms by Lemma~\ref{lem:action}\textit{(iv)} together with the bound~\eqref{eq:S21bound} on $\|\hat S_{21}\|_{+,-}$, which we abbreviate as $S:=4\|f\|_{-1}/\left[\alpha(m+\omega_0)\right]$. By Lemma~\ref{lem:propagator_plus} the boundary term is bounded by $S\|\Psi\|_+\|\hat U_2(t)\Phi\|_+\leq4S\|\Phi\|_+$. For the two terms under the integral, Lemma~\ref{lem:propagator_plus} and Proposition~\ref{prop:uniform_energy_bound} give, uniformly in $\tau\in[0,t]$,
    \begin{align}
        \left|s_{21,\tau,0}\left(\hat U_1(\tau,t)\Psi,\hat H_2(\tau)\hat U_2(\tau)\Phi\right)\right|&\leq S\left\|\hat U_1(\tau,t)\Psi\right\|_+\left\|\hat H_2(\tau)\hat U_2(\tau)\Phi\right\|_+\leq 4C_0\,S\left\|(H_0+1)^{3/2}\Phi\right\| ,\\
        \left|s_{21,\tau,0}\left(\hat H_1(\tau)\hat U_1(\tau,t)\Psi,\hat U_2(\tau)\Phi\right)\right|&\leq S\left\|\hat H_1(\tau)\hat U_1(\tau,t)\Psi\right\|_+\left\|\hat U_2(\tau)\Phi\right\|_+\leq 4C_0\,S\left\|\Phi\right\|_+ ,
    \end{align}
    with $C_0$ the constant of~\eqref{eq:MPsi}; in the second estimate Proposition~\ref{prop:uniform_energy_bound} is applied to $\Psi$, whose $\mathcal{D}$-norm is at most $1$. This is the step that forces the normalisation of $\Psi$ in $\mathcal{D}$: the quantity $\|\hat H_1(\tau)\hat U_1(\tau,t)\Psi\|_+$ admits no bound in terms of $\|\Psi\|_+$ alone. Integrating over $[0,t]$ and using $\|\Phi\|_+\leq\|(H_0+1)^{3/2}\Phi\|=\|(H_0+1)\Phi\|_+$, we obtain
    \begin{equation}
        \left|\braket{\Psi,\left[\hat U_2(t)-\hat U_1(t)\right]\Phi}\right|\leq\frac{C'\|f\|_{-1}}{\alpha(m+\omega_0)}\left(\|\Phi\|_++t\left\|(H_0+1)\Phi\right\|_+\right)
    \end{equation}
    with $C'=16\max\{1,2C_0\}$. Taking the supremum over all $\Psi$ with $\|\Psi\|_D\leq 1$ gives, by~\eqref{eq:Ddual}, the quantity $\|(H_0+1)^{-3/2}[\hat U_2(t)-\hat U_1(t)]\Phi\|$. Since $\hat U_j(t)=\e^{\ii t\alpha H_0}U_j(t)$ by~\eqref{eq:hatU} and $\e^{\ii t\alpha H_0}$ is unitary and commutes with $(H_0+1)^{-3/2}$, this coincides with the corresponding Schr\"odinger-picture quantity, which is the left-hand side of~\eqref{eq:minus_bound}.
\end{proof}

\section{Discussion}\label{sec:discussion}

The bounds in Theorem~\ref{thm:explicit} and Corollary~\ref{cor:minus_bound} constitute a rigorous, quantitative and fully explicit justification of the rotating-wave approximation for spin--boson models with a structured boson field. We close by discussing their scope and related questions that remain open.

For a fixed initial state $\Phi\in\Dom((H_0+1)^{3/2})$ and a fixed time $t$, the right-hand side of~\eqref{eq:explicit_bound} decays as $O(\alpha^{-1})$, so that the RWA error itself vanishes as $O(\alpha^{-1/2})$ in the limit of large frequency. Every constant entering the estimate depends only on the parameters $\omega_0$, $m$, $\|f\|_{-1}$ and $\|f\|_{+1}$ of the model, and none of them on $\alpha$; the subspace $\Dom((H_0+1)^{3/2})$ on which the estimate is stated is likewise $\alpha$-independent and dense in $\hfrak$. This uniformity is not automatic, and is the reason for the particular choice of scales made in Section~\ref{sec:rwa}: the norms are built from the $\alpha$-independent operator $H_0$ rather than from the free Hamiltonian $\alpha H_0+\Delta K\otimes\mathrm{I}$, the parameter $\alpha$ being reinstated through the family $\|\cdot\|_{\pm,\alpha}$ of Eq.~\eqref{eq:+_norm} only where it is needed. Had the scales been defined through the free Hamiltonian, the norms themselves would have grown like $\sqrt\alpha$ and would have obscured the very decay one is trying to exhibit.

In contrast with the operator-norm bounds of \cite{burgarth2022one,dey2025floquet}, and like those of \cite{burgarth2024taming,richter2026dicke,burgarth2026floquet}, the estimate is state-dependent: the initial datum is required to lie one further level up in the scale, and the bound is proportional to $\|(H_0+1)\Phi\|_+$. This is unavoidable in the present framework, where the generators are only assumed to map $\hfrak^+$ into $\hfrak^-$; it is also, arguably, the more informative statement in the applications, since it quantifies how the quality of the approximation degrades for states of high energy. As for the dependence on time, the bound grows linearly in $t$ and carries no exponential factor. This is a consequence of the conservation law of Proposition~\ref{prop:energy_bounds}, which controls the propagated $\hfrak^+$-norm uniformly in time and is available because the Hamiltonians $H$ and $H_{\rm RWA}$ are time-independent, the time dependence being introduced only by the passage to the interaction picture.

The exponent $-1/2$ deserves a comment, since the bounds available in the monochromatic case~\cite{burgarth2024taming,richter2026dicke} are of order $\alpha^{-1}$. The difference is in the shape of the underlying identity: the left-hand side of the identity underlying Theorem~\ref{thm:bound} is \emph{quadratic} in the error, while the right-hand side is \emph{linear} in the relative action $S_{21}$, and the passage from one to the other necessarily halves the exponent. Working at the operator level on a common invariant domain, as in \cite{burgarth2022one,burgarth2024taming}, one disposes instead of an identity for the difference of the propagators itself, of the form~\eqref{eq:bound2}, linear on both sides, and this price is not paid. We note that a rate $O(\alpha^{-1})$ \emph{is} available in the present framework if one is content with measuring the error in the dual of $\mathcal{D}$, that is, after application of $(H_0+1)^{-3/2}$: this is the content of Corollary~\ref{cor:minus_bound}, whose starting point, Corollary~\ref{cor:equality}, is linear in the error on both sides. The trade-off is intrinsic to that identity, which pairs the error against a test vector to which the generators are then applied; the test vector must consequently be taken one full level higher in the scale than the $\hfrak^+$ that would produce the $\hfrak^-$ norm, and whether the intermediate topologies can be reached is open as well. Whether the exponent can be improved to $1$ in the Hilbert space norm, for instance by iterating the identity of Proposition~\ref{prop:equality} or by exploiting a cancellation between its boundary and integral terms, is an interesting open question.

A second limitation concerns the regularity of the form factor. Theorem~\ref{thm:explicit} requires Hypothesis~\ref{hyp:f}, that is, $\|\omega^{1/2}f\|<\infty$, which by Remark~\ref{rem:fplus} is, in general, strictly stronger than the condition $\|f\|_{-1}<\infty$ under which the model itself is well defined. The restriction is not an artefact of our estimates. Its source is the asymmetry between annihilation and creation operators: the commutator identity $[\dOmega,\adag{f}]=\adag{\omega f}$ weighs the form factor by an additional power of $\omega$, and no amount of regularity of the state can compensate for this. Consequently, the genuinely singular regime studied in \cite{dammoller,lonigro2022generalized,lonigro2023selfadjoint,Lill2025}, in which $f\notin L^2_\mu(X)$, lies outside the scope of the present analysis. It remains an open problem whether different estimates can extend the present analysis to less regular form factors. What the form-theoretic approach does buy is the treatment of a \emph{structured}, non-monochromatic field. In the monochromatic setting ($\mu=\delta_{k_0}$) the present bound reduces to a form-level counterpart of the estimates in \cite{burgarth2024taming,richter2026dicke}, which were obtained by operator-level methods on a common invariant domain.

Turning to the abstract framework, Hypothesis~\ref{hyp:1} and Hypothesis~\ref{hyp:2}(i) are mild and hold in wide generality, as Examples~\ref{ex:bilinear} and~\ref{ex:interaction} illustrate. Hypothesis~\ref{hyp:2}(ii) is the substantive one: as shown in Remark~\ref{rem:ii_needed}, it does not follow from the boundedness of the forms, not even for bounded perturbations, and verifying it requires quantitative control of the perturbation one level up in the scale. In the case at hand this is supplied by the relative operator boundedness of Proposition~\ref{prop:rel_bound} together with the commutator estimate of Lemma~\ref{lem:commutator_estimate}. Isolating these two ingredients as an abstract criterion, of which Section~\ref{sec:rwa} would then be a verification, would be a natural way of extending the reach of the framework.

Finally, the integration-by-parts technique used here produces a first-order effective Hamiltonian, namely $H_{\rm RWA}$. In the monochromatic case an iterated version of the same technique generates the full Floquet--Magnus hierarchy of effective Hamiltonians, with explicit error bounds at each order~\cite{dey2025floquet,burgarth2026floquet}. Whether the form-level identity of Proposition~\ref{prop:equality} can be iterated so as to produce higher-order effective Hamiltonians for structured fields is a natural question, and one that would require a form-theoretic counterpart of the invariant domains used at the operator level. A further direction concerns time-dependent generators: the uniform control of the propagated norm used here rests on energy conservation, and extending Theorem~\ref{thm:explicit} to genuinely time-dependent Hamiltonians would call for a Gr\"onwall-type substitute, with the attendant exponential growth in time.

\subsection*{Acknowledgments}

DL is grateful to Daniel Burgarth, Robin Hillier, and Leonhard Richter for stimulating discussions. JMPP acknowledges support from Agencia Estatal de Investigaci\'on, Project PID2024-160539NB-I00 and Comunidad de Madrid, Consejer\'ia de Educaci\'on, Ciencia y Universidades, Project QUITEMAD TEC-2024/COM-84.

\AtNextBibliography{\small}	\DeclareFieldFormat{pages}{#1}\sloppy 
	\printbibliography

\end{document}